\tikzstyle{tikzfig}=[baseline=-0.25em,scale=0.5]
\tikzstyle{none}=[inner sep=0mm]
\tikzstyle{every loop}=[]
\tikzstyle{nodo}=[fill=black, draw=black, shape=circle, inner sep=1.5pt]
\tikzstyle{class 1}=[fill=red, draw=red, shape=circle, inner sep=1.5pt]
\tikzstyle{class 2}=[fill=blue, draw=blue, shape=circle, inner sep=1.5pt]
\tikzstyle{class 3}=[fill=green, draw=green, shape=circle, inner sep=1.5pt]
\tikzstyle{class 4}=[fill=magenta, draw=magenta, shape=circle, inner sep=1.5pt]
\tikzstyle{class 5}=[fill={rgb,255: red,255; green,128; blue,0}, draw={rgb,255: red,255; green,128; blue,0}, shape=circle, inner sep=1.5pt]
\tikzstyle{class 6}=[fill=yellow, draw=yellow, shape=circle, inner sep=1.5pt]
\tikzstyle{arista}=[-, fill=none, draw=black]
\theoremstyle{plain}
\newtheorem{theorem}{Theorem}
\newtheorem{lemma}[theorem]{Lemma}
\newtheorem{corollary}[theorem]{Corollary}
\newtheorem{claim}[theorem]{Claim}
\theoremstyle{definition}
\newtheorem{definition}[theorem]{Definition}
\theoremstyle{remark}
\newcommand{\contract}[2]{#1|_{#2}}
\renewcommand{\O}{\mathcal{O}}
\newcommand{\N}{\mathbb{N}}
\newcommand{\declarefunctionOld}[2][]{%
    \ifthenelse{\isempty{#1}}{%
        \expandafter\newcommand\csname #2\endcsname{{\small\textsf{#2}}}%
    }{%
        \expandafter\newcommand\csname #1\endcsname{{\small\textsf{#2}}}%
    }%
}
\newcommand{\declarefunction}[2][]{%
    \ifthenelse{\isempty{#1}}{%
        \expandafter\newcommand\csname #2\endcsname{{\textsf{\textup{#2}}}\xspace}%
    }{%
        \expandafter\newcommand\csname #1\endcsname{{\textsf{\textup{#2}}}\xspace}%
    }%
}
\newcommand{\G}{\mathcal{G}}
\newcommand{\Gmc}{\text{$\G$-$\mc$}}
\newcommand{\intervalmc}{\text{$\interval$-$\mc$}}
\newcommand{\clustermc}{\text{$\cluster$-$\mc$}}
\newcommand{\cm}{\clustermc}
\newcommand{\FPT}{{\smaller[0.83]\textsf{\textup{FPT}}}\xspace}
\newcommand{\XP}{{\smaller[0.83]\textsf{\textup{XP}}}\xspace}
\newcommand{\NP}{{\smaller[0.83]\textsf{\textup{NP}}}}
\newcommand{\coNP}{{\smaller[0.83]\textsf{\textup{coNP}}}}
\newcommand{\poly}{{\smaller[0.83]\textsf{\textup{poly}}}}
\newcommand{\Hip}{\NP~$\not\subseteq$~\coNP/\poly}
\newcommand{\notHip}{\NP~$\subseteq$~\coNP/\poly}
\newcommand{\defproblem}[3]{
 \vspace{3mm}
\noindent\fbox{
 \begin{minipage}{0.96\textwidth}
 \begin{tabular*}{\textwidth}{@{\extracolsep{\fill}}lr} \textsc{#1} & \\ \end{tabular*}
 {\bf{Input:}} #2 \\
 {\bf{Question:}} #3
 \end{minipage}
 }
 \vspace{3mm}
}
\DeclarePairedDelimiter\abs{\lvert}{\rvert}
\pgfplotsset{compat=1.15}
\tikzstyle{line}=[draw]
\tikzset{ n1/.style={circle,scale=1.0},
n2/.style={circle,fill=black,scale=0.5},
n3/.style={circle,draw,fill=black,draw=black,text=white,scale=0.9},
e1/.style={line width=0.1mm}, e3/.style={draw=black,line
width=0.7mm}, inter/.style={line width=0.85mm},
e2/.style={draw=black,line width=0.5mm}, c1/.style={line
width=0.3mm}, c2/.style={line width=0.2mm} }
\newcommand{\vertex}[4][black]{
    \draw[#1, fill=#1, inner sep=0pt] (#2, #3) circle (0.12) node(#4){};
}
\newcommand{\minivertex}[4][black]{
    \draw[#1, fill=#1, inner sep=0pt] (#2, #3) circle (0.1) node(#4){};
}
\newcommand{\vertexLabel}[3][above]{
    \path (#2) node[#1]{#3};
}
\title{Computing parameters that generalize interval graphs using restricted modular partitions}
\author{
  Flavia Bonomo-Braberman\thanks{Partially supported by ANPCyT (PICT-2021-I-A-00755). ORCID: 0000-0002-9872-7528.} \\
  \small Departamento de Computaci\'{o}n, FCEyN, Universidad de Buenos Aires, Argentina \\
  \small Instituto de Investigaci\'{o}n en Ciencias de la Computaci\'{o}n (ICC), CONICET-UBA, Argentina
  \and
  Eric Brandwein\thanks{Partially supported by IRP SINFIN (France-Argentina), CONICET (PIP 11220200100084CO), and UBACyT (20020220300079BA). ORCID: 0009-0003-2559-7173.} \\
  \small Departamento de Computaci\'{o}n, FCEyN, Universidad de Buenos Aires, Argentina \\
  \small Instituto de Investigaci\'{o}n en Ciencias de la Computaci\'{o}n (ICC), CONICET-UBA, Argentina
  \and
  Ignasi Sau\thanks{Supported by French project ELIT (ANR-20-CE48-0008-01). ORCID: 0000-0002-8981-9287.} \\
  \small LIRMM, Universit\'{e} de Montpellier, CNRS, Montpellier, France
}
\date{}
\begin{document}

\maketitle

\begin{abstract}
Recently, Lafond and Luo [MFCS 2023] defined the $\mathcal{G}$-modular cardinality of a graph $G$ as the minimum size of a partition of $V(G)$ into modules that belong to a graph class $\mathcal{G}$. We analyze the complexity of calculating parameters that generalize interval graphs when parameterized by the $\mathcal{G}$-modular cardinality, where $\mathcal{G}$ corresponds either to the class of interval graphs or to the union of complete graphs. Namely, we analyze the complexity of computing the thinness and the simultaneous interval number of a graph.

We present a linear kernel for the \textsc{Thinness} problem parameterized by the $\textsf{interval}$-modular cardinality and an \textsf{FPT} algorithm for \textsc{Simultaneous Interval Number} when parameterized by the $\textsf{cluster}$-modular cardinality plus the solution size. The $\textsf{interval}$-modular cardinality of a graph is not greater than the $\textsf{cluster}$-modular cardinality, which in turn generalizes the neighborhood diversity and the twin-cover number. Thus, our results imply a linear kernel for \textsc{Thinness} when parameterized by the neighborhood diversity of the input graph, \textsf{FPT} algorithms for \textsc{Thinness} when parameterized by the twin-cover number and vertex cover number, and \textsf{FPT} algorithms for \textsc{Simultaneous Interval Number} when parameterized by the neighborhood diversity plus the solution size, twin-cover number, and vertex cover number. To the best of our knowledge, prior to our work no parameterized algorithms (\textsf{FPT} or \textsf{XP}) for computing the thinness or the simultaneous interval number were known.

On the negative side, we observe that \textsc{Thinness} and \textsc{Simultaneous Interval Number} parameterized by tree\-width, pathwidth, bandwidth, (linear) mim-width, clique-width, modular-width, or even the thinness or simultaneous interval number themselves, admit no polynomial kernels assuming $\mathsf{NP} \not\subseteq \mathsf{coNP}/\mathsf{poly}$.
\end{abstract}

\noindent\textbf{Keywords:} thinness, simultaneous interval number, modular cardinality, interval graphs, parameterized complexity.

\section{Introduction}
\label{sec:intro} 
The class of interval graphs~\citep{interval-graphs} has many useful properties that makes it amenable to fast algorithms for many common problems~\citep{Ola-interval,R-PR-interval}. Thus, many graph parameters were defined aiming at quantifying the distance of a graph to an interval graph in some way, including the \emph{interval number}~\citep{interval-number}, the \emph{track number}~\citep{multitrack-interval-graphs}, the \emph{thinness}~\citep{M-O-R-C-thinness}, and, more recently, the \emph{simultaneous interval number}~\citep{Milanic-sim}, among others. If one of these parameters is small for a class of graphs $\G$, some properties of interval graphs are possibly maintained on $\G$, and therefore some problems that are easy on interval graphs might be easy on $\G$ as well. In this work we focus on two of these parameters: the thinness and the simultaneous interval number. In particular, we will be interested in computing them efficiently. We defer their formal definitions to \cref{sec:preliminaries}. We now proceed to briefly motivate the study of these graph parameters.

\paragraph{Thinness.} The thinness of a graph was introduced by \citet{M-O-R-C-thinness} to solve the \textsc{Maximum Independent Set} problem in polynomial time when a suitable representation of the input graph is given. The concept of thinness can be seen as a generalization of the characterization of interval graphs as those that admit an ordering $<$ of the vertices such that for every triple of vertices $u < v < w$, if $(u,w)$ is an edge, then $(v,w)$ is also an edge~\citep{forbidden-ordered-subgraphs}. The thinness of a graph is the minimum size of a partition of the vertices such that every triple of vertices $u < v < w$ either meets the condition mentioned above, or $u$ and $v$ belong to different parts of the partition. Observe that interval graphs are exactly the graphs of thinness one.

Given a suitable representation for a graph with thinness bounded by $t$, many other problems including \textsc{Clique} were shown to be solvable in \XP-time parameterized by $t$, while problems like \textsc{Capacitated $k$-Coloring} and \textsc{List $k$-Coloring} are \XP parameterized by $t$ plus the number of color classes~\citep{M-O-R-C-thinness,B-M-O-thin-tcs,C-T-V-Z--H-top-thin,tesis-diego}. It is worth noting that \textsc{Chromatic Number} is \NP-complete for $t=2$, even if the representation is given~\citep{BBOSSS-thin-coloring}.

\paragraph{Simultaneous interval number.} The simultaneous interval number was recently introduced by \citet{Milanic-sim}. It generalizes the class of interval graphs by assigning not only an interval to each vertex, but also a subset of \emph{labels} from the set $\{1,\dots,d\}$, with $d\in \N$. The representation must be such that two vertices are adjacent in the graph if and only if their intervals \textbf{and} their sets of labels intersect. The simultaneous interval number is the minimum number $d$ for which there exists such a representation.

In their work, \citet{Milanic-sim} show that when given a suitable representation for a graph with simultaneous interval number bounded by $d$, \textsc{Clique} is \FPT parameterized by $d$, and \textsc{Independent Set} and \textsc{Dominating Set} are \FPT parameterized by $d$ plus the solution size. On the other hand, \textsc{Chromatic Number} is \NP-complete for $d=2$, even if the representation is given.

Unfortunately, determining the value of these two parameters is \NP-hard on general graphs \citep{Milanic-sim, thinness-np-complete}. This prompts the investigation of algorithms for these two parameters on restricted classes of inputs. For a start, \textsc{Thinness} was solved in polynomial time for cographs by \citet{tesis-diego} and for trees by \citet{thinness-of-trees}. Some classes of graphs with bounded thinness, where a bounded-size consistent solution can be obtained in polynomial time, were identified by \citet{B-B-M-P-convex-jcss}. On the other hand, not much is known about the complexity of \textsc{Simultaneous Interval Number} on restricted graph classes. 

\paragraph{Our contribution.} Given the inherent hardness discussed above, a natural approach is to search for parameterized algorithms for the \textsc{Thinness} and \textsc{Simultaneous Interval Number} problems, which ask to find the value of the corresponding parameters for a given graph. In particular, previous to this work there were no known \FPT or even \XP algorithms for computing any of the two parameters.

The field of parameterized complexity focuses on finding fast algorithms for problems when a natural number that is part of the input, the \emph{parameter}, is small. In particular, one focus is on finding \XP algorithms, which are algorithms that take time $f(k) \cdot n^{g(k)}$ for some computable functions $f$ and $g$, where $k$ is the parameter and $n$ is the size of the input. Another focus is on \FPT algorithms, which instead take time $f(k) \cdot n^c$ for some constant $c$. One way of finding an \FPT algorithm for a problem is to find a \emph{kernel} for it. A \emph{kernelization} (or \emph{kernel}) for a problem $Q$ is a polynomial-time algorithm that transforms each input $(x, k)$ into an \emph{equivalent} instance $(x', k')$ such that $\abs{x'}, k' \leq g(k)$ for some computable function $g$. We say that $g(k)$ is the \emph{size} of the kernel. Finding a kernel for a problem is equivalent to proving that it is \FPT~\citep[Lemma 2]{parameterized-algorithms}. Kernels of polynomial size are of particular interest due to their practical applications.

The existence of \FPT or even \XP algorithms for \textsc{Thinness} or \textsc{Simultaneous Interval Number} parameterized by the solution size (the numbers $t$ and $d$, respectively) is open. These would be the most natural parameterizations to consider, but unfortunately, these algorithms seem difficult to obtain. We thus focus instead on parameterizing the problems by other relatively natural parameters based on partitions of the input graph into modules.

The concept of \emph{$\G$-modular cardinality} of a graph $G$ for a fixed graph class $\G$ was introduced by \citet{modular-partitions-MFCS23} as the minimum number of parts in a partition of $G$ into modules, each inducing a graph in the class $\G$. In their article, Lafond and Luo study the complexity of domination problems parameterized by the $\G$-modular cardinality for some classes $\G$, and they use their results to show that these problems are $W[1]$-hard when parameterized by the clique-width, the modular-width, and other parameters. Additionally, positive results for problems parameterized by the $\G$-modular cardinality imply positive results when parameterized by the neighborhood diversity, the twin-cover number, and the vertex cover number. We will be focusing on the parameterizations by the $\cluster$-modular cardinality and the $\interval$-modular cardinality, where $\cluster$ is the class of unions of complete graphs, and $\interval$ is the class of interval graphs. 

If a parameter $p(G)$ is not greater than a function $f$ of another parameter $s(G)$ for every graph $G$, we say that $p$ \emph{generalizes} $s$. In particular, the $\interval$-modular cardinality generalizes the $\cluster$-modular cardinality. This means that obtaining a kernel for a problem parameterized by the $\interval$-modular cardinality of the input graph automatically yields a kernel for the problem parameterized by the $\cluster$-modular cardinality. In turn, the $\cluster$-modular cardinality generalizes the neighborhood diversity and the twin-cover number (both of which generalize the vertex cover number), and is less general than the modular-width. A relationship diagram of thinness and other graph width parameters can be found in a work by \citet{B-B-M-P-convex-jcss}. We combine in \cref{fig:diagram} the figure by \citet{B-B-M-P-convex-jcss} with figures by \citet{Milanic-sim}, \citet{Belm-mw}, and \citet{modular-width}, and further include the $\cluster$- and $\interval$-modular cardinalities.

\begin{figure}[ht]
    \begin{center}
        \resizebox{0.8\textwidth}{!}{%
    \begin{tikzpicture}[yscale=.7,xscale=1.1]
    \node[draw,rounded corners=3,inner sep=2.6pt] at (0,8.5) (mmw) {\phantom{j}mim-width\phantom{j}};
    \node[draw,rounded corners=3,inner sep=2.6pt] at (-3,7.5) (lmmw) {\phantom{j}linear mim-width\phantom{j}};
    \node[draw,rounded corners=3,inner sep=2.6pt] at (0,7) (cw) {\phantom{j}clique-width\phantom{j}};
    \node[draw,rounded corners=3,inner sep=2.6pt] at (0,5.5) (tw) {\phantom{j}treewidth\phantom{j}};
    \node[draw,rounded corners=3,inner sep=2.6pt] at (0,2.5) (pw) {\phantom{j}pathwidth\phantom{j}};
    \node[draw,rounded corners=3,inner sep=2.6pt] at (-9.75,1) (nd) {\phantom{j}neighborhood diversity\phantom{j}};
    \node[draw,rounded corners=3,inner sep=2.6pt] at (-5.75,1) (tc) {\phantom{j}twin-cover number\phantom{j}};
    \node[draw,rounded corners=3,inner sep=2.6pt] at (0,1) (bw) {\phantom{j}bandwidth\phantom{j}};
    \node[draw,rounded corners=3,inner sep=2.6pt] at (-7.75,-1) (vc) {\phantom{j}vertex cover number\phantom{j}};
    \node[draw,rounded corners=3,inner sep=2.6pt] at (-7.75,4.5) (im) {\phantom{j}$\interval$-modular cardinality\phantom{j}};
    \node[draw,rounded corners=3,inner sep=2.6pt] at (-7.75,2.5) (cm) {\phantom{j}$\cluster$-modular cardinality\phantom{j}};
    \node[draw,rounded corners=3,inner sep=2.6pt] at (-11,6) (mw) {\phantom{j}modular-width\phantom{j}};
    \node[draw,rounded corners=3,inner sep=2.6pt] at (-3,5.5) (th) {\phantom{j}thinness\phantom{j}};
    \node[draw,rounded corners=3,inner sep=2.6pt] at (-3,3.5) (si) {\phantom{j}simultaneous interval number\phantom{j}};
    \draw [->, shorten >= 1pt] (mmw) -- (cw);
    \draw [->, shorten >= 1pt] (cw) -- (tw);
    \draw [->, shorten >= 1pt] (tw) -- (pw);
    \draw [->, shorten >= 1pt] (pw) -- (bw);
    \draw [->, shorten >= 4pt] (mmw) -- (lmmw);
    \draw [->, shorten >= 1pt] (lmmw) -- (th);
    \draw [->, shorten >= 2pt] (si) -- (pw);
    \draw [->, shorten >= 1pt] (pw) -- (vc);
    \draw [->, shorten >= 3pt] (nd) -- (vc);
    \draw [->, shorten >= 3pt] (tc) -- (vc);
    \draw [->, shorten >= 3pt] (cm) -- (tc);
    \draw [->, shorten >= 3pt] (cm) -- (nd);
    \draw [->, shorten >= 1pt] (mw) to[bend right] (cm);
    \draw [->, shorten >= 1pt] (cw) -- (mw);
    \draw [->, shorten >= 1pt] (th) -- (si);
    \draw [->, shorten >= 7pt] (th) -- (im);
    \draw [->] (im) -- (cm);
    \draw [->, shorten >= 3pt] (si) -- (tc);
    \end{tikzpicture}}
    \end{center}
    \caption{The relationships between the different width parameters that we consider in this work, and some classical ones. Here, an arrow from parameter $A$ to parameter $B$ means that $A$ is bounded by a function on $B$. Most of the relationships were known and appear in diagrams of~\citep{Milanic-sim,Belm-mw,B-B-M-P-convex-jcss,modular-width}. The remaining ones are proved in \cref{prop:interval-modular-cardinality-parametrization}, \cref{prop:neighborhood-partition-is-cluster-modular-partition,lemma:twin-cover-hcupn,lemma:si-tc}, and the results in \cref{sec:modular-cardinality,sec:parameterizations}.
    {To explain the incomparability between thinness and modular-width, between simultaneous interval number and both $\interval$-modular cardinality and neighborhood diversity, and between $\interval$-modular cardinality and modular-width, observe that cographs have modular-width~$2$ and unbounded thinness~\citep{tesis-diego}, and complete bipartite graphs have neighborhood diversity~$2$ and unbounded simultaneous interval number~\citep{Milanic-sim}, while paths have simultaneous interval number~$1$, $\interval$-modular cardinality~$1$, and unbounded modular-width.} Cographs have bounded linear mim-width, but inspired by the ideas of \citet{H-A-R-lin-mim-trees}, we can also build graph families of bounded modular-width and unbounded linear mim-width (see \cref{sec:modw-vs-lmimw}). 
    }\label{fig:diagram}
\end{figure}

The article that introduces the $\G$-modular cardinality~\citep{modular-partitions-MFCS23}\footnote{Full proofs can be found in~\citep{modular-partitions-arxiv}.} presents a polynomial-time algorithm to obtain it for an arbitrary graph when $\G$ meets conditions that are satisfied by the classes $\cluster$ and $\interval$, but their algorithm does not explicitly run in linear time. Thus, we first show that computing the $\cluster$- and $\interval$-modular cardinality of a graph can be done in linear time in \cref{prop:calculate-cluster-module-partition,prop:calculate-interval-module-partition}. 
 In the case of the $\interval$-modular cardinality, this necessitates the results in \cref{prop:contract-complete-module,prop:contract-interval-module}, which allow the elimination of most of the vertices of a module that induces an interval graph without modifying the thinness of the whole graph. These results cannot be easily generalized to modules that induce graphs with bounded thinness; we show this in \cref{sec:counterexample} by providing a counterexample to a possible generalization of them. Then, we use these results to present linear kernels for the \textsc{Thinness} problem when parameterized by the $\interval$-modular cardinality (\cref{prop:interval-modular-cardinality-parametrization}) and neighborhood diversity (\cref{prop:neighborhood-diversity-parametrization}) of the input graph, and exponential kernels when parameterized by the twin-cover number (\cref{prop:twin-cover-parametrization}) and vertex cover number (\cref{prop:vertex-cover-parametrization}). Additionally, we show \FPT algorithms for the \textsc{Simultaneous Interval Number} problem parameterized by the $\cluster$-modular cardinality plus the solution size (\cref{theorem:si-param-cm}), neighborhood diversity plus the solution size (\cref{corol:si-param-nd-vc}), twin-cover number (\cref{cor:si-param-twin-cover}), and vertex cover number (\cref{corol:si-param-nd-vc}). For kernelization lower bounds, we observe in \cref{theorem:thinness-kernelization-lower-bound} that no polynomial kernels are likely to exist for the \textsc{Thinness} or \textsc{Simultaneous Interval Number} problems when parameterized by a number of graph parameters, including treewidth, modular-width, and the thinness and simultaneous interval number themselves. Finally, in \cref{sec:modw-vs-lmimw} we complete the relationships of \cref{fig:diagram} by proving that modular-width and linear mim-width do not generalize one another.

In order to achieve the results mentioned above, we highlight, in particular, the following technical contributions: determining in \cref{prop:contract-interval-module} that almost all vertices belonging to a module that induces an interval graph can be eliminated without modifying the thinness of the graph, and presenting an \FPT algorithm for \textsc{Simultaneous Interval Number} when parameterized by the $\cluster$-modular cardinality in \cref{theorem:si-param-cm} by reducing the problem to a limited number of small instances.

\paragraph{Organization of the paper.} In \cref{sec:preliminaries} we state some necessary definitions. In \cref{sec:reducing-modules} we prove some lemmas about the thinness of a graph after removing vertices that belong to the same module. In \cref{sec:modular-cardinality} we show linear-time algorithms to compute the $\cluster$- and $\interval$-modular cardinality of a graph. In \cref{sec:parameterizations} we use the results from \cref{sec:reducing-modules} to compute kernels for the {\sc Thinness} problem parameterized by the $\interval$- and $\cluster$-modular cardinality, neighborhood diversity, twin-cover number, and vertex cover number of the input graph.
In \cref{sec:sim-int} we present \FPT algorithms for the {\sc Simultaneous Interval Number} problem parameterized by the $\cluster$-modular cardinality plus the solution size, neighborhood diversity plus the solution size, twin-cover number, and vertex cover number. We summarize our algorithmic results and present some open problems in \cref{sec:open-problems}.
In \cref{sec:modw-vs-lmimw} we show that there exist graph families with bounded modular-width and unbounded linear mim-width, completing the diagram of \cref{fig:diagram}. In \cref{sec:counterexample} we show a counterexample to a possible generalization of the results of \cref{sec:reducing-modules}. In \cref{sec:lower-bounds} we observe that polynomial kernels are unlikely to exist for the \textsc{Thinness} {and \textsc{Simultaneous Interval Number} problems} parameterized by some common graph parameters, including treewidth, modular-width, and the thinness {and simultaneous interval number themselves}; {we moved this section to the appendix because its proofs are nowadays standard in the field of parameterized complexity.}

\section{Preliminaries}
\label{sec:preliminaries}
\paragraph{Graphs.} All graphs in this work are finite, undirected, and have no loops or multiple edges. 
Let $G$ be a graph. We denote by $V(G)$ and $E(G)$ its vertex and edge set, respectively. We denote by $N(v)$ and $N[v]$, respectively,
the neighborhood and closed neighborhood of a vertex $v \in V(G)$. The complement of $G$ is denoted $\overline{G}$.
Let $X \subseteq V(G)$.
We denote by $G[X]$ the subgraph of $G$ induced by $X$, and by $G \setminus X$ the graph $G[V(G) \setminus X]$. If $X = \{v\}$, we simply write $G \setminus v$.

A subset $X$ of $V(G)$ is \emph{complete} (or a \emph{clique}) if every pair of vertices in $X$ are adjacent, and \emph{independent} if no pair of vertices in $X$ are adjacent. A \emph{cluster} is a union of cliques. We denote the class of all cluster graphs by $\cluster$, and the class of all interval graphs by $\interval$.

\paragraph{Modules.} A \emph{module} of a graph $G$ is a subset $X$ of its vertices such that every vertex not in $X$ is either adjacent to all vertices in $X$, or to none of them. We say that a vertex is a \emph{neighbor} of or is \emph{adjacent} to the module if it is not in the module and is adjacent to the vertices in the module.
A \emph{modular partition} of a graph is a partition of its vertex set into modules. The \textit{contraction of $X$ into a  vertex} is defined as the graph $\contract{G}{X}$ resulting of removing all vertices in $X$ from $G$ and adding a new vertex $x$ adjacent to all neighbors of a vertex in $X$.

A module is \emph{proper} if it is not $V(G)$. 
A proper module $X$ is \emph{maximal} if there is no strict superset of $X$ that is a proper module. The maximal modules of a graph $G$ form a partitive family~\citep{modular-decomposition-partitive}, and thus a unique decomposition tree can be defined over a graph $G$ such that each node is a maximal module that does not overlap with other maximal modules (see \cref{fig:mod-decomp}). The \emph{modular decomposition tree} of $G$ is a rooted tree $T$ defined recursively as follows:
\begin{itemize}
    \item The root of the tree is the entire graph $G$.
    \item If $G$ contains only one vertex, then the root is called a \emph{Leaf} node.
    \item If $G$ is disconnected, then the children of the root are the modular decomposition trees of the connected components of $G$, and the root is called a \emph{Parallel} node.
    \item If the complement of $G$ is disconnected, then the children of the root are the modular decomposition trees of the connected components of $\overline{G}$, and the root is a \emph{Series} node.
    \item Otherwise, the root is a \emph{Prime} node, and its children are the modular decomposition trees of the maximal modules of $G$.
\end{itemize}

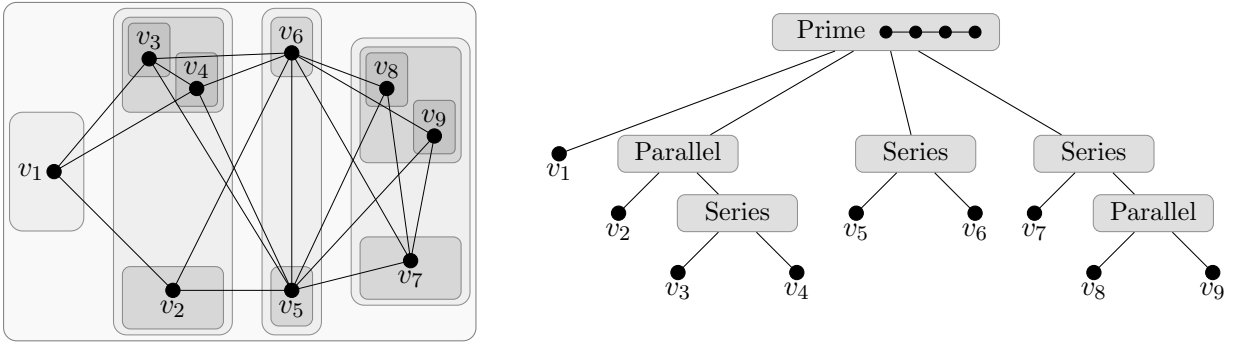
\begin{figure}[ht]
    \centering
    \resizebox{\textwidth}{!}{%
\begin{tikzpicture}[scale=0.8]

\draw[gray,fill=gray!5,rounded corners=5]
     (-0.85,-0.85) rectangle (7.1,4.85);

\draw[gray!30!black,fill=gray!20,rounded corners=5,opacity=0.5]
     (-0.75,1) rectangle (0.5,3);

\draw[gray!30!black,fill=gray!20,rounded corners=5,opacity=0.5]
     (1,-0.75) rectangle (3,4.75);

\draw[gray!30!black,fill=gray!20,rounded corners=5,opacity=0.5]
     (3.5,-0.75) rectangle (4.5,4.75);

\draw[gray!30!black,fill=gray!20,rounded corners=5,opacity=0.5]
     (5,-0.25) rectangle (7,4.25);

\draw[gray!30!black,fill=gray!50,rounded corners=3,opacity=0.5]
     (1.15,3) rectangle (2.85,4.6);

\draw[gray!30!black,fill=gray!50,rounded corners=3,opacity=0.5]
     (1.15,-0.65) rectangle (2.85,0.4);

\draw[gray!30!black,fill=gray!50,rounded corners=3,opacity=0.5]
     (5.15,2.15) rectangle (6.85,4.1);

\draw[gray!30!black,fill=gray!50,rounded corners=3,opacity=0.5]
     (5.15,-0.15) rectangle (6.85,0.9);

\draw[gray!30!black,fill=gray!50,rounded corners=3,opacity=0.5]
     (3.65,3.6) rectangle (4.35,4.6);

\draw[gray!30!black,fill=gray!50,rounded corners=3,opacity=0.5]
     (3.65,-0.6) rectangle (4.35,0.4);

\draw[gray!30!black,fill=gray!60,rounded corners=2,opacity=0.5]
     (1.25,3.6) rectangle (1.95,4.5);

\draw[gray!30!black,fill=gray!60,rounded corners=2,opacity=0.5]
     (2.05,3.1) rectangle (2.75,4);

\draw[gray!30!black,fill=gray!60,rounded corners=2,opacity=0.5]
     (5.25,3.1) rectangle (5.95,4);

\draw[gray!30!black,fill=gray!60,rounded corners=2,opacity=0.5]
     (6.05,2.3) rectangle (6.75,3.2);

\vertex{0}{2}{v1};
\vertex{2}{0}{v2};
\vertex{1.6}{3.9}{v3};
\vertex{2.4}{3.4}{v4};
\vertex{4}{0}{v5};
\vertex{4}{4}{v6};
\vertex{6}{0.5}{v7};
\vertex{5.6}{3.4}{v8};
\vertex{6.4}{2.6}{v9};

\vertexLabel[left]{v1}{$v_1$}
\vertexLabel[below]{v2}{$v_2$}
\vertexLabel[above]{v3}{$v_3$}
\vertexLabel[above]{v4}{$v_4$}
\vertexLabel[below]{v5}{$v_5$}
\vertexLabel[above]{v6}{$v_6$}
\vertexLabel[below]{v7}{$v_7$}
\vertexLabel[above]{v8}{$v_8$}
\vertexLabel[above]{v9}{$v_9$}

\path (v1) edge [e1] (v2);
\path (v1) edge [e1] (v3);
\path (v1) edge [e1] (v4);
\path (v2) edge [e1] (v5);
\path (v2) edge [e1] (v6);
\path (v3) edge [e1] (v4);
\path (v3) edge [e1] (v5);
\path (v3) edge [e1] (v6);
\path (v4) edge [e1] (v5);
\path (v4) edge [e1] (v6);
\path (v5) edge [e1] (v6);
\path (v5) edge [e1] (v7);
\path (v5) edge [e1] (v8);
\path (v5) edge [e1] (v9);
\path (v6) edge [e1] (v7);
\path (v6) edge [e1] (v8);
\path (v6) edge [e1] (v9);
\path (v7) edge [e1] (v8);
\path (v7) edge [e1] (v9);

\begin{scope}[xshift=8.5cm,yshift=0.3cm]

\vertex{0}{2}{v1};
\vertex{1}{1}{v2};
\vertex{2}{0}{v3};
\vertex{4}{0}{v4};
\vertex{5}{1}{v5};
\vertex{7}{1}{v6};
\vertex{8}{1}{v7};
\vertex{9}{0}{v8};
\vertex{11}{0}{v9};

\vertexLabel[below]{v1}{$v_1$}
\vertexLabel[below]{v2}{$v_2$}
\vertexLabel[below]{v3}{$v_3$}
\vertexLabel[below]{v4}{$v_4$}
\vertexLabel[below]{v5}{$v_5$}
\vertexLabel[below]{v6}{$v_6$}
\vertexLabel[below]{v7}{$v_7$}
\vertexLabel[below]{v8}{$v_8$}
\vertexLabel[below]{v9}{$v_9$}

\node[draw, gray!30!black,fill=gray!50,rounded corners=3,opacity=0.5, fit={(2,0.7) (4,1.3)}, inner sep=0.25pt, label=center:\small\phantom{j}Series\phantom{j}] (A) {};
\node[draw, gray!30!black,fill=gray!50,rounded corners=3,opacity=0.5, fit={(1,1.7) (3,2.3)}, inner sep=0.25pt, label=center:\small\phantom{j}Parallel\phantom{j}] (B) {};
\node[draw, gray!30!black,fill=gray!50,rounded corners=3,opacity=0.5, fit={(5,1.7) (7,2.3)}, inner sep=0.25pt, label=center:\small\phantom{j}Series\phantom{j}] (C) {};
\node[draw, gray!30!black,fill=gray!50,rounded corners=3,opacity=0.5, fit={(9,0.7) (11,1.3)}, inner sep=0.25pt, label=center:\small\phantom{j}Parallel\phantom{j}] (D) {};
\node[draw, gray!30!black,fill=gray!50,rounded corners=3,opacity=0.5, fit={(8,1.7) (10,2.3)}, inner sep=0.25pt, label=center:\small\phantom{j}Series\phantom{j}] (E) {};
\node[draw, gray!30!black,fill=gray!50,rounded corners=3,opacity=0.5, fit={(3.6,3.75) (7.4,4.35)}, inner sep=0.25pt, label=center:{\small\phantom{j}Prime\phantom{aaaaaaaaa}}] (F) {};
\draw (A)--(v4);
\draw (A)--(v3);
\draw (B)--(A);
\draw (B)--(v2);
\draw (D)--(v8);
\draw (D)--(v9);
\draw (C)--(v5);
\draw (C)--(v6);
\draw (E)--(v7);
\draw (E)--(D);
\draw (F)--(v1);
\draw (F)--(B);
\draw (F)--(C);
\draw (F)--(E);

\minivertex{5.5}{4.05}{w1};
\minivertex{6}{4.05}{w2};
\minivertex{6.5}{4.05}{w3};
\minivertex{7}{4.05}{w4};

\path (w1) edge [e1] (w4);

\end{scope}

\end{tikzpicture}
    }%
    \caption{Example of a graph and its modular decomposition tree.}
    \label{fig:mod-decomp}
\end{figure}

The \emph{modular-width} $\mw(G)$ of $G$~\citep{modular-width} is the maximum number of children of a prime node in the modular decomposition tree of $G$, or $2$, if no prime node exists. 

\begin{definition}[$\mathcal{G}$-modular cardinality {\citep[Definition 1]{modular-partitions-MFCS23}}]

Let $\mathcal{G}$ be a graph class that contains the graph with one vertex as a member. For a graph $G$ (not necessarily in $\mathcal{G}$), a module $M$ of
$G$ is a \emph{$\mathcal{G}$-module} if $G[M]$ belongs to $\mathcal{G}$. A modular partition $\mathcal{M} \coloneqq \{M_1, \dots , M_k\}$ of a graph
$G$ is called a \emph{$\mathcal{G}$-modular partition} if each $M_i$ is a $\mathcal{G}$-module. The \emph{$\mathcal{G}$-modular cardinality of
$G$}, denoted by $\Gmc(G)$, is the minimum cardinality of a $\mathcal{G}$-modular partition of $G$.
\end{definition}

\paragraph{Thinness.} A graph $G$ is \emph{$k$-thin} if there exists a strict total order $<$ on $V(G)$ and a partition $S$ of $V(G)$ into
$k$ classes such that, for each $u, v, w \in V(G)$ with $u < v <
w$, if $u$ and $v$ belong to the same class and $(u,w) \in
E(G)$, then $(v,w) \in E(G)$. The three vertices $u, v, w$ satisfying those conditions are said to form a \emph{consistent triple}. An order and a partition
satisfying those properties are said to be \textit{consistent} with $G$. We
call the tuple $(<, S)$ a \emph{consistent solution} for $G$. The minimum $k$ such that $G$ is
$k$-thin is called the \textit{thinness} of $G$, and denoted by
$\thin(G)$. A consistent solution for $G$ that uses $\thin(G)$ classes is said to be \emph{optimal}. The \textsc{Thinness} problem takes as input a graph and a natural number $k$ and consists of deciding if the input graph has thinness at most $k$.

\paragraph{Simultaneous interval number.} A \emph{simultaneous interval representation} of a graph $G$ is the result of assigning an interval and a subset of \emph{labels} from $\{1,\dots,d\}$ for $d \in \N$ to each vertex in $V(G)$ such that two vertices are adjacent in $G$ if and only if both their intervals and their sets of labels intersect. We say that $d$ is the \emph{size} of a simultaneous interval representation. The \emph{simultaneous interval number} $\si(G)$ of a graph $G$ is the smallest size of a simultaneous interval representation of $G$. The \textsc{Simultaneous Interval Number} problem consists in determining, for a graph $G$ and an integer $d \geq 0$, if $G$ admits a $d$-simultaneous representation or not.

\paragraph{Neighborhood diversity.} We say that two vertices $v$ and $v'$ of a graph $G$ have the same \emph{neighborhood type} if and only if $N(v) \setminus \{v'\} = N(v') \setminus \{v\}$ \citep[Definition 1]{neighborhood-diversity}. In other words, $v$ and $v'$ have the same neighborhood type if and only if they induce a module in $G$. A graph $G$ has \emph{neighborhood diversity} at most $w$ if there exists a partition of $V(G)$ into at most $w$ sets such that all the vertices in each set have the same neighborhood type. We will call this partition a \emph{neighborhood partition} of the vertices of $G$. The neighborhood diversity of $G$, denoted $\nd(G)$, is the smallest such $w$ \citep[Definition 2]{neighborhood-diversity}.

\paragraph{Twin-cover.} $X \subseteq V(G)$ is a \emph{twin-cover} of $G$ if for every edge $(a,b) \in E(G)$ either $a \in X$, or $b \in X$, or $a$ and $b$ are \emph{twins}, i.e. $N[a]=N[b]$. The \emph{twin-cover number} of $G$, $\tc(G)$, is the minimum size of a twin-cover of $G$ \citep[Definition 3.1]{twin-cover}.

\paragraph{Vertex cover.} Let $G$ be a graph. A subset $X$ of $V(G)$ is a vertex cover of $G$ if for every edge $(u, v)$ of $E(G)$ at least one of $\{u, v\}$ belongs to $X$. The \emph{vertex cover number} of $G$, $\vc(G)$, is the minimum size of a vertex cover of $G$. 

\paragraph{Parameterized complexity.} The field of \emph{parameterized complexity} seeks to find efficient algorithms for problems when the input is accompanied by a number that bounds a \emph{parameter} of the input. Algorithms that take polynomial time when that parameter is fixed are of special interest. Formally, a \emph{parameterized problem} is a language $Q \subseteq \Sigma^* \times N$, and such a problem is \emph{fixed-parameter tractable} (\FPT) if there is an algorithm that decides membership of an instance $(x, k)$ in time $f(k)\cdot\abs{x}^{\O(1)}$ for some computable function~$f$. A problem is \emph{slice-wise polynomial} (\XP) if there is an algorithm that decides membership of an instance $(x, k)$ in time $f(k)\cdot \abs{x}^{f(k)}$ for some computable function $f$.

\paragraph{Kernelization.} A \emph{kernelization} (or \emph{kernel}) for a parameterized problem $Q$ is a po\-ly\-no\-mial-time algorithm that transforms each instance $(x, k)$ into an instance $(x', k')$ such that:
\begin{itemize}
\item $(x', k') \in Q$ if and only if $(x,k) \in Q$, and
\item $\abs{x'}, k' \leq g(k)$ for some computable function $g$.
\end{itemize}
We say that $g(k)$ is the \emph{size} of the kernel.

\section{Reducing modules without changing the thinness}
\label{sec:reducing-modules}

We begin by showing that if a sufficiently large group of vertices have the same neighborhood, some of them can be removed from a graph without affecting its thinness. This will be key in proving the kernelizations results for \textsc{Thinness} in \cref{sec:parameterizations}, and when showing the linear time algorithm to compute the $\interval$-modular cardinality in \cref{prop:calculate-interval-module-partition}.

We state here a result by \citet{thinness-of-product-graphs} on the thinness of a graph after contracting a complete module.

\begin{lemma}[{\citep[Theorem 35]{thinness-of-product-graphs}}]
    \label[lemma]{prop:contract-complete-module}
    Let $H$ be a module of a graph $G$, and $\contract{G}{H}$ be the graph obtained by contracting $H$ into a vertex. If $H$ is complete, then $\thin(G) = \thin(\contract{G}{H})$.
\end{lemma}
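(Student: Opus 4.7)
The plan is to prove the equality by showing both inequalities, using the consistent solution framework directly. The definitions involved behave nicely under contraction of a complete module, and the argument reduces to a routine case analysis.

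For the easy direction $\thin(\contract{G}{H}) \leq \thin(G)$, I would start from an optimal consistent solution $(<, S)$ for $G$. Pick any representative $v \in H$, delete from $V(G)$ the vertices in $H \setminus \{v\}$, and rename $v$ as the contracted vertex $h$. The restriction of $<$ and $S$ to this set is a candidate solution for $\contract{G}{H}$. Any triple in $\contract{G}{H}$ corresponds literally to a triple in $G$ after substituting $v$ for $h$, and the edges incident to $h$ in $\contract{G}{H}$ coincide with the edges incident to $v$ in $G$, since $H$ is a module. Consistency of the restricted triple therefore follows directly from consistency of the original.

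The nontrivial direction is $\thin(G) \leq \thin(\contract{G}{H})$. Given an optimal consistent solution $(<,S)$ for $\contract{G}{H}$, I would define $(<',S')$ for $G$ by placing all vertices of $H$ consecutively in the position of $h$ (in any arbitrary internal order), and assigning every vertex of $H$ to the class $S(h)$; vertices outside $H$ keep their original order and class. I would then verify consistency by cases according to how many of the vertices $a <' b <' c$ in a potentially violating triple lie inside $H$. If none lie in $H$, the triple is inherited from $\contract{G}{H}$. If exactly one lies in $H$, it can be replaced by $h$ to produce a triple in $\contract{G}{H}$ with the same relative order (because all of $H$ sits at position $h$), the same class memberships (because that vertex is in $S(h)$), and the same adjacencies with the remaining two vertices (because $H$ is a module); consistency of the new triple then yields consistency of the old. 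If two of the three lie in $H$, the unique outside vertex is adjacent to one element of $H$ if and only if it is adjacent to all, and the within-$H$ pair is automatically adjacent because $H$ is complete, so the implication holds trivially. If all three lie in $H$, completeness of $H$ makes all three pairs edges, so there is nothing to check.

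The critical observation, and the only place where completeness of $H$ is genuinely used, is the two-in-$H$ case: the subcase $a = h_i, c = h_j$ cannot occur (because $b$ would have to sit strictly between two vertices of $H$, contradicting the consecutive placement), and the remaining subcases need $(h_i, h_j) \in E(G)$ to discharge the consistency obligation. The module property, by contrast, is what makes the one-in-$H$ case go through. Together these two properties are exactly what is required, and the case analysis closes the proof.
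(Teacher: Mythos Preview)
Your proof is correct. Note, however, that the paper does not actually prove this lemma: it is quoted verbatim as Theorem~35 of \citet{thinness-of-product-graphs}, with no argument given. That said, your approach is essentially the same template the paper uses in its proof of the companion \cref{prop:contract-interval-module} (the interval, non-complete case): restrict to the induced subgraph for one inequality, and for the other, blow the contracted vertex back up into $H$ as a consecutive block in a single class, then do the case analysis on how many of $u<v<w$ fall inside $H$. Your observation that completeness of $H$ is used precisely in the two-in-$H$ subcase (to guarantee the within-$H$ edge $(b,c)$) is exactly the point where the two lemmas diverge; in the interval lemma, that subcase is handled instead via the extra nonadjacent vertex $h_2$ retained in $G'$.
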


A similar lemma holds for modules that induce interval graphs that are not complete.
\begin{lemma}
    \label[lemma]{prop:contract-interval-module}
    Let $H$ be a module of a graph $G$ such that $G[H]$ is an interval graph and $H$ is not complete. Let $G'$ be the result of removing from $G$ all vertices belonging to $H$ except for two nonadjacent vertices. Then $\thin(G) = \thin(G')$.
\end{lemma}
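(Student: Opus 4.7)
The easier direction $\thin(G') \leq \thin(G)$ is immediate: since $G'$ is an induced subgraph of $G$, any consistent solution for $G$ restricts to a consistent solution for $G'$ with at most the same number of classes.

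For $\thin(G) \leq \thin(G')$, I would take an optimal consistent solution $(<', S')$ of $G'$ with $k = \thin(G')$ classes (WLOG $a <' b$) and construct a consistent $(<, S)$ for $G$ using the same $k$ classes. Two structural facts drive the construction: since $G[H]$ is an interval graph and $a, b$ are nonadjacent, their intervals can be chosen to be the leftmost and rightmost in an interval representation of $G[H]$, yielding a consistent $1$-thin linear ordering $<_H$ of $H$ with $a$ first and $b$ last; and since $H$ is a module with $a \not\sim b$, the vertices $a$ and $b$ are false twins in $G'$, sharing the external neighborhood $N_H$. The core construction then inserts each $h \in H \setminus \{a, b\}$ close to $a$ or $b$ in the order: those adjacent to $a$ in $G[H]$ are placed immediately after $a$ and assigned to the class $c_a$ of $a$; those not adjacent to $a$ are placed immediately before $b$ and assigned to the class $c_b$ of $b$; within each group, the $<_H$-order is preserved. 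Before performing the insertion, I would rearrange $(<', S')$ using the false-twin relation between $a$ and $b$, to relocate outside the interval $[a, b]$ those vertices of $V(G') \setminus H$ that would otherwise create inconsistent triples with the inserted vertices (typically those in class $c_a$ or $c_b$ not belonging to $N_H$).

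Consistency of the resulting $(<, S)$ is then verified by a case analysis on triples $x < y < z$ with $x, y$ in the same class: triples within $V(G')$ follow from the (possibly rearranged) consistency of $(<', S')$; triples within $H$ follow from $<_H$ combined with the module property; and mixed triples follow by combining the module property of $H$ (external vertices see all of $H$ uniformly), the class-assignment rule that links each $h_i$ to $c_a$ or $c_b$, and the original consistency constraints of $(<', S')$. The hardest part will be precisely this rearrangement step and the verification of the mixed triples: I must show that moving the ``transparent'' intermediate vertices out of $[a, b]$ does not violate any other consistency constraint, and I must carefully check how the newly inserted $h_i$'s interact with external vertices depending on whether they belong to $N_H$ and on their class assignment. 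This case analysis, although lengthy, should be resolvable thanks to the combined structure of the interval graph $G[H]$, the module $H$, and the false-twin relation between $a$ and $b$.
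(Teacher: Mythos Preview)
Your high-level plan is in the right spirit, but it is considerably more complicated than necessary and the crucial ``rearrangement'' step is a genuine gap. You propose to split $H$ into two blocks (neighbors of $a$ near $a$ in class $c_a$, non-neighbors of $a$ near $b$ in class $c_b$) and then, beforehand, move out of the interval $[a,b]$ those external vertices that would create bad triples. You acknowledge this relocation is ``the hardest part'', but you give no argument that it can always be done without breaking other triples of $(<',S')$; in general, moving a vertex across others can violate consistency in ways that cascade. Moreover, your construction silently assumes $c_a\neq c_b$ and that a $1$-thin order of $G[H]$ with $a$ first and $b$ last exists; neither is justified.

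The paper avoids all of this with a much simpler construction: keep $(<',S')$ untouched, \emph{delete} $h_2$ from the order, and replace $h_1$ by the whole of $H$ (in any $1$-thin order of $G[H]$), assigning every vertex of $H$ to the single class of $h_1$. No splitting, no rearrangement, no assumption on where $a$ and $b$ sit inside $<_H$. The only delicate case is a triple $u<v<w$ with $u\notin H$ and $v,w\in H$: here $u<'h_1<'h_2$ in the original solution, and consistency of $(u,h_1,h_2)$ together with the non-edge $h_1h_2$ forces either $u\notin N_H$ or $u$ not in the class of $h_1$; either way the new triple is consistent. Thus $h_2$ is used only as a \emph{witness} in $(<',S')$, not as a location to place vertices, which is precisely the idea your approach misses and tries to compensate for with the unspecified rearrangement. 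I recommend you adopt this simpler construction; your case analysis for mixed triples then becomes short and fully rigorous.
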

\begin{proof}
    As $G'$ is an induced subgraph of $G$, we have $\thin(G) \geq \thin(G')$. It remains to prove $\thin(G) \leq \thin(G')$.
    
    Starting from a consistent solution $(<', S')$ on $\thin(G')$ classes of the vertices of $G'$, we will construct a consistent solution $(<, S)$ of the vertices of $G$ on no more than $\thin(G')$ classes, and thus prove $\thin(G) \leq \thin(G')$.
    
    Let $h_1$ and $h_2$ denote the two remaining vertices belonging to $H$ in $G'$ such that $h_1 <' h_2$. Take $<_H$ to be an ordering of the vertices of $H$ consistent with the partition $\{H\}$, which exists because $H$ is an interval graph. We construct solution $(<, S)$ by removing vertex $h_2$ from $(<', S')$ and replacing the vertex $h_1$ in $<'$ with all the vertices in $H$ in the order defined by $<_H$. Additionally, we assign all the vertices of $H$ to the same class as $h_1$ in $S'$. 
    
    We claim that $(<, S)$ is a consistent solution for $G$. We will prove this by analyzing every possible triple $u < v < w$ of vertices in $G$, and showing that they form a consistent triple. We distinguish several cases
    depending on which of the three vertices belong to $H$:

    \begin{itemize}
        \item \underline{None of $u$, $v$, $w$ belongs to $H$:} This triple is consistent in $(<', S')$, and thus is also consistent in $(<, S)$.
        \item \underline{Only one of $u$, $v$, $w$ belongs to $H$:} If this triple were inconsistent, then replacing the vertex that belongs to $H$ in $\{u, v, w\}$ with $h_1$ would form an inconsistent triple in $(<', S')$, which is a contradiction to the fact that $(<', S')$ is a consistent solution.
        \item \underline{$\{u, v\} \subseteq H$ and $w \not \in H$:} As $H$ is a module, vertex $w$ is either adjacent to both $u$ and $v$, or nonadjacent to both of them. Thus, this is a consistent triple.
        \item \underline{$u \in H$, $v \not \in H$, and $w \in H$:} This cannot happen, as the vertices in $H$ are consecutive in $<$.
        \item \underline{$u \not \in H$ and $\{v, w\} \subseteq H$:} We know that in $G'$ we have $u <' h_1 <' h_2$. As $(<', S')$ is a consistent solution, vertices $\{u, h_1, h_2\}$ must form a consistent triple. As $h_2$ is not adjacent to $h_1$, either
            \begin{itemize}
                \item $u$ is not adjacent to $h_2$, which means that $u$ is also not adjacent to any vertex in the module $H$ (vertex $w$ in particular); or
                \item $u$ and $h_1$ do not share a class, which means that $u$ and $v$ do not share a class either.
            \end{itemize}
        In both cases, $\{u, v, w\}$ forms a consistent triple.
     \end{itemize}

    This proves that $(<, S)$ is a consistent solution with at most $\thin(G')$ classes for the vertices in $V(G)$, and thus $\thin(G) \leq \thin(G')$. Combining this result with $\thin(G) \geq \thin(G')$ we have that $\thin(G) = \thin(G')$.
\end{proof}

In \cref{prop:contract-interval-module}, the choice of replacing $H$ with two independent vertices is superfluous: in fact, replacing the module with any non-complete interval graph would maintain the thinness of the whole graph. Recalling that interval graphs are exactly the graphs of thinness one, it seems sensible to ask whether \cref{prop:contract-interval-module} could be generalized to modules of greater thinness. In particular, \emph{is the thinness of a graph $G$ maintained when modules of thinness two are replaced by any other graph $H'$ such that $\thin(H') = 2$?} 
The answer is negative, and we defer the construction of a counterexample to \cref{sec:counterexample}.

Combining \cref{prop:contract-complete-module} and \cref{prop:contract-interval-module}, we see that if a module $H$ induces an interval graph, almost all vertices of $H$ can be removed without altering the thinness of the graph. Given a \textsc{Thinness} instance and a partition of the input graph into $\interval$-modules, this can be used in a preprocessing step to reduce the size of the input graph, and thus obtain a kernel. We thus need a way to efficiently calculate the minimum number of modules in an $\interval$-modular partition of the graph, and this is done in the next section.

\section{Linear-time algorithms for \texorpdfstring{$\cluster$}{clusters}-modular cardinality and\\ \texorpdfstring{$\interval$}{intervals}-modular cardinality}
\label{sec:modular-cardinality}

As mentioned in the introduction, the article that introduces the $\G$-modular cardinality~\citep{modular-partitions-MFCS23}\footnote{Full proofs can be found in~\citep{modular-partitions-arxiv}.} presents a polynomial-time algorithm to obtain it for an arbitrary graph when $\G$ meets some conditions, which the classes $\cluster$ and $\interval$ meet. Their algorithm does not explicitly run in linear time when $\G$ corresponds to the class of interval or cluster graphs. Thus, we present similar linear-time algorithms for these two parameters that utilize the modular decomposition of the input graph.

\begin{theorem}
\label{prop:calculate-cluster-module-partition}
    An optimal $\cluster$-modular partition of a graph $G$ can be computed in linear time.
\end{theorem}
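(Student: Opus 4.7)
The plan is to use the modular decomposition tree of $G$, computable in $O(n+m)$ time by classical algorithms, and to traverse it bottom-up performing work proportional to the number of children at each node. At each node $v$ I maintain three quantities: a boolean $c(v)$ indicating whether $G[V(v)]$ is a clique, a boolean $s(v)$ indicating whether $G[V(v)]$ is a cluster, and an integer $a(v)$ giving the minimum cardinality of a $\cluster$-modular partition of $V(v)$ whose parts are modules of $G[V(v)]$. Since $V(v)$ is itself a module of $G$, the modules of $G$ contained in $V(v)$ are exactly the modules of $G[V(v)]$, so the desired value is $a(r)$ for $r$ the root.

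The recurrences distinguish the internal node types. Leaves satisfy $c(v)=s(v)=\mathrm{true}$ and $a(v)=1$. At a parallel node $v$ with children $v_1,\dots,v_k$, $c(v)=\mathrm{false}$ and $s(v)=\bigwedge_i s(v_i)$, because a disjoint union of $P_3$-free graphs is still $P_3$-free; if $s(v)$ is true then $a(v)=1$, and otherwise
\[
a(v) \;=\; [\,\exists\, i : s(v_i)\,] \;+\; \sum_{i\,:\,\neg s(v_i)} a(v_i),
\]
since all cluster children can be consolidated into a single cluster module and every non-cluster child is handled recursively. At a series node the dual holds with $c$ in place of $s$: a join of cliques is a clique, while joining any two non-adjacent vertices with a common neighbor creates an induced $P_3$. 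At a prime node both $c$ and $s$ are false and no proper sub-union of children is a module, so $a(v)=\sum_i a(v_i)$.

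The main obstacle will be to rule out additional gains from merging a proper sub-module $Y\subsetneq V(v_i)$ of a non-cluster child with sibling children. The module-structure theorem says that every module of $G[V(v)]$ is either contained in a single child, or is a union of full children when $v$ is parallel or series, or equals $V(v)$. To attach such a $Y$ to cluster siblings at a parallel node one would need $Y$ to be a union of connected components of $G[V(v_i)]$; but because the modular decomposition forbids a parallel node to be a child of a parallel node, a non-cluster and non-leaf child of a parallel node is series or prime, hence $G[V(v_i)]$ is connected and the only such $Y$ is $V(v_i)$ itself, contradicting the assumption that $v_i$ is non-cluster. Dually, at a series node a sub-module $Y$ could be merged with the clique children only if it were co-separated in $G[V(v_i)]$; non-clique children of a series node are parallel or prime and so co-connected, forcing $Y=V(v_i)$, and a direct check then shows that joining a non-clique cluster with any other child produces a $P_3$. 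These exclusion arguments justify that the recurrences above already capture every optimal merging.

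Combining the pieces, the modular decomposition tree has $O(n)$ nodes and total child-count $O(n)$, each DP update costs $O(1)$ per child, and the decomposition itself takes $O(n+m)$ time. Hence the entire procedure runs in linear time, and an optimal $\cluster$-modular partition can be reconstructed from the same traversal by recording, at each node, whether it is used as a single module or decomposed according to the appropriate case.
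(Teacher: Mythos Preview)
Your proof is correct and follows essentially the same approach as the paper: compute the modular decomposition tree, then run a bottom-up DP that, at parallel and series nodes, merges the ``trivial'' children into one class and recurses on the rest, while at prime nodes simply sums the children's values. Your bookkeeping differs only cosmetically---you track the pair $(c,s)$ and use $s(v_i)$ at parallel nodes versus $c(v_i)$ at series nodes, whereas the paper uniformly merges the \emph{complete} children at both node types; these coincide because children of a parallel node are connected (so cluster${}={}$clique there), which you yourself observe. Your ``main obstacle'' paragraph re-derives, in slightly more detail, the same module-structure fact the paper cites from Gallai to exclude cross-child mergings; it is a bit redundant given that you already invoke the structure theorem two sentences earlier, but it is not wrong.
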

\begin{proof}
\newcommand{\Shead}{C_1}
\newcommand{\Stail}{\mathsf{tail}}
First, we compute the modular decomposition tree $T$ of $G$, which can be done in linear time~\citep{modular-decomposition-linear-cournier,modular-decomposition-linear-mcconnell,modular-decomposition-linear}. Then, for each node $G_t \in T$ (which is an induced subgraph of $G$), we compute an optimal $\cluster$-modular partition $S_t$ of the vertices of $G_t$ depending on the type of node $G_t$. Let $\mathcal{C} \coloneqq \{G_1, \dots, G_h\}$ be the children of $G_t$, if there are any. 
    \begin{itemize}
        \item \textbf{Leaf}: The only class in $S_t$ is formed by the only vertex in $G_t$. This vertex is trivially a cluster module, and the partition is unique, thus optimal.
        \item \textbf{Parallel or Series}: Let $\mathcal{K} \subseteq \mathcal{C}$ be the children of $G_t$ that are complete graphs. We define $S_t \coloneqq \{\Shead\} \cup \Stail$, where
        \begin{align*}
            \Shead &\coloneqq \bigcup_{G_i \in \mathcal{K}} V(G_i)\\
            \Stail &\coloneqq \bigcup_{G_i \in \mathcal{C} \setminus \mathcal{K}} S_{t_i}. 
        \end{align*}
        In other words, we take the vertices of the child nodes $t_i$ for which $G_{t_i}$ is a complete graph, and join them into one class to build the first class of $S_t$. Then, we add the classes of all the other partitions as they are.

        As shown by \citet{modular-decomposition}, the only modules present in $G_t$ are the possible unions of elements of $\mathcal{C}$, and the submodules of elements of $\mathcal{C}$. That is to say, $G_t$ has no modules that contain some but not all vertices of some $G_i$, and some but not all vertices of some $G_j$, with $i \neq j$.

        Let us see first that this is a valid $\cluster$-modular partition. By the previous remark, the class $\Shead$ and every element of $\Stail$ are modules. If $G_t$ is a parallel node, $\Shead$ is a union of complete graphs, meaning, a cluster. If $G_t$ is a series node, $\Shead$ is itself a complete graph, and so it is also a cluster. The elements of $\Stail$ are clusters, as they are part of the $\cluster$-modular partitions of the children of~$G_t$. Moreover, all vertices of $G_t$ are present in some set of the partition, and thus $S_t$ is a valid $\cluster$-modular partition.

        Next, let us see that it is optimal. Suppose there is an optimal partition $Q_t$ of the vertices of $G_t$ that uses fewer classes than $S_t$. As noted before, the classes of $Q_t$ must each be either a union of elements of $\mathcal{C}$, or submodules of elements of $\mathcal{C}$.
        
        Suppose that some class $Q^i_t$ of $Q_t$ is a union of more than one element of $\mathcal{C}$, and that one of these elements $G_j$ is not complete. We distinguish two cases:
        \begin{itemize}
            \item \emph{$G_t$ is a parallel node}: Every element of $\mathcal{C}$ is a connected graph. As $G_j$ is not complete, it cannot be part of a cluster.
            \item \emph{$G_t$ is a series node}: Let $G_l$ be another element of $\mathcal{C}$ present in $Q^i_t$. Every vertex of $G_j$ is adjacent to every vertex of $G_l$. As $G_j$ is not complete, there are two nonadjacent vertices $v$ and $w$ in $V(G_j)$ that are each adjacent to a vertex $u$ in $G_l$. Thus, $v$, $w$, and $u$ cannot be part of the same cluster. 
        \end{itemize}
        These two cases show that the previous scenario cannot happen, and so every class of $Q_t$ is either a union of elements of $\mathcal{C}$ that are complete graphs, or submodules of elements of $\mathcal{C}$. Moreover, there cannot be more than one class of $Q_t$ that is a union of elements of $\mathcal{C}$, because otherwise we could join those classes into only one, reducing the size of $Q_t$. Also, there cannot be a proper submodule of an element $G_i$ of $\mathcal{C}$ that is a complete graph as a class in $Q_t$, as otherwise we could join that submodule with the rest of $G_i$ to reduce the size of $Q_t$. Thus, the union of the elements of $\mathcal{C}$ that are complete graphs defines a class in $Q_t$. Finally, as each other class is a submodule of an element of $\mathcal{C}$, the classes that have some vertex of a given graph $G_i$ must form a partition of the vertices of $G_i$, meaning, they should be a $\cluster$-modular partition of $G_i$. Partition $Q_t$ then contains an optimal cluster partition of each $G_i$, and necessarily has the same amount of classes as $S_t$.
        
        \item \textbf{Prime}: We define $S_t$ to be the union of $S_i$ for every $1 \leq i \leq h$.
        
        As shown by \citet{modular-decomposition}, the proper modules of $G_t$ are exactly the submodules of the associated graphs of children of $G_t$. Thus, an optimal $\cluster$-modular partition of $G_t$ must contain an optimal $\cluster$-modular partition of every child $G_i$, and so $S_t$ is optimal.
    \end{itemize}

    To achieve linear time on this algorithm, the partitions $S_t$ are not actually computed. Instead, we set a flag on each child $t_i$ of a series or parallel node $t$ such that $G_{t_i}$ belongs to $\Shead$. Recognizing these children can be done with a simple bottom-up traversal of the tree that marks each node whose associated graph is complete (i.e., a parallel node whose children are leaves). Finally, a top-down traversal of the tree allows us to propagate to which class each of the leaf nodes belongs. These operations can be performed in time linear in the number of nodes in the tree, and thus this algorithm runs in linear time.
\end{proof}

We present two lemmas that will be useful to compute an optimal $\interval$-modular partition.

\begin{lemma}[Thinness of the union {\citep[Theorem 14]{tesis-diego}}]
    \label[lemma]{prop:thinness-union}
    Let $G_1$ and $G_2$ be graphs. Then $\thin(G_1 \cup G_2) = \max\{\thin(G_1), \thin(G_2)\}$.
\end{lemma}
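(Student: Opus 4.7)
The plan is to prove the two inequalities separately. For the direction $\thin(G_1 \cup G_2) \geq \max\{\thin(G_1), \thin(G_2)\}$, I would invoke the monotonicity of thinness under induced subgraphs: since $G_1$ and $G_2$ are both induced subgraphs of the disjoint union $G_1 \cup G_2$, restricting an optimal consistent solution of $G_1 \cup G_2$ to either $V(G_1)$ or $V(G_2)$ yields a consistent solution with no more classes for the corresponding part. This monotonicity is immediate from the fact that the consistent-triple condition is preserved when vertices are removed.

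For the reverse inequality, let $k \coloneqq \max\{\thin(G_1), \thin(G_2)\}$ and take optimal consistent solutions $(<_1, S_1)$ and $(<_2, S_2)$ of $G_1$ and $G_2$, padding with empty classes so that both partitions have exactly $k$ classes labelled $1, \dots, k$. I would then construct a combined consistent solution $(<, S)$ for $G_1 \cup G_2$ by placing every vertex of $V(G_1)$ before every vertex of $V(G_2)$ in $<$ (respecting $<_1$ and $<_2$ inside each part), and by defining the $i$-th class of $S$ as the union of the $i$-th classes of $S_1$ and $S_2$.

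To check that $(<, S)$ is consistent, I would case-split each triple $u < v < w$ with $u, v$ sharing a class of $S$ and $(u, w) \in E(G_1 \cup G_2)$ according to which of $V(G_1)$, $V(G_2)$ each vertex lies in. When all three vertices belong to the same $V(G_i)$, consistency of $(<_i, S_i)$ gives $(v, w) \in E(G_i) \subseteq E(G_1 \cup G_2)$. In every remaining case, the ordering convention forces $u$ and $w$ to lie in different parts of the disjoint union, so $(u, w)$ cannot be an edge at all, and the triple is vacuously consistent. I do not anticipate any substantial obstacle: the argument is a routine concatenation of consistent solutions, and the only delicate point is to arrange the linear order so that cross-graph triples never witness an edge.
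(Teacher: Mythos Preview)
Your argument is correct and is exactly the standard proof of this fact. Note, however, that the paper does not actually prove this lemma: it is quoted as \citep[Theorem~14]{tesis-diego} and stated without proof, so there is no ``paper's own proof'' to compare against. Your proposal would serve perfectly well as the omitted argument.
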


\begin{lemma}[Thinness of the join {\citep[Theorem 27]{thinness-of-product-graphs}}]
    \label[lemma]{prop:thinness-join}
    Let $G_1$ and $G_2$ be graphs. If $G_1$ is complete, then $\thin(G_1 \lor G_2) = \thin(G_2)$. If neither $G_1$ nor $G_2$ are complete, then $\thin(G_1 \lor G_2) = \thin(G_1) + \thin(G_2)$.
\end{lemma}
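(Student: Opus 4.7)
The plan is to prove both inequalities in the desired equality separately, handling the two cases of the statement in turn.

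For the case where $G_1$ is complete, observe that $V(G_1)$ is a complete module of $G_1 \lor G_2$. By \cref{prop:contract-complete-module}, contracting this module gives $\thin(G_1 \lor G_2) = \thin(K_1 \lor G_2)$, so it suffices to show that adding a universal vertex to $G_2$ does not change its thinness. For this, I would take an optimal consistent solution $(<, S)$ for $G_2$ and extend it by placing the universal vertex $u$ at the end of the order, assigned to an arbitrary class of $S$. Any triple involving $u$ has $u$ as its maximum element, and the consistency condition for such a triple is vacuously satisfied since $u$ is adjacent to every other vertex; triples not involving $u$ are covered by the original consistent solution for $G_2$. Together with the trivial bound $\thin(K_1 \lor G_2) \ge \thin(G_2)$ (since $G_2$ is an induced subgraph), this yields $\thin(K_1 \lor G_2) = \thin(G_2)$.

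For the case where neither $G_1$ nor $G_2$ is complete, the upper bound $\thin(G_1 \lor G_2) \le \thin(G_1) + \thin(G_2)$ is obtained by concatenation: take optimal consistent solutions $(<_1, S_1)$ and $(<_2, S_2)$ for $G_1$ and $G_2$, place all of $V(G_1)$ before all of $V(G_2)$ following their internal orders, and partition $V(G_1 \lor G_2)$ using the classes of $S_1$ and $S_2$ as disjoint families. The triples to check split according to the sides of their vertices; the only potentially non-trivial case is $u < v < w$ with $u, v \in V(G_1)$ and $w \in V(G_2)$, and this is consistent because both $u$ and $v$ are adjacent to $w$ by definition of the join.

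The main obstacle is the matching lower bound $\thin(G_1 \lor G_2) \ge \thin(G_1) + \thin(G_2)$. The plan is to show that in any consistent solution $(<, S)$ for $G_1 \lor G_2$, no class of $S$ contains vertices from both $V(G_1)$ and $V(G_2)$. Once this ``no mixed class'' property is established, the classes partition into families contained entirely in $V(G_1)$ or entirely in $V(G_2)$; restricting to each side gives consistent solutions for $G_1$ and $G_2$ respectively, and since the two families are disjoint and each must have size at least $\thin(G_1)$ or $\thin(G_2)$, summing gives the claim.

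To establish the no-mixed-class property, I would proceed by contradiction: assume some class $C$ contains both $u \in V(G_1)$ and $v \in V(G_2)$, and pick such a pair that is consecutive in $C$ with $u < v$ (the opposite order is symmetric by swapping the roles of $G_1$ and $G_2$). Applying the consistency condition to the triple $u < v < w$ for each $w \in V(G_2)$ with $w > v$, and using that $u \sim w$ by the join, forces $v \sim w$ in $G_2$, so $v$ is adjacent to every later $V(G_2)$-vertex in $<$. A symmetric observation applies whenever a $V(G_2)$-vertex of $C$ is immediately followed in $C$ by a $V(G_1)$-vertex. Using the non-completeness of $G_2$ to find a non-edge and combining it with the non-completeness of $G_1$ through a careful case analysis on the relative positions of the endpoints of these non-edges with respect to $u$ and $v$, one extracts a triple whose non-edge violates the consistency condition, yielding the desired contradiction. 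This is the most delicate step, likely requiring several subcases depending on whether the non-edge endpoints fall before or after the transition point in $C$, and it is where the hypothesis that both $G_1$ and $G_2$ are non-complete is genuinely exploited.
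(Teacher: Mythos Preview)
The paper does not prove this lemma at all; it is cited verbatim from \citep[Theorem 27]{thinness-of-product-graphs}. So there is no ``paper's own proof'' to compare against, and your attempt is an independent proof.

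Your treatment of the complete case and of the upper bound in the non-complete case is fine. The genuine gap is in the lower bound: the ``no mixed class'' property you set out to prove is \emph{false}, even for optimal consistent solutions. Take $G_1 = G_2 = 2K_1$, so that $G_1 \lor G_2 = C_4$ on vertices $a,b \in V(G_1)$ and $x,y \in V(G_2)$. With the order $a < x < b < y$, the only incompatible pair is $(a,x)$, and the two-class partition $\{a,b,y\},\{x\}$ is consistent and optimal --- yet the first class is mixed. Your sketch of the contradiction argument correctly observes that at a $G_1$-to-$G_2$ transition $u < v$ inside a class, $v$ must be adjacent to every later $G_2$-vertex; but when $v$ is the last $G_2$-vertex in the order (as $y$ is here), this observation is vacuous and no contradiction can be extracted. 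The hoped-for ``careful case analysis'' cannot succeed because the target claim is simply wrong.

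What does work is the irreducible-clique machinery that the paper also imports from the same source (see \cref{prop:irreducible-clique}). For any order $<$ on $V(G_1 \lor G_2)$, restrict to $V(G_i)$ and take an irreducible clique $C_i$ of size $\thin(G_i)$ in $(G_i)_<$. Each $C_i$ remains a clique in $(G_1\lor G_2)_<$, and for $u \in C_1$, $v \in C_2$ with (say) $u < v$, the non-neighbour $w \in V(G_2)$ of $v$ with $w > v$ (guaranteed by irreducibility) is adjacent to $u$ in the join, so $u$ and $v$ are incompatible. Hence $C_1 \cup C_2$ is a clique of size $\thin(G_1)+\thin(G_2)$ in $(G_1\lor G_2)_<$, forcing that many classes in any consistent partition.
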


In particular, \cref{prop:thinness-join} implies that the join of two graphs is an interval graph if and only if one of them is an interval graph, and the other one is a complete graph.
\begin{theorem}
    \label{prop:calculate-interval-module-partition}
    An optimal $\interval$-modular partition of a graph $G$ can be computed in linear time.
\end{theorem}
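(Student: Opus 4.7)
The plan is to mirror the strategy used in \cref{prop:calculate-cluster-module-partition}. I compute the modular decomposition tree $T$ of $G$ in linear time, then traverse $T$ in post-order, producing at each node $t$ an optimal $\interval$-modular partition $S_t$ of the induced subgraph $G_t$, encoded via flags on the tree rather than explicitly materialized, so as to stay within linear total time.

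The combining rules follow from \cref{prop:thinness-union,prop:thinness-join} together with the fact that interval graphs are exactly the graphs of thinness~$1$. At a \textbf{leaf}, $S_t = \{V(G_t)\}$. At a \textbf{parallel} node with children $\mathcal{C}$, the union lemma implies that the disjoint union of all interval-inducing children induces an interval graph, so I place those children in a single class and take the recursively computed partitions of the remaining children. At a \textbf{series} node, iterating \cref{prop:thinness-join} shows that $G_{i_1} \lor \cdots \lor G_{i_r}$ is an interval graph iff every $G_{i_j}$ is interval and at most one is non-complete; accordingly, I merge all complete children together with at most one non-complete interval child (if any) into one class, make each remaining non-complete interval child its own class, and take recursive partitions of the other children. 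At a \textbf{prime} node, either $G_t$ is itself an interval graph --- which can happen, e.g., $G_t = P_4$ --- in which case one class suffices, or I take the disjoint union of the children's recursive partitions.

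Optimality is argued case by case, exactly as in \cref{prop:calculate-cluster-module-partition}, using the standard fact from \citet{modular-decomposition} that at parallel and series nodes the only modules of $G_t$ are unions of entire children or submodules of a single child, while at prime nodes only the latter are proper modules. Given any $\interval$-modular partition $Q_t$ of $G_t$, the classes of $Q_t$ that span more than one child of $t$ must satisfy the same restriction as my head class --- union of complete children plus at most one non-complete interval child at a series node, union of interval children at a parallel node --- so $Q_t$ can be rearranged into $S_t$ without increasing its size. At a prime node, every proper module of $G_t$ lies inside a single child, so $Q_t$ restricts to an $\interval$-modular partition of each child; when $G_t$ itself is interval the single-class partition matches the construction.

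The main obstacle is achieving linear time at prime nodes, where --- unlike in the cluster case --- the induced subgraph $G_t$ may itself be interval, so a genuine interval-ness test is required. My plan is to maintain two flags \textsf{is\_complete}$(t)$ and \textsf{is\_interval}$(t)$ bottom-up: at leaves, parallel, and series nodes they are updated in time proportional to the number of children via the characterizations above, contributing $O(n)$ in total. At a prime node, the test is reduced to checking interval-ness of the quotient $Q_t$ (whose vertex set is in bijection with the children of~$t$) together with a local compatibility check on each child's flags; since each edge of $G$ appears in at most one such quotient, the total work across all prime nodes is $O(n+m)$. A final top-down traversal then propagates class labels down to the leaves and materializes $S_t$ for the root, completing the computation in linear time.
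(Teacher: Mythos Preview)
Your overall structure matches the paper's, and the leaf, parallel, and series cases are handled correctly. The gap is at prime nodes: testing interval-ness of the quotient $Q_t$ (one vertex per child) together with the flags \textsf{is\_complete} and \textsf{is\_interval} on the children is \emph{not} sufficient to decide whether $G_t$ is an interval graph. Take $G_t$ on vertices $\{a,b_1,b_2,c,d\}$ with edges $ab_1,ab_2,b_1c,b_2c,cd$. Its modular decomposition is a single prime node whose children are $\{a\},\{b_1,b_2\},\{c\},\{d\}$; the quotient is $P_4$, which is interval, and every child induces an interval graph. Yet $\{a,b_1,c,b_2\}$ is an induced $C_4$, so $G_t$ is not interval. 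No ``local compatibility check on each child's flags'' can detect this, because the obstruction depends on \emph{where} in $Q_t$ the non-complete child sits (blowing up the simplicial endpoint $a$ of $P_4$ into $\overline{K_2}$ does yield an interval graph).

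The paper closes exactly this gap by invoking \cref{prop:contract-complete-module,prop:contract-interval-module}: instead of passing to the quotient, it builds a reduced graph $G'_t$ in which each complete child is contracted to a single vertex and each non-complete interval child is replaced by \emph{two nonadjacent} vertices. Those lemmas guarantee $\thin(G'_t)=\thin(G_t)$, so $G'_t$ is interval iff $G_t$ is, and $|V(G'_t)|$ is at most twice the number of children of $t$, keeping the total interval-recognition work over all prime nodes linear. Your proposal is missing this two-vertex replacement idea; without it the prime-node test is simply wrong.
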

\begin{proof}
\newcommand{\Stail}{\mathsf{tail}}
    The proof follows the structure of the one for \cref{prop:calculate-cluster-module-partition}. We compute the modular decomposition tree $T$ of $G$. Then, for each node $G_t \in T$ (which is an induced subgraph of $G$), we compute an optimal $\interval$-modular partition $S_t$ of the vertices of $G_t$ depending on the type of node $G_t$. Let $\mathcal{C} \coloneqq \{G_1, \dots, G_h\}$ be the children of $G_t$, if there are any, and let $S_1,\dots, S_h$ be their respective optimal $\interval$-modular partitions.
    \begin{itemize}
        \item \textbf{Leaf}: The only class in $S_t$ is formed by the only vertex in $G_t$. This vertex is trivially an interval module, and the partition is unique, thus optimal.
        \item \textbf{Parallel}: We build $S_t$ by taking the union of the children that induce interval graphs as class $C_1$, and taking the union of the partitions of all the other children. More formally, if $\mathcal{I}$ is the subset of the children $\mathcal{C}$ that induce interval graphs: 
        \begin{align*}
            C_1 &\coloneqq \bigcup_{G_i \in \mathcal{I}}V(G_i)\\
            S_t &\coloneqq \{C_1\} \cup \bigcup_{G_i \in \mathcal{C}\setminus\mathcal{I}}S_i.
        \end{align*}
        
        Note that $S_t$ is in fact an $\interval$-modular partition:
        \begin{itemize}
            \item All the classes induce interval graphs, as $C_1$ is a union of interval graphs (and thus by \cref{prop:thinness-union} it is an interval graph), and all the other classes belong to $\interval$-modular partitions of the children of $G_t$, and thus are also interval graphs.
            \item Each class is a module, a $C_1$ is a union of modules, and all the other classes belong to $\interval$-modular partitions of induced subgraphs.
        \end{itemize}

        Let us see that $\abs{S_t}$ is minimum. Again, as shown by \citet{modular-decomposition}, the only modules of $G_t$ are unions of children of $G_t$, or submodules of children of $G_t$. Thus, in every $\interval$-modular partition $S'_t$ of $G_t$,
        \begin{enumerate}
            \item\label{item:union} by \cref{prop:thinness-union}, each class that induces a union of disjoint graphs must have interval graphs as connected components; and
            \item\label{item:submodules} every other class must be a submodule of one of the children of $G_t$.
        \end{enumerate}
        Partition $S_t$ minimizes both the number of classes of type \ref{item:union} and the number of classes of type \ref{item:submodules}, and is therefore of minimum size.
        \item \textbf{Series}: We define $S_t$ as $\{C_1\} \cup \Stail$, where
        \begin{itemize}
            \item $C_1$ contains all vertices belonging to complete children of $G_t$, and all vertices of one arbitrary child that is an interval graph if it exists; and
            \item $\Stail$ is the union of optimal partitions for the rest of the children of $G_t$.
        \end{itemize}

        If $C_1$ is empty, we just define $S_t \coloneqq \Stail$.

        By \cref{prop:thinness-join}, class $C_1$ induces an interval graph. Every class in $\Stail$ also induces an interval graph, as they belong to $\interval$-modular partitions of the children of $G_t$. On the other hand, notice that all classes induce modules of $G_t$. Thus, $S_t$ is an $\interval$-modular partition.

        Take another $\interval$-modular partition $S'_t$ of $G_t$. We will see that $\abs{S'_t} \geq \abs{S_t}$.
        
        As said earlier, the modules of $G_t$ are unions of children of $G_t$, or submodules of children of $G_t$. For a module that is a join of two graphs to be an interval graph, it must be the join of a complete graph and an interval graph by \cref{prop:thinness-join}. Thus, if two or more children of $G_t$ are joined in a class of $S'_t$, at most one of them is not a complete graph, and it must also be an interval graph. If there is more than one class in $S'_t$ that induces a join of graphs, we can change the classes of all vertices belonging to children that induce complete graphs to just one of those classes, and have an $\interval$-modular partition with no more classes than $S'_t$.

        Furthermore, $S'_t$ is forced to contain $\interval$-modular partitions of the children of $G_t$ that do not induce interval graphs. Therefore, $S_t$ contains no more classes than $S'_t$.

        \item \textbf{Prime}: If $G_t$ is an interval graph, define $S_t \coloneqq \{V(G_t)\}$. Otherwise, $S_t \coloneqq \bigcup_{G_i \in \mathcal{C}}S_i$.

        It is clear that, in both cases, $S_t$ is an $\interval$-modular partition. Also, as shown by \citet{modular-decomposition}, the only proper submodules of a prime node are the modules of children of the node. Thus, if $G_t$ is not an interval graph, the partition must be the union of $\interval$-modular partitions of the children of $G_t$, and therefore $S_t$ is optimal.

        It remains to show how to determine if $G_t$ is an interval graph or not. In principle, one could run a linear-time algorithm that decides if a graph is interval or not on $G_t$, for example the one by \citet{B-L-interval}. The issue is that running a linear time algorithm for each of the prime nodes of the modular decomposition yields a runtime complexity of $\O(n^2)$, which is more than what we want.

        To remedy this, we will utilize the lemmas presented in \cref{sec:reducing-modules} to reduce the number of vertices in $G_t$ without changing the thinness. First, if any one of the children of $G_t$ is not an interval graph, then, as the thinness is a hereditary property, graph $G_t$ is also not an interval graph. Otherwise, we create a graph $G'_t$ by contracting every child that is a complete graph into a single vertex, and contracting every other child that is an interval graph into two nonadjacent vertices. By \cref{prop:contract-complete-module,prop:contract-interval-module}, graph $G'_t$ has the same thinness as $G_t$. In particular, $G'_t$ is an interval graph if and only if $G_t$ is an interval graph. Thus, we finish by running a linear-time algorithm to decide if $G'_t$ is an interval graph or not.

        As the number of nodes in a modular decomposition tree is $\O(n)$, and the number of vertices in $G'_t$ is at most twice the number of children of $G_t$, this algorithm takes linear time on the size of $G$, taking into account that we need to store a flag for each node that signals if a node is a complete graph or not.\qedhere
     \end{itemize}
\end{proof}

\section{Parameterizations for computing the thinness}\label{sec:parameterizations}

We are now ready to present the kernels for {\sc Thinness} parameterized by the $\interval$-modular cardinality, neighborhood diversity, twin-cover, and vertex cover.

\begin{theorem}
    \label{prop:interval-modular-cardinality-parametrization}
    For every graph $G$, $\thin(G) \leq 2\cdot\intervalmc(G)$. In addition, {\sc Thinness} admits a linear kernel when parameterized by the $\interval$-modular cardinality of the input graph.
\end{theorem}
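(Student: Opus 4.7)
The plan is to combine the two contraction lemmas of \cref{sec:reducing-modules} with the linear-time computation of an $\interval$-modular partition from \cref{prop:calculate-interval-module-partition}. For the inequality $\thin(G) \le 2\cdot\intervalmc(G)$, I would fix an optimal $\interval$-modular partition $\{M_1,\ldots,M_k\}$ of $G$, with $k = \intervalmc(G)$. For each module $M_i$ I would check whether $G[M_i]$ is complete: if so, I would contract $M_i$ to a single vertex via \cref{prop:contract-complete-module}; otherwise, $G[M_i]$ is a non-complete interval graph and \cref{prop:contract-interval-module} lets me delete all vertices of $M_i$ except for two nonadjacent ones. Both operations preserve $\thin$ exactly, so after processing every module I obtain a graph $G'$ with $\thin(G') = \thin(G)$ and $\abs{V(G')} \le 2k$. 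Since every graph on $n$ vertices is trivially $n$-thin (place each vertex in its own class under any order), I would then conclude $\thin(G) = \thin(G') \le 2k$.

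For the kernelization, this same construction is essentially the algorithm. On input $(G,t)$ with parameter $k$, I would first compute an optimal $\interval$-modular partition in linear time using \cref{prop:calculate-interval-module-partition}, and then apply the per-module reduction above, which only requires testing completeness of $G[M_i]$ and, when appropriate, locating two nonadjacent vertices; both can be done in time proportional to $\abs{M_i} + \abs{E(G[M_i])}$, so the whole reduction runs in linear time. To handle the threshold, if $t \ge 2k$ the first half of the theorem already certifies that $(G,t)$ is a YES-instance, so I would output a trivial YES-instance of constant size; otherwise $t < 2k$, and the pair $(G',t)$ is an equivalent instance whose total encoding size is $\O(k)$.

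The main obstacle is a bookkeeping one rather than a conceptual one: the work lies in showing that \cref{prop:contract-complete-module,prop:contract-interval-module} compose safely across all $k$ modules without any loss, and that the final parameter and the size of the reduced graph are both $\O(k)$. Because both lemmas preserve thinness exactly (not just up to a bound), the iterative application is sound, and the $2k$ bound on $\abs{V(G')}$ together with the case split on $t$ described above yields the claimed linear kernel.
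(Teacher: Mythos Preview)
Your proposal is correct and follows essentially the same approach as the paper: compute an optimal $\interval$-modular partition via \cref{prop:calculate-interval-module-partition}, then apply \cref{prop:contract-complete-module} or \cref{prop:contract-interval-module} to each module to obtain $G'$ with $\thin(G')=\thin(G)$ and $\abs{V(G')}\le 2k$. Your added remarks on the safe iterative composition of the two lemmas across the modules and on the case split $t\ge 2k$ versus $t<2k$ are details the paper leaves implicit, but they do not change the argument.
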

\begin{proof}
    We reduce an instance $(G, t, k)$ of \textsc{Thinness} parameterized by $\interval$-modular cardinality (where $t$ is the thinness parameter and $k$ is the $\interval$-modular cardinality of $G$) to an instance $(G', t', k')$ in time $n^{\mathcal{O}(1)}$ such that $t' = t$, $k' = k$, and $\abs{V(G')} \leq 2k$.

    We first compute an optimal $\interval$-modular partition $P$ of $G$ in linear time by \cref{prop:calculate-interval-module-partition}. Now, we contract each part of $P$ that induces a complete graph into a single vertex each, and contract every other part of $P$ into two nonadjacent vertices each, obtaining graph $G'$. By \cref{prop:contract-complete-module} and \cref{prop:contract-interval-module} we have that $\thin(G') = \thin(G)$.
    
    We now have a graph $G'$ such that $\abs{V(G')} \leq 2k$ and $\thin(G') = \thin(G)$ which serves as our kernel, defining $k'\coloneqq k$. Note that all operations we performed on the graph can be done in linear time, and so the reduction is polynomial.
    This also shows that $\thin(G) \leq 2\cdot\intervalmc(G)$, as desired.
\end{proof}

Note that every cluster graph is also an interval graph, and thus $\intervalmc(G) \leq \clustermc(G)$ for every graph $G$. Therefore, although the upper bounds we present on \cref{prop:neighborhood-partition-is-cluster-modular-partition,lemma:twin-cover-hcupn} are on the $\cluster$-modular cardinality instead of the $\interval$-modular cardinality, the results transfer directly to the latter parameter.

The following lemmas show that the neighborhood partition of a graph is also a $\cluster$-modular partition.

\begin{lemma}[{\citep[Theorem 7.1]{neighborhood-diversity}}]
    \label[lemma]{lemma:neighborhood-partition-clique-or-independent}
    Let $G$ be a graph with neighborhood diversity $k$, and let $V_1, \dots, V_k$ be a neighborhood partition of width $k$ of the vertices of $G$. Then each $V_i$ induces either a clique or an independent set.
\end{lemma}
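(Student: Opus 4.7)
The plan is to argue directly from the definition: two vertices $v, v'$ lie in the same part $V_i$ precisely when $N(v) \setminus \{v'\} = N(v') \setminus \{v\}$. Since the case $|V_i| \leq 1$ is trivial, I would assume $V_i$ contains at least two vertices and fix two distinct $v, v' \in V_i$. The whole argument then reduces to showing that the status of $\{v,v'\}$ (adjacent or not) ``propagates'' to every other pair inside $V_i$, forcing $G[V_i]$ to be a clique or an independent set respectively.

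First, I would handle the adjacent case: suppose $v$ and $v'$ are adjacent, and pick any third $w \in V_i$. Since $v$ and $w$ share a neighborhood type, $N(v) \setminus \{w\} = N(w) \setminus \{v\}$; combined with $v' \in N(v)$ and $v' \neq w$, this forces $v' \in N(w)$. A symmetric use of the equality for the pair $(v', w)$ gives $v \in N(w)$. So every vertex of $V_i$ is adjacent to both $v$ and $v'$. To upgrade this to ``$V_i$ is a clique'', I would iterate: for any two $w, w' \in V_i$, apply the same reasoning to the already-adjacent pair $(v, w)$ in place of $(v, v')$ to conclude that $w$ and $w'$ are adjacent.

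The non-adjacent case is handled by a mirror argument: if $v$ and $v'$ are non-adjacent, then $v' \notin N(v) \setminus \{w\} = N(w) \setminus \{v\}$, so $v' \notin N(w)$; symmetrically $v \notin N(w)$; and iterating as above yields that $V_i$ is an independent set. Since $V_i$ cannot simultaneously contain an adjacent pair and a non-adjacent pair (the two propagation arguments would contradict each other at the very first step), exactly one of the two cases occurs and the lemma follows.

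There is essentially no obstacle: the only routine care needed is to verify that the element being removed in the neighborhood-type equality is different from the test vertex whose adjacency is being transported, which is automatic because the third vertex $w$ is always chosen distinct from both $v$ and $v'$. The statement is really a direct unfolding of the definition of neighborhood type, which is why it appears as folklore in the neighborhood-diversity literature and is attributed here to \citet{neighborhood-diversity}.
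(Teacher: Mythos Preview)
Your proof is correct: the propagation argument from the neighborhood-type equality is exactly the natural unfolding of the definition, and the care you take with the excluded vertex in $N(v)\setminus\{w\}$ is the only point that needs checking. Note, however, that the paper does not supply its own proof of this lemma---it is quoted verbatim from \citep[Theorem~7.1]{neighborhood-diversity} and used as a black box---so there is nothing in the paper to compare your argument against; your write-up simply fills in what the paper leaves to the citation.
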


\citet{modular-partitions-MFCS23} state in their article that the $\cluster$-modular cardinality generalizes the neighborhood diversity. We present a proof of this fact here for completeness.

\begin{lemma}
    \label[lemma]{prop:neighborhood-partition-is-cluster-modular-partition}
    For every graph $G$, $\clustermc(G) \leq \nd(G)$. Moreover, a neighborhood partition of the vertices of $G$ is also a $\cluster$-modular partition of $G$.
\end{lemma}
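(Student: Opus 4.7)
The plan is to directly verify that any neighborhood partition of $G$ satisfies the two defining conditions of a $\cluster$-modular partition, namely that each part is a module and each part induces a cluster graph. Once this is shown, the inequality $\clustermc(G) \leq \nd(G)$ follows immediately by taking a neighborhood partition of minimum width $\nd(G)$ and observing that it is a valid candidate $\cluster$-modular partition, whose cardinality thus upper-bounds $\clustermc(G)$.

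First, I would fix a neighborhood partition $V_1, \dots, V_k$ of $V(G)$ with $k = \nd(G)$. To see that each $V_i$ is a module, I would pick any vertex $w \notin V_i$ and any two vertices $u, v \in V_i$. Since $u$ and $v$ have the same neighborhood type, the symmetric difference of $N(u)$ and $N(v)$ is contained in $\{u,v\}$. Because $w$ lies outside this set, $w \in N(u)$ if and only if $w \in N(v)$. Thus $w$ is either adjacent to every vertex of $V_i$ or to none, which is exactly the definition of a module.

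Next, I would invoke \cref{lemma:neighborhood-partition-clique-or-independent} to conclude that each $V_i$ induces either a clique or an independent set. In both cases $G[V_i]$ is a disjoint union of complete graphs, hence belongs to the class $\cluster$. Combining this with the previous step, each $V_i$ is a $\cluster$-module, so $\{V_1, \dots, V_k\}$ is a $\cluster$-modular partition of $G$.

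Finally, since $\clustermc(G)$ is defined as the minimum cardinality over all $\cluster$-modular partitions of $G$, and we have exhibited one of size $\nd(G)$, we obtain $\clustermc(G) \leq \nd(G)$. There is no real obstacle here; the only subtle point is the module check, but it is a direct unpacking of the definition of neighborhood type together with \cref{lemma:neighborhood-partition-clique-or-independent}, both of which are available from the preliminaries.
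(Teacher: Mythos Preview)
Your proposal is correct and follows essentially the same approach as the paper: verify that each class of a neighborhood partition is a module (by unpacking the definition of neighborhood type) and that each class induces a cluster (via \cref{lemma:neighborhood-partition-clique-or-independent}), then conclude the inequality. The only cosmetic difference is that you spell out the module check in slightly more detail and swap the order of the two verifications.
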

\begin{proof}
    Let $V_1,\dots,V_k$ be a neighborhood partition of width $k$ of the vertices of $G$. By \cref{lemma:neighborhood-partition-clique-or-independent}, each $V_i$ induces either a clique or an independent set. In either case, $G[V_i]$ is a cluster graph. Also, each $V_i$ is a module, since all the vertices in $V_i$ have the same neighborhood type, and so must all have the same neighbors outside $V_i$. Therefore, $V_1,\dots, V_k$ is a valid $\cluster$-modular partition of width $k$ of $G$, and so $\clustermc(G) \leq k$.
\end{proof}

On the other hand, the neighborhood diversity of a graph can be arbitrarily larger than the $\cluster$-modular cardinality, as the graph $cK_2$ has $\cluster$-modular cardinality 1 but neighborhood diversity $c$.

Combining \cref{prop:interval-modular-cardinality-parametrization} with \cref{prop:neighborhood-partition-is-cluster-modular-partition} we obtain the following result.

\begin{theorem}
\label{prop:neighborhood-diversity-parametrization}
    {\sc Thinness} is \FPT when parameterized by the neighborhood diversity of the input graph. Moreover, the problem admits a linear kernel.
\end{theorem}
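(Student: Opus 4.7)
The plan is to obtain \cref{prop:neighborhood-diversity-parametrization} as a direct corollary of the kernel of \cref{prop:interval-modular-cardinality-parametrization} together with the bound from \cref{prop:neighborhood-partition-is-cluster-modular-partition}. The key observation is the chain of inequalities
\[
\intervalmc(G) \;\leq\; \clustermc(G) \;\leq\; \nd(G),
\]
where the first inequality holds because every cluster graph is an interval graph (so any $\cluster$-modular partition is an $\interval$-modular partition), and the second is exactly \cref{prop:neighborhood-partition-is-cluster-modular-partition}. Consequently, any bound on $\nd(G)$ transfers to a bound on $\intervalmc(G)$.

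Given an instance $(G, t)$ of \textsc{Thinness} with $\nd(G) \leq k$, I would simply feed $G$ to the kernelization algorithm underlying \cref{prop:interval-modular-cardinality-parametrization}. That algorithm first computes, in linear time via \cref{prop:calculate-interval-module-partition}, an optimal $\interval$-modular partition $P$ of $G$; by the inequality above, $|P| = \intervalmc(G) \leq k$. It then contracts every complete part of $P$ into a single vertex and every non-complete part into a pair of non-adjacent vertices, producing a graph $G'$. By \cref{prop:contract-complete-module,prop:contract-interval-module}, $\thin(G') = \thin(G)$, and by construction $|V(G')| \leq 2\cdot\intervalmc(G) \leq 2k$. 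The output $(G', t)$ is therefore an equivalent instance of size at most $2k$ produced in polynomial (indeed linear) time, which is exactly a linear kernel parameterized by $\nd(G)$. Fixed-parameter tractability then follows from the standard equivalence between admitting a kernel and being \FPT.

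There is essentially no technical obstacle: the entire proof is an inequality chain combined with a black-box invocation of a previously proved kernel. The only point worth checking is that the kernelization of \cref{prop:interval-modular-cardinality-parametrization} computes its own $\interval$-modular partition internally and does not require the neighborhood partition as input, so it is enough to know that $\nd(G)$ upper-bounds $\intervalmc(G)$ without actually converting a neighborhood partition into an interval-modular one.
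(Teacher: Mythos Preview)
Your proposal is correct and follows exactly the paper's approach: the theorem is stated there as an immediate consequence of combining \cref{prop:interval-modular-cardinality-parametrization} with \cref{prop:neighborhood-partition-is-cluster-modular-partition}, using the observation that $\intervalmc(G) \leq \clustermc(G) \leq \nd(G)$. There is nothing to add.
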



The following lemma shows that the $\cluster$-modular cardinality of a graph generalizes the twin-cover number.

\begin{lemma}
    \label[lemma]{lemma:twin-cover-hcupn}
    For every graph $G$ with twin-cover number $k$, $\cm(G) \leq 2^k + k$.
\end{lemma}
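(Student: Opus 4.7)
The plan is to exhibit, given a twin-cover $X$ of $G$ with $|X| = k$, a $\cluster$-modular partition of $G$ with at most $2^k + k$ parts. The twin-cover structure forces every edge outside $X$ to be an edge between true twins (vertices with identical closed neighborhoods), so the non-$X$ vertices already look cluster-like; the main task is to group them into modules according to their interaction with $X$.

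First, for each subset $S \subseteq X$ define
\[
V_S \;\coloneqq\; \{\, v \in V(G) \setminus X : N(v) \cap X = S \,\}.
\]
There are at most $2^k$ such sets, and together they partition $V(G) \setminus X$. I would then verify, in two short cases, that every non-empty $V_S$ is a module of $G$: (i) a vertex $x \in X$ is adjacent to all of $V_S$ if $x \in S$ and to none of $V_S$ if $x \notin S$, by definition; (ii) a vertex $u \in V_{S'}$ with $S' \neq S$ cannot be adjacent to any $v \in V_S$, because the twin-cover property would then force $N[u]=N[v]$, hence $S'=S$, contradicting the choice of $S'$.

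Next I would show that $G[V_S]$ is a cluster graph. For $u,v \in V_S$, the twin-cover property says that any edge $(u,v)$ implies $N[u]=N[v]$. Twinship is an equivalence relation on $V_S$, and within any equivalence class every two vertices are adjacent (since $u \in N[u]=N[v]$ for twins $u,v$), while distinct classes are non-adjacent. Hence $G[V_S]$ is a disjoint union of cliques, i.e.\ a $\cluster$-module.

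Finally, I would take the partition consisting of the (at most $2^k$) non-empty sets $V_S$ together with the $k$ singletons $\{x\}$ for $x \in X$; singletons are trivially modules inducing $\cluster$ graphs. This yields a valid $\cluster$-modular partition of $G$ with at most $2^k + k$ parts, proving $\cm(G) \leq 2^k + k$. No step looks genuinely hard; the only subtlety is the case analysis showing that $V_{S'}$ and $V_S$ are mutually non-adjacent for $S \neq S'$, which is exactly where the twin-cover hypothesis is used.
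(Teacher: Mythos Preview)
Your proposal is correct and follows essentially the same approach as the paper: partition $V(G)\setminus X$ according to neighborhoods in $X$, add the $k$ singletons from $X$, and use the twin-cover hypothesis to show that distinct $V_S$'s are mutually non-adjacent (hence modules) and that each $V_S$ induces a disjoint union of twin-cliques (hence a cluster). The arguments and the partition are identical to the paper's.
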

\begin{proof}
   Let $X$ be a twin-cover of $G$ of size $k$. We construct a partition $P$ with parts $P_1, \dots, P_{2^k + k}$ of the vertices of $G$ and prove that it is a $\cluster$-modular partition.
    
    We first put each vertex $x_i \in X$ in the set $P_{2^k+i}$. Then, we partition the vertices not in $X$ by the set of vertices in $X$ that they are adjacent to. As there are $k$ vertices in $X$, there are $2^k$ possible subsets of $X$. We assign these sets to $P_1, \dots, P_{2^k}$. 
    
    Sets $P_{2^k+1},\dots,P_{2^k+k}$ have only one vertex, and so are cluster modules.
    It remains to show that sets $P_1,\dots,P_{2^k}$ are cluster modules.

    We first prove that they are modules. By definition, every vertex in $X$ is either adjacent to every vertex in $P_i$ or to none. 
    
    We will now show that if $v \in P_i$ and $w \in P_j$ with $1 \leq i < j \leq 2^k$, there $v$ and $w$ are not adjacent. By definition of the first $2^k$ sets, the neighbors of $v$ and $w$ in $X$ are not the same, as they belong to different sets. This means, in particular, that $v$ and $w$ are not twins, and then, as none of them belongs to the twin-cover $X$, they cannot be adjacent.

    Now, we prove that sets $P_1,\dots,P_{2^k}$ are clusters. To see that, it is sufficient to show that if $a,b \in P_i$ are adjacent, then $N[a] = N[b]$. As neither $a$ nor $b$ are in $X$, they are twins, and so $N[a] = N[b]$.
\end{proof}

The twin-cover number, similarly to the neighborhood diversity, can be arbitrarily larger than the $\cluster$-modular cardinality. The class of complete bipartite graphs $K_{n,n}$ serves as an example here, as those graphs have $\cluster$-modular cardinality two and twin-cover number $n$.

\begin{theorem}
    \label{prop:twin-cover-parametrization}
    {\sc Thinness} is \FPT when parameterized by the size of a minimum twin-cover of the input graph.
\end{theorem}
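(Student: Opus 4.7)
The plan is to chain together Lemma~\ref{lemma:twin-cover-hcupn} with the kernelization of Theorem~\ref{prop:interval-modular-cardinality-parametrization}, exploiting the fact that every cluster graph is an interval graph. Concretely, if $G$ has twin-cover number $k$, then by Lemma~\ref{lemma:twin-cover-hcupn} we have $\clustermc(G) \leq 2^k + k$, and hence $\intervalmc(G) \leq \clustermc(G) \leq 2^k + k$. So a graph with small twin-cover automatically has small $\interval$-modular cardinality, bringing us into the regime where Theorem~\ref{prop:interval-modular-cardinality-parametrization} applies.

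The algorithm I would propose first applies the kernelization from Theorem~\ref{prop:interval-modular-cardinality-parametrization} to $(G,t)$. This runs in polynomial time and requires no knowledge of a twin-cover of $G$: it simply computes an optimal $\interval$-modular partition via Theorem~\ref{prop:calculate-interval-module-partition}, contracts each part (a complete part to one vertex and any other part to two nonadjacent vertices), and outputs a graph $G'$ with $\thin(G') = \thin(G)$ and $|V(G')| \leq 2\cdot \intervalmc(G)$. The bound from the previous paragraph then guarantees $|V(G')| \leq 2(2^k + k)$, which depends only on $k$.

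Having reduced the problem to an instance whose size depends only on $k$, it suffices to solve {\sc Thinness} on $G'$ by brute force, for instance by enumerating all total orders on $V(G')$ together with all partitions of $V(G')$ into at most $|V(G')|$ classes and checking the consistency condition directly from the definition. This step takes time depending only on $|V(G')|$, hence only on $k$, and combined with the polynomial-time kernelization yields the desired $f(k)\cdot n^{\mathcal{O}(1)}$ running time.

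I do not foresee a real obstacle here: essentially all the work has been done in establishing Lemma~\ref{lemma:twin-cover-hcupn} and Theorem~\ref{prop:interval-modular-cardinality-parametrization}, and the result follows by composition. The only subtlety worth flagging explicitly in the write-up is that we never have to compute a twin-cover of $G$: the size bound on the kernel depends on $\intervalmc(G)$, which is in turn bounded by a function of the (possibly unknown) twin-cover number, so the FPT running time is in terms of $\tc(G)$ even though this quantity is not used algorithmically.
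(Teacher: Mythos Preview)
Your proposal is correct and matches the paper's own proof almost exactly: the paper invokes Lemma~\ref{lemma:twin-cover-hcupn} to bound $\cm(G)\le 2^k+k$ and then applies the kernel of Theorem~\ref{prop:interval-modular-cardinality-parametrization} to obtain a kernel of size $\O(2^k)$. Your additional remarks (that $\intervalmc(G)\le\clustermc(G)$, the brute-force step, and that no twin-cover needs to be computed) are fine elaborations but not departures from the paper's argument.
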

\begin{proof}
    Let $G$ be a graph with twin-cover number $k$. By \cref{lemma:twin-cover-hcupn}, $\cm(G) \leq 2^k + k$. We thus use the algorithm from \cref{prop:interval-modular-cardinality-parametrization} to obtain a kernel of size $\O(2^k)$.
\end{proof}

Note that the kernel presented above is not polynomial, as the bound on the $\cluster$-modular cardinality given a twin-cover of size $k$ is $\O(2^k)$.

As every vertex cover is also a twin-cover, we have the following corollary.

\begin{corollary}
\label{prop:vertex-cover-parametrization}
    {\sc Thinness} is \FPT when parameterized by the size of a minimum vertex cover of the input graph.
\end{corollary}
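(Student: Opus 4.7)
The plan is to derive this corollary directly from \cref{prop:twin-cover-parametrization} by showing that the twin-cover number is always a lower bound on the vertex cover number. Concretely, I would begin by recalling the definition of a twin-cover: a set $X \subseteq V(G)$ such that every edge $(a,b) \in E(G)$ either has an endpoint in $X$, or satisfies $N[a] = N[b]$. The key observation is that the first disjunct alone already characterizes a vertex cover.

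Thus, if $X$ is a vertex cover of $G$, then for every edge $(a,b) \in E(G)$ we have $a \in X$ or $b \in X$, and so trivially $X$ is also a twin-cover of $G$. Taking $X$ to be a minimum vertex cover yields $\tc(G) \leq \vc(G)$ for every graph $G$.

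To conclude, let $k = \vc(G)$ be the value of the parameter on the input. A minimum vertex cover can be computed in \FPT time parameterized by $k$ (by standard branching), and by the observation above it is also a twin-cover of size at most $k$. Feeding this into the algorithm of \cref{prop:twin-cover-parametrization} (which, via \cref{lemma:twin-cover-hcupn} and \cref{prop:interval-modular-cardinality-parametrization}, reduces the instance to one of size $\mathcal{O}(2^k)$ and then solves it) gives an \FPT algorithm parameterized by $\vc(G)$. There is no real obstacle here, as the argument is just a monotonicity observation plus invocation of the previous theorem; the only point worth stating carefully is that every vertex cover vacuously fulfills the twin-cover definition, so no reasoning about twins is actually needed.
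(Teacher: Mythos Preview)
Your proposal is correct and follows essentially the same approach as the paper, which simply notes that every vertex cover is a twin-cover and invokes \cref{prop:twin-cover-parametrization}. One minor remark: you do not actually need to compute a vertex cover (or a twin-cover) explicitly, since the algorithm behind \cref{prop:twin-cover-parametrization} computes an optimal $\interval$-modular partition directly via \cref{prop:calculate-interval-module-partition}; the twin-cover bound is used only to certify that this partition has size $\mathcal{O}(2^k)$.
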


\section{Parameterizations for computing the simultaneous interval number}
\label{sec:sim-int}

We prove here the bounds shown in \cref{fig:diagram}, as well as some algorithmic results involving the computation of the simultaneous interval number.
We first formally define the problem. 

\defproblem{Simultaneous Interval Number}{A graph $G$ and an integer $d \geq 0$.}{Is $\si(G) \leq d$? I.e., Does $G$ admit a $d$-simultaneous interval representation?}

It was proved by \citet{Milanic-sim} that $\si(G)=0$ if and only if the graph $G$ is edgeless and that $\si(G)\leq 1$  if and only if $G$ is an interval graph. Indeed, interval graphs can be represented by their own interval representation and labeling all vertices with the set $\{1\}$.

We prove here results about modules and simultaneous interval representations.
\begin{lemma}
\label[lemma]{prop:si-contract-complete-module}
    Let $H$ be a module of a graph $G$ having neighbors outside $H$, and let $\contract{G}{H}$ be the graph obtained by contracting $H$ into a vertex. If $H$ is complete, then $\si(G) = \si(\contract{G}{H})$. Moreover, $G$ admits an optimal simultaneous interval representation such that all the intervals and label sets representing vertices of $H$ are identical. 
\end{lemma}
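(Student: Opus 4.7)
The plan is to show the two inequalities $\si(\contract{G}{H}) \leq \si(G)$ and $\si(G) \leq \si(\contract{G}{H})$ separately, with the second direction also yielding the ``moreover'' clause by construction. The key structural fact I will use throughout is that, because $H$ is a module, a vertex $v \notin H$ is adjacent in $G$ to all vertices of $H$ or to none; in $\contract{G}{H}$, such a $v$ is adjacent to the new contracted vertex $h^*$ exactly in the former case.

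For the upper bound $\si(\contract{G}{H}) \leq \si(G)$, I take an optimal $d$-simultaneous interval representation of $G$, pick an arbitrary $h \in H$, and assign the interval $I_h$ and label set $L_h$ of $h$ to the contracted vertex $h^*$, keeping the data of the vertices outside $H$ unchanged. Adjacencies among vertices outside $H$ are clearly preserved. For $v \notin H$, the intervals and labels of $v$ and $h^*$ intersect iff those of $v$ and $h$ do in $G$, which by the module property is iff $v$ is adjacent to the whole of $H$ in $G$, and this is exactly the definition of adjacency between $v$ and $h^*$ in $\contract{G}{H}$.

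For the reverse direction, I start from an optimal $d$-simultaneous interval representation $\mathcal{R}$ of $\contract{G}{H}$ and let $(I^*, L^*)$ be the pair assigned to $h^*$. I build a representation of $G$ by keeping the data of every vertex outside $H$ and assigning the same pair $(I^*, L^*)$ to every vertex of $H$. By construction, all intervals and label sets of vertices of $H$ coincide, which gives the ``moreover'' part. Checking adjacencies: vertices of $H$ are pairwise adjacent in this representation iff $L^* \neq \emptyset$ (and $I^* \neq \emptyset$, which holds by convention), which matches the fact that $G[H]$ is complete; and the adjacencies between a vertex of $H$ and one outside $H$, or among vertices outside $H$, reduce to the adjacencies in $\contract{G}{H}$ exactly as in the previous paragraph.

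The main (minor) obstacle is justifying $L^* \neq \emptyset$, since if $|H| \geq 2$ this is required to make the vertices of $H$ pairwise adjacent in the constructed representation. This is precisely where the hypothesis that $H$ has at least one neighbor outside itself is used: such a neighbor forces $h^*$ to have a neighbor in $\contract{G}{H}$, and hence $L^*$ must be nonempty in $\mathcal{R}$. Without this hypothesis one could have, for instance, $G = K_n$ with $\si(G) = 1$ while $\contract{G}{V(G)}$ is a single vertex of simultaneous interval number $0$.
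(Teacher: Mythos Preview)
Your proof is correct and follows essentially the same approach as the paper. The paper's proof is more terse: it only spells out the direction $\si(G) \leq \si(\contract{G}{H})$ (exactly as you do, by copying the interval and label set of the contracted vertex onto every vertex of $H$, noting the label set is nonempty because $H$ has outside neighbors), while the inequality $\si(\contract{G}{H}) \leq \si(G)$ is left implicit since $\contract{G}{H}$ is isomorphic to the induced subgraph of $G$ obtained by keeping a single vertex of $H$---which is precisely what your first paragraph makes explicit.
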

\begin{proof}
    Suppose $H$ was contracted to a vertex $v$, and let us consider a $d$-simultaneous interval representation of $\contract{G}{H}$. We can extend it to a $d$-simultaneous interval representation of $G$ by representing every other vertex of $H$ with the same interval and the same label set as $v$, which is not empty because $v$ has neighbors in $\contract{G}{H}$.
\end{proof}

We can generalize this to the following property.
\begin{lemma}
\label[lemma]{prop:si-module-pairwise-intersecting}
    Let $H$ be a module of a graph $G$. If $G$ admits a $d$-simultaneous interval representation such that the intervals representing vertices of $H$ are pairwise intersecting, then $G$ admits a $d$-simultaneous interval representation such that the intervals representing vertices of $H$ are identical. In this case, the label sets have to be disjoint whenever two vertices are non-adjacent and intersecting otherwise. 
\end{lemma}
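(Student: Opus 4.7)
The plan is to exploit the Helly property of intervals on the line: any pairwise intersecting family of intervals shares a common point. Starting from the given $d$-simultaneous interval representation in which the intervals $I_v$ ($v \in H$) pairwise intersect, I would define
\[
I^{*} \coloneqq \bigcap_{v \in H} I_v,
\]
which is a nonempty interval by Helly's theorem. The new representation is obtained by replacing $I_v$ with $I^{*}$ for every $v \in H$ and leaving every other interval, as well as every label set, untouched.

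Next I would verify that the adjacency relation of $G$ is preserved. For a pair $u,v \in H$, the original intervals intersected by hypothesis, and in the new representation both vertices share the interval $I^{*}$, so adjacency in both representations reduces to the condition $L_u \cap L_v \neq \emptyset$. For a vertex $w \notin H$ and $v \in H$, since $H$ is a module, $w$ is either adjacent to every vertex of $H$ or to none. If $w$ is adjacent to all of $H$, then $\{I_w\} \cup \{I_v : v \in H\}$ is a pairwise intersecting family of intervals, so by Helly again $I_w \cap I^{*} \neq \emptyset$, and since $L_w$ is unchanged we still have $L_w \cap L_v \neq \emptyset$; thus $w$ remains adjacent to $v$. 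If instead $w$ is non-adjacent to $H$, then for every $v \in H$ we have either $I_w \cap I_v = \emptyset$ or $L_w \cap L_v = \emptyset$ in the original, and because $I^{*} \subseteq I_v$, the first alternative still forces $I_w \cap I^{*} = \emptyset$, while the second alternative is preserved verbatim; in both cases $w$ remains non-adjacent to $v$.

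The final sentence of the statement is then immediate: once all vertices of $H$ share the interval $I^{*}$, the interval-intersection condition is automatic among them, so the label-set intersection alone decides adjacency inside $H$. Hence $L_u \cap L_v \neq \emptyset$ exactly when $uv \in E(G)$ and $L_u \cap L_v = \emptyset$ otherwise, as required.

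The only delicate point is the case of an external non-neighbor $w$ whose label set meets $L_v$ for some $v \in H$ but whose interval does not meet $I_v$; one has to notice that enlarging the interval of $v$ is forbidden (and indeed I am \emph{shrinking} $I_v$ to $I^{*} \subseteq I_v$), which is precisely what keeps $I_w \cap I^{*}$ empty. All remaining checks are routine case analysis.
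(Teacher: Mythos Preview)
Your proof is correct and follows essentially the same approach as the paper: the paper's common interval $(x,y)$, with $x$ the rightmost left endpoint and $y$ the leftmost right endpoint, is exactly your $I^{*}=\bigcap_{v\in H} I_v$, and both arguments use the Helly property to conclude that every external neighbor's interval meets this common interval while shrinking can only destroy (never create) intersections with non-neighbors. Your case analysis is more explicit than the paper's brief ``no intersection is lost and no intersection is gained'' remark, but the underlying idea is identical.
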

\begin{proof}
    Suppose we have a $d$-simultaneous interval representation of $G$ such that the intervals representing vertices of $H$ are pairwise intersecting. Let $x$ and $y$ be, respectively, the rightmost left endpoint and the leftmost right endpoint of the intervals representing vertices of $H$. Since the intervals are pairwise intersecting, $x < y$. Since $H$ is a module, every vertex that is adjacent to a vertex of $H$ is adjacent to all of them, thus its interval intersects the interval $(x,y)$. So, no intersection is lost (and no intersection is gained) by keeping the label sets and shortening each interval representing a vertex of $H$ to $(x,y)$.
\end{proof}

Now, we describe the situation for modules that induce interval graphs that are not complete, which in particular include the independent sets.
For complete bipartite graphs, it was proved by \citet{Milanic-sim} that $\si(K_{n,m})=\min\{n,m\}$. The argument used is that, in a $d$-simultaneous interval representation of a complete bipartite graph with bipartition $(X,Y)$, at most one of the sets of intervals representing $X$ and $Y$ can contain two disjoint intervals, so the other set consists of pairwise intersecting intervals that must have pairwise disjoint label sets, thus $d$ is at least the size of that set. We will make here a further observation. 

\begin{lemma}
\label[lemma]{prop:si-interval-module}
    Let $H$ be a module of a graph $G$, $|H| \geq 2$, that induces an interval graph that is not complete. Let $G'$ be the graph obtained from $G$ by deleting all but two nonadjacent vertices in $H$, namely $v$ and $w$. Then the minimum $d$ such that $G$ admits a $d$-simultaneous interval representation such that there are two disjoint intervals representing vertices of $H$ is the same as the minimum $d$ such that $G'$ admits a $d$-simultaneous interval representation such that the intervals representing $v$ and $w$ are disjoint.
\end{lemma}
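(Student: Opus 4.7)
The plan is to prove both inequalities between the two minima: I will denote by LHS the left-hand minimum (about $G$) and by RHS the right-hand minimum (about $G'$).

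For LHS~$\geq$~RHS, I would start from a $d$-simultaneous interval representation of $G$ containing two vertices $a, b \in H$ with disjoint intervals (so $a \not\sim b$) and turn it into a $d$-representation of $G'$ with $I_v \cap I_w = \emptyset$. The trick is a swap: reassign to $v$ the pair $(I_a, L_a)$ and to $w$ the pair $(I_b, L_b)$, keeping every vertex of $V(G') \setminus \{v, w\}$ unchanged. Since $H$ is a module, $a$ and $v$ share the same external neighbours in $G$, as do $b$ and $w$, so the external edges incident to $v$ and $w$ are represented correctly; and $v \not\sim w$ follows from $I_a \cap I_b = \emptyset$. A brief case analysis on whether each endpoint lies in $H$ or not confirms the result.

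The harder direction, LHS~$\leq$~RHS, is the one I would spend most of the effort on. Starting from a $d$-representation $\pi'$ of $G'$ with $I_v = [a, b]$ to the left of $I_w = [c, d]$, I plan to extend it to a $d$-representation $\pi$ of $G$ in which $v$ and $w$ keep these disjoint intervals, so they witness two disjoint intervals in $H$. I would take an interval representation $\rho$ of $G[H]$ in which $v$ and $w$ receive disjoint intervals---this exists because $G[H]$ is an interval graph with $v \not\sim w$---and normalise $\rho$ via a piecewise linear rescaling sending $v$'s interval to $I_v$, $w$'s interval to $I_w$, and fitting every other $H$-interval inside $[a, d]$. For labels, I would assign each $x \in H \setminus \{v, w\}$ a subset of $L_v \cup L_w$ chosen according to $x$'s adjacency to $v$ and $w$ in $G[H]$ (for instance $L_v$ if $x$ is adjacent only to $v$, $L_w$ only to $w$, and $L_v \cup L_w$ if adjacent to both). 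The structural observation that should make the external adjacencies work is that every outside vertex $u$ with $u \sim H$ must have its interval in $\pi'$ meet both $I_v$ and $I_w$, and hence contain the closed gap $[b, c]$; consequently it intersects any new interval placed inside $[a, d]$, yielding the required external edges.

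The hardest step will be verifying that no outside vertex $u' \notin N_G(H)$ becomes spuriously adjacent to a newly added vertex of $H$ in $\pi$. The non-adjacency of $u'$ to $v$ in $\pi'$ forces either $I_{u'} \cap I_v = \emptyset$ or $L_{u'} \cap L_v = \emptyset$, and the analogous statement holds for $w$. I would split on the position of $I_{u'}$ relative to $I_v, I_w$ and on the label assignment above to argue that for every newly added $H$-vertex $x$, either $I_x \cap I_{u'} = \emptyset$ or $L_x \cap L_{u'} = \emptyset$. This bookkeeping is where the label classification described above pays off, and is where I expect the main technical difficulty of the proof to concentrate.
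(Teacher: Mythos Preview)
Your easy direction (LHS $\geq$ RHS) is fine and matches the paper. The hard direction, however, has two genuine gaps.

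\textbf{Internal adjacencies.} Your label rule (``$L_v$ if adjacent only to $v$, $L_w$ if only to $w$, $L_v\cup L_w$ if both'') does not preserve edges inside $H$. Take $G[H]$ to be the path $v\,x\,y\,w$: then $x$ gets $L_v$, $y$ gets $L_w$, and if $L_v\cap L_w=\emptyset$ (which is perfectly possible since $I_v\cap I_w=\emptyset$ already accounts for $v\not\sim w$), the edge $xy$ is lost. You also never say what label a vertex of $H$ adjacent to neither $v$ nor $w$ receives.

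\textbf{Spurious external edges.} Your placement spreads the new $H$-intervals across the whole window $[a,d]$, in particular across the gap $(b,c)$. But an outside non-neighbour $u'$ of $H$ may have $I_{u'}\subseteq(b,c)$; then neither of your two disjunctions (``$I_{u'}\cap I_v=\emptyset$ or $L_{u'}\cap L_v=\emptyset$'', and the analogue for $w$) tells you anything about $L_{u'}$, and $u'$ can become adjacent to any $H$-vertex whose interval crosses the gap. The ``bookkeeping'' you allude to cannot resolve this, because there is simply no constraint on $L_{u'}$ in that position.

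The paper sidesteps both problems with a much simpler construction: it does \emph{not} try to align the new $H$-representation with both $I_v$ and $I_w$. Instead it observes that every external neighbour of $H$ has an interval containing the right endpoint of $I_v$, picks a tiny sub-interval of $I_v$ at that end, and squeezes an entire interval representation of $G[H]$ into that sub-interval, giving every vertex of $H$ the single label set $L_v$. Then internal adjacencies are governed purely by the interval model of $G[H]$, external neighbours automatically intersect every new interval with a common label, and external non-neighbours are handled by the single disjunction for $v$. Your plan can be repaired by adopting exactly this move.
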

\begin{proof}
    Suppose first we have a $d$-simultaneous interval representation of $G$ such that there are two disjoint intervals representing vertices of $H$. Since the intervals are disjoint, the two vertices represented by them are not neighbors. So, deleting all the intervals representing vertices of $H$ except for two disjoint intervals gives a $d$-simultaneous interval representation of $G'$ such that the intervals representing $v$ and $w$ are disjoint.
    
    Suppose now that we have a $d$-simultaneous interval representation of $G'$ such that the intervals representing $v$ and $w$ are disjoint, and suppose without loss of generality that the interval $A$ representing $v$ lies entirely at the left of the interval $B$ representing $w$. Since $H$ is a module, any vertex that is adjacent to a vertex in $H$ must be adjacent to both $v$ and $w$, so its corresponding interval intersects both the right endpoint of $A$ and the left endpoint of $B$. Therefore, there is a sub-interval of $A$, sharing its right endpoint, that is contained in all the intervals corresponding to vertices that are adjacent to the vertices of $H$. We can then replace $A$ with a subdivision of that sub-interval into an interval representation for $H$, with all the vertices of $H$ having the same label set as $A$. As $H$ is not complete, there exist two non-adjacent vertices, and the intervals for these vertices in an interval representation of $H$ cannot intersect. Thus, we obtain a $d$-simultaneous interval representation of $G$ such that there are two disjoint intervals representing vertices of $H$.
\end{proof}

As a corollary of \cref{prop:si-contract-complete-module}, we have the following result. 
\begin{lemma}
    \label[lemma]{lemma:si-tc}
    For every graph $G$ with twin-cover number $k$, $\si(G) \leq \max\{1,k(k-1)/2 + k\}$.
\end{lemma}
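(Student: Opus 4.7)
The plan is to construct an explicit simultaneous interval representation of $G$ using $k + \binom{k}{2} = k(k-1)/2 + k$ labels when $k \geq 1$, and handle $k=0$ as a trivial special case. First I would analyze the structure imposed by a minimum twin-cover $X = \{x_1, \ldots, x_k\}$: since every edge outside $X$ is between twins, the vertices of $Y \coloneqq V(G) \setminus X$ partition into equivalence classes $Y_1, \ldots, Y_m$ under the relation $N[a] = N[b]$, and each $Y_r$ is a clique module of $G$ sharing a common neighborhood $N_r \subseteq X$ with no edges to any $Y_s$ for $s \neq r$. If $k = 0$, then $G$ is a disjoint union of cliques, hence an interval graph, so $\si(G) \leq 1$.

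For $k \geq 1$, I would apply \cref{prop:si-contract-complete-module} to each twin-class $Y_r$ with $N_r \neq \emptyset$ (these are complete modules with outside neighbors) to contract it to a single vertex $y_r$ without changing $\si$. Twin-classes with $N_r = \emptyset$ are isolated clique components and are handled separately by assigning each its own disjoint segment of the real line with a single shared label. It then suffices to give a $(k + \binom{k}{2})$-simultaneous representation of the contracted graph $G'$, in which $X$ is now a vertex cover. The representation uses self-labels $\ell_i$ for $i \in [k]$ and pair-labels $\ell_{ij}$ for $1 \leq i < j \leq k$: each cover vertex $x_i$ is given a long interval (containing all others) and label set $\{\ell_i\} \cup \{\ell_{ij} : x_ix_j \in E(G)\}$, and each representative $y_r$ is given a small interval, disjoint from those of the other representatives but inside the long intervals, together with label set $\{\ell_i : x_i \in N_r\}$.

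Verification is a routine case analysis on pairs: two cover vertices $x_i, x_j$ always meet in intervals and share a label exactly when $\ell_{ij}$ is common, which occurs exactly when $x_ix_j \in E(G)$; a pair $x_i, y_r$ always meets in intervals and shares a label exactly when $\ell_i \in L(y_r)$, i.e., when $x_i \in N_r$; and two distinct representatives $y_r, y_s$ have disjoint intervals, hence are non-adjacent as required. The main obstacle is the careful handling of isolated twin-classes, because \cref{prop:si-contract-complete-module} requires the contracted module to have neighbors outside; placing them on separate portions of the line avoids spurious adjacencies with the rest of the construction. Counting labels gives $k + \binom{k}{2} = k(k-1)/2 + k$, and combining with the $k=0$ case yields the stated bound $\si(G) \leq \max\{1, k(k-1)/2 + k\}$.
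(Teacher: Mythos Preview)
Your proposal is correct and follows essentially the same approach as the paper's proof: both exploit that $G\setminus X$ decomposes into clique modules, place these on pairwise disjoint intervals while giving the cover vertices a common long interval, and use the same labeling scheme ($k$ ``self'' labels plus one label per edge of $G[X]$), with isolated cliques handled separately on their own segments. The only cosmetic difference is that you invoke \cref{prop:si-contract-complete-module} to justify contracting each non-isolated clique, whereas the paper simply assigns all vertices of a clique identical intervals and label sets directly; you also separate out the $k=0$ case explicitly, which the paper absorbs into the general construction via the ``arbitrary label'' for isolated cliques.
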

\begin{proof}
    Let $X$ be a twin-cover of $G$ of size $k$. The connected components of $G\setminus X$ are cliques of vertices that are twins. We represent the {different} cliques by pairwise disjoint intervals, and all the vertices of a same clique by identical intervals, in such a way that intervals of cliques having no neighbors in $X$, {if any,} are all to the left of intervals of cliques having neighbors in $X$. We represent the vertices of $X$ by identical intervals intersecting all the intervals of cliques having neighbors in $X$. We use a set of at most $k(k-1)/2$ labels to correctly represent the adjacencies of $G[X]$, by assigning a label to each edge of $G[X]$ and giving to each vertex the set of labels of its incident edges in $G[X]$. We further give to each of the vertices of $X$ an extra label (all of them different), and we assign to each vertex in $V(G)\setminus X$ the set of extra labels corresponding to its neighbors in $X$. This makes also pairwise adjacent the vertices of each clique having neighbors in $X$. Finally, we
    use an arbitrary label for all the remaining vertices, in order to make pairwise adjacent the vertices of each clique having no neighbors in $X$.
\end{proof}

The twin-cover number can be arbitrarily larger than the simultaneous interval number. Indeed, paths have simultaneous interval number~1 and arbitrarily large twin-cover number.

We can use \cref{prop:si-contract-complete-module,prop:si-module-pairwise-intersecting,prop:si-interval-module} to prove the following. 
\begin{theorem}
    \label{theorem:si-param-cm}
    {\sc Simultaneous Interval Number} is \FPT when parameterized by the $\cluster$-modular cardinality of the input graph plus the solution size {(the input parameter $d$)}.
\end{theorem}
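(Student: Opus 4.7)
The plan is to reduce, by branching over a bounded number of ``profiles'', to a constrained instance of \textsc{Simultaneous Interval Number} on a graph with $O(kd)$ vertices, which can then be solved by brute force. Start by computing an optimal $\cluster$-modular partition $\{M_1,\dots,M_k\}$ of $G$ in linear time via \cref{prop:calculate-cluster-module-partition}; each $M_i$ is a disjoint union of cliques $C_{i,1},\dots,C_{i,m_i}$. By \cref{prop:si-module-pairwise-intersecting}, in any $d$-simultaneous interval representation of $G$, each module $M_i$ with $m_i\geq 2$ falls into one of two types: \emph{type A}, where all intervals representing vertices of $M_i$ are pairwise intersecting (and can thus be taken identical); or \emph{type B}, where some pair of intervals within $M_i$ is disjoint. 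Branching over these choices yields at most $2^k$ possible profiles.

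For each profile, construct a reduced graph $G'$ together with interval-equality and interval-disjointness constraints as follows. For $m_i=1$, apply \cref{prop:si-contract-complete-module} to replace $M_i$ by a single vertex. For $M_i$ of type A, discard the profile immediately if $m_i>d$ (the $m_i$ cliques would require $m_i$ pairwise disjoint non-empty subsets of $[d]$); otherwise replace $M_i$ by $m_i$ pairwise non-adjacent vertices, one per clique, with the constraint that they all receive the same interval. For $M_i$ of type B, apply \cref{prop:si-interval-module} to replace $M_i$ by two non-adjacent vertices with the constraint that their intervals be disjoint. The resulting $G'$ has at most $kd$ vertices, and combining the three lemmas gives the equivalence: $G$ admits a $d$-SIR realizing some profile iff the corresponding $G'$ admits a $d$-SIR satisfying the associated constraints.

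Finally, for each profile, decide the constrained problem on $G'$ by brute force: enumerate the at most $(2kd)!$ orderings of interval endpoints and the at most $2^{kd^2}$ assignments of label subsets of $[d]$ to the vertices of $G'$, and test each candidate against the adjacencies of $G'$ and the recorded constraints. Summed over all at most $2^k$ profiles, the total running time is of the form $f(k,d)\cdot n^{\O(1)}$, establishing fixed-parameter tractability.

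The main obstacle I expect will be verifying the equivalence between $G$ and $G'$ for each profile, and in particular the ``extension'' direction for type-A modules: given a $d$-SIR of $G'$ in which the $m_i$ replacement vertices of $M_i$ share a common interval $I$ and have pairwise disjoint nonempty label sets $L_1,\dots,L_{m_i}$, one must check that assigning $I$ to every vertex of $M_i$ and the set $L_j$ to every vertex of $C_{i,j}$ produces a valid $d$-SIR of $G$. This follows from $M_i$ being a module of $G$ (so external adjacencies remain correct) together with $G[M_i]$ being exactly the cluster $C_{i,1}\sqcup\dots\sqcup C_{i,m_i}$, whose internal adjacencies are realized by the intersection/disjointness pattern of the $L_j$. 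The type-B reduction is justified directly by \cref{prop:si-interval-module} and the $m_i=1$ reduction by \cref{prop:si-contract-complete-module}; since each of these lemmas only modifies the intervals and labels assigned to the single module it is applied to, they can be composed across the $k$ modules without interference.
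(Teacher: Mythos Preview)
Your proposal is correct and follows essentially the same approach as the paper: branch over the at most $2^k$ profiles (pairwise-intersecting versus containing a disjoint pair) for each module, shrink each module to at most $d$ vertices using \cref{prop:si-contract-complete-module,prop:si-module-pairwise-intersecting,prop:si-interval-module}, and brute-force the resulting constrained instance of size $O(kd)$. The paper additionally disposes of isolated cliques up front (so that \cref{prop:si-contract-complete-module} applies and every contracted vertex is forced to receive a nonempty label set in the extension step) and handles the base cases $k=1$ and $d\leq 1$ separately, but these are minor edge cases that do not affect the substance of your argument.
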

\begin{proof}
    Let $G$ be a graph with $\cluster$-modular cardinality $k$. We want to decide whether it admits a $d$-simultaneous interval representation. If $k=1$, then $\si(G) = 0$ or $1$ depending on the presence of edges. So, suppose $k > 1$. We can assume that there are no isolated cliques in $G$, because in that case the simultaneous interval number of $G$ will be the same as the simultaneous interval number of the graph obtained from $G$ by removing the isolated cliques. We can also assume that $d \geq 2$, since graphs with $\si(G) = 0$ are exactly the edgeless graphs, and graphs with $\si(G) \leq 1$ are exactly the interval graphs~\citep{Milanic-sim}, that can be recognized in polynomial time~\citep{Habib-interval-rec}. 
    
    Let $M_1, \dots, M_k$ be the modules in an optimal $\cluster$-modular partition, which can be computed in linear time  by \cref{prop:calculate-cluster-module-partition}. The cliques in each cluster $M_i$ are in turn modules of $G$, so we can apply \cref{prop:si-contract-complete-module} and contract each of them into a vertex, obtaining a graph $G'$ and modules $M'_1, \dots, M'_k$ which are independent sets, such that $\si(G)=\si(G')$ and a $d$-simultaneous interval representation of $G$ can be obtained from a $d$-simultaneous interval representation of $G'$. 
    
    In such a representation of $G'$, for each $M'_i$, either there exist two disjoint intervals, or the intervals are pairwise intersecting. We will then solve at most $2^k$ constrained instances, where we specify for each $1 \leq i \leq k$ such that $|M'_i|>1$, whether in the representation of $M'_i$ there exist two disjoint intervals, or the intervals are pairwise intersecting. In the first case, by \cref{prop:si-interval-module}, we can keep just 2 vertices of $M'_i$. In the second case, at least $|M'_i|$ labels are needed, since the intervals corresponding to vertices of $M'_i$ require pairwise disjoint label sets. So, if $|M'_i| > d$, we can discard the instance. If $|M'_i| \leq d$, by \cref{prop:si-module-pairwise-intersecting}, we may restrict ourselves to solutions in which all vertices of $M'_i$ are represented by the same interval. This preprocessing of each instance can be done in polynomial time.    

    For each constrained instance, we need to find a $d$-simultaneous interval representation of $G'$ with a fixed number $t$ of intervals, with $k \leq t \leq 2k$, where some pairs of them are meant to be disjoint and represent one vertex each, and some others represent a fixed number (at most $d$) of pairwise non-adjacent vertices. For a representation of $t$ intervals we can consider without loss of generality $2t$ linearly ordered endpoints, from which we choose $4$ endpoints for each of the $t-k$ pairs of disjoint intervals and $2$ endpoints for each of the $2k-t$ remaining intervals, and we discard the clearly unfeasible representations (those in which intervals that should intersect do not). A very loose upper bound for the number of representations is $(2t)!$, the number of permutations of all the interval endpoints, which is at most $(4k)!$ because $t \leq 2k$. As for the labels, we can try all the assignments of subsets of $\{1, \dots, d\}$ to the intervals, including the (at most $d$) copies of the $2k-t$ intervals that possibly represent more than one vertex. Since $d \geq 2$, to each of the $k$ modules correspond at most $d$ intervals, so the number of possible assignments is at most $(2^d)^{dk} = 2^{d^2k}$.
    
    Summarizing, we analyze $\O(2^k \cdot (4k)! \cdot 2^{d^2k})$ candidate solutions, and checking each of them can be done in time $\O(d(dk)^2)$, which proves that {\sc Simultaneous Interval Number} is \FPT when parameterized by the cluster-modular cardinality of the input graph plus the solution size.
\end{proof}

As a corollary of \cref{theorem:si-param-cm} and \cref{lemma:si-tc}, we have the following result.

\begin{corollary}
    \label{cor:si-param-twin-cover}
    {\sc Simultaneous Interval Number} is \FPT when parameterized by the size of a minimum twin-cover of the input graph.
\end{corollary}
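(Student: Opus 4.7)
The plan is to derive this corollary by chaining together three results already established in the paper: Theorem~\ref{theorem:si-param-cm}, Lemma~\ref{lemma:si-tc}, and Lemma~\ref{lemma:twin-cover-hcupn}. Concretely, let $G$ be the input graph with twin-cover number $k$, and let $d$ be the target value from the \textsc{Simultaneous Interval Number} instance.

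First, I would observe that by Lemma~\ref{lemma:si-tc} we have $\si(G) \leq \max\{1, k(k-1)/2 + k\}$. Hence, whenever $d \geq \max\{1, k(k-1)/2 + k\}$, the instance is trivially a YES-instance and the algorithm can immediately answer accordingly. This step requires no computation on $G$ beyond comparing $d$ with a function of $k$; in particular it is not necessary to exhibit an explicit twin-cover, only to know (or assume as input) the value of $k$.

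Second, in the remaining case we have $d < k(k-1)/2 + k$, so $d$ is bounded by a function of $k$. By Lemma~\ref{lemma:twin-cover-hcupn}, the cluster-modular cardinality satisfies $\cm(G) \leq 2^k + k$, which is likewise bounded by a function of $k$. I would then invoke Theorem~\ref{theorem:si-param-cm}, which provides an \FPT algorithm for \textsc{Simultaneous Interval Number} with combined parameter $\cm(G) + d$; since both quantities are bounded by functions of $k$, the resulting running time is of the form $f(k) \cdot n^{\O(1)}$, as required. Note that the cluster-modular partition needed by the algorithm of Theorem~\ref{theorem:si-param-cm} is obtained in linear time via Theorem~\ref{prop:calculate-cluster-module-partition}, so no external routine to compute a twin-cover is actually needed in the algorithm itself.

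There is no substantial obstacle here: the corollary is a direct composition of the aforementioned results, and the only thing worth double-checking is the case analysis for very small $k$, where the $\max\{1, \cdot\}$ in Lemma~\ref{lemma:si-tc} becomes relevant (namely $k \in \{0,1\}$, where $G$ is edgeless or an interval graph and the problem is solvable in polynomial time).
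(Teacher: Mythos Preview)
Your proposal is correct and follows essentially the same approach as the paper: bound $\si(G)$ via Lemma~\ref{lemma:si-tc}, bound $\cm(G)$ via Lemma~\ref{lemma:twin-cover-hcupn}, and then invoke Theorem~\ref{theorem:si-param-cm}. Your explicit case split on whether $d$ exceeds the bound from Lemma~\ref{lemma:si-tc} just spells out the reasoning a bit more than the paper does.
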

\begin{proof}
    Let $G$ be a graph with twin-cover number $k$. By \cref{lemma:si-tc}, $\si(G) \leq \max\{1,k(k-1)/2 + k\}$, and by \cref{lemma:twin-cover-hcupn}, $\cm(G) \leq 2^k + k$. We thus use the algorithm from \cref{theorem:si-param-cm} to obtain an \FPT algorithm parameterized by $k$.
\end{proof}

As corollaries of \cref{theorem:si-param-cm} and \cref{cor:si-param-twin-cover}, and the relations in \cref{fig:diagram}, we have also the following results. 
\begin{corollary}
    \label{corol:si-param-nd-vc}
    {\sc Simultaneous Interval Number} is \FPT when parameterized by the neighborhood diversity of the input graph plus the solution size $d$, and by the size of a minimum vertex cover of the input graph.
\end{corollary}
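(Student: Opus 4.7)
The plan is to derive both statements directly by invoking the two previously established \FPT results together with the parameter comparisons depicted in \cref{fig:diagram}. No new algorithmic machinery is needed: the work has been done in \cref{theorem:si-param-cm} and \cref{cor:si-param-twin-cover}, so the proof reduces to observing that the relevant parameters dominate $\cm(G)$ or $\tc(G)$, respectively.

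For the parameterization by neighborhood diversity plus the solution size $d$, I would first recall \cref{prop:neighborhood-partition-is-cluster-modular-partition}, which guarantees $\cm(G) \leq \nd(G)$ for every graph $G$. Hence, given an instance $(G,d)$, if $\nd(G) = k$ then $\cm(G) \leq k$, so running the algorithm from \cref{theorem:si-param-cm} on $(G,d)$ yields a runtime bounded by $f(k,d)\cdot n^{\O(1)}$ for some computable function $f$, which is exactly an \FPT algorithm in the combined parameter $\nd(G)+d$.

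For the parameterization by the vertex cover number, I would argue that any vertex cover $X$ of $G$ is also a twin-cover: for every edge $(a,b)\in E(G)$, by definition of vertex cover at least one endpoint lies in $X$, which trivially satisfies the twin-cover condition. Therefore $\tc(G)\leq \vc(G)$, and applying \cref{cor:si-param-twin-cover} immediately yields an \FPT algorithm parameterized by $\vc(G)$. The main (and only) obstacle is ensuring that the parameter chain is tight enough to carry through—i.e., that the solution size $d$ is not needed as an additional parameter in the vertex cover case—but this is guaranteed by \cref{lemma:si-tc}, which bounds $\si(G)$ by a function of the twin-cover number (and hence of $\vc(G)$), so $d$ can be absorbed into the parameter without loss of generality.
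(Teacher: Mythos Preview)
Your proposal is correct and matches the paper's approach exactly: the paper states the corollary without a standalone proof, noting only that it follows from \cref{theorem:si-param-cm}, \cref{cor:si-param-twin-cover}, and the parameter relations in \cref{fig:diagram}. You have simply made those implicit steps explicit, invoking \cref{prop:neighborhood-partition-is-cluster-modular-partition} for the $\nd$ case and the observation $\tc(G)\leq\vc(G)$ for the $\vc$ case.
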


\section{Conclusions and open problems}
\label{sec:open-problems}
We summarize our complexity and algorithmic results in \cref{tab:results}.  

\begin{table}[h]
\footnotesize
\centering
\begin{tabular}[t]{lcccccc}
\toprule
Problem & General & $\intervalmc$ & $\cm$ & $\nd$ & $\tc$ & $\vc$ \\
\midrule
\textsc{$\interval$-MC}  & linear & & & & &  \\
\textsc{$\cluster$-MC}  & linear & & & & & \\
\textsc{Thinness}  & ~\NP-c~ & linear kernel & linear kernel & linear kernel & ~\FPT~ & \FPT \\
\textsc{Sim. Int. Num.}~~~  & ~\NP-c~ & ? & \FPT $(\cm + d)$ & \FPT $(\nd + d)$ & ~\FPT~ & \FPT\\
\bottomrule
\end{tabular}
\caption{Parameterized complexity results. The parameter $d$ represents the solution size.}\label{tab:results}
\end{table}%

Possible avenues of research that this work leaves open include:

\begin{itemize}
    \item Is \textsc{Thinness} \FPT (or even \textsf{XP}) when parameterized by the solution size, that is, the desired thinness of the input graph? Note that both a positive or a negative answer are compatible with the non-existence of a polynomial kernel proved in \cref{theorem:thinness-kernelization-lower-bound}. Combining the results by \citet{tesis-diego} and \citet{thinness-np-complete}, we can reduce an instance of the \textsc{Non-Betweenness} problem to an instance of \textsc{Thinness}. The thinness of the constructed graph depends in part on the size of the \textsc{Non-Betweenness} instance, and thus leaves open the possibility of a polynomial-time algorithm for \textsc{Thinness} when the thinness of the input graph is bounded by a constant.
    \item Is \textsc{Thinness} \FPT (or even \textsf{XP}) when parameterized by other parameters, like for example the treewidth of the input graph? The total order involved in the definition of thinness makes it unclear to be amenable to a {\sf CMSO} logic formulation in order to apply Courcelle's theorem~\citep{Courcelle90}. On the other hand, note that \cref{theorem:thinness-kernelization-lower-bound} implies that \textsc{Thinness} is unlikely to admit polynomial kernels parameterized by treewidth. 
    \item Does \textsc{Thinness} admit polynomial kernels when parameterized by the vertex cover or the twin-cover number of the input graph?
    \item {Is \textsc{Simultaneous Interval Number}  \FPT (or even \textsf{XP}) when parameterized by the $\cluster$-modular cardinality or the $\interval$-modular cardinality?}
\end{itemize}

\bibliographystyle{abbrvnat}
\bibliography{references,bnm-j,bnm}
\label{sec:biblio}

\newpage
\begin{appendices}
\crefalias{section}{appendix} 

\section{Modular-width vs. linear mim-width}\label{sec:modw-vs-lmimw}

Let $G$ be a graph. Given a linear layout $v_1,v_2, \dots, v_n$ of $V(G)$, the \textit{mim of a cut} $v_1, \dots, v_i | v_{i+1}, \dots, v_n$
is the maximum number of edges of an induced matching in the bipartite graph with bipartition $A_i\coloneqq\{v_1, \dots, v_i\}, B_i\coloneqq\{v_{n+1}, \dots, v_n\}$ and edge set formed by the edges o $G$ with one endpoint in $A_i$ and the other in $B_i$. The \textit{mim-width of the linear layout} is the maximum mim over its cuts, and the \textit{linear mim-width of graph $G$}, denoted by $\lmimw(G)$, is the minimum mim-width over its linear layouts.
\citet{Saether-branch} proved the \NP-hardness for deciding both mim-width and linear mim-width, and, moreover, that they do not admit either \FPT or constant-factor approximation algorithms under common complexity assumptions.

We describe here a graph family of bounded modular-width and unbounded linear mim-width. 

In a graph $G$, the \emph{substitution} operation of a vertex $v \in V(G)$ by a graph $H$
results in a graph $G'$ such that $V(G') = V(G) \cup V(H) \setminus
\{v\}$, and  vertices $x$, $y$ are adjacent in $G'$ if either $x, y \in V(G)
\setminus \{v\}$ and $xy \in E(G)$, or $x, y \in V(H)$ and $xy \in
E(H)$, or $x \in V(G) \setminus \{v\}$, $y \in V(H)$, and $xv \in
E(G)$.

A \emph{bipartite claw} is obtained from the claw $K_{1,3}$ by subdividing once each of the three edges. It is depicted as $H_1$ in \autoref{fig:bip-claw}.

Let us define the following family of graphs: the graph $H_0$ is the one-vertex graph;
for every $n \geq 0$, $H_{n+1}$ is obtained from a bipartite claw by substituting each of the leaves by a copy of the graph $H_n$ (see \autoref{fig:bip-claw}). 

\begin{figure}[ht]
    \centering
        \resizebox{.6\textwidth}{!}{%
\begin{tikzpicture}[scale=.7]
	\begin{pgfonlayer}{nodelayer}
		\node [style=nodo] (0) at (0, 0) {};
		\node [style=nodo] (1) at (-0.75, 0.75) {};
		\node [style=nodo] (2) at (0, -1) {};
		\node [style=nodo] (3) at (0.75, 0.75) {};
		\node [style=nodo] (4) at (-1.5, 1.5) {};
		\node [style=nodo] (5) at (0, -2) {};
		\node [style=nodo] (6) at (1.5, 1.5) {};
		\node [style=nodo] (7) at (-7, 0) {};
		\node [style=nodo] (8) at (7, 0) {};
		\node [style=nodo] (9) at (6.25, 0.75) {};
		\node [style=nodo] (10) at (7, -1) {};
		\node [style=nodo] (11) at (7.75, 0.75) {};
		\node [style=nodo] (12) at (5.5, 1.5) {};
		\node [style=nodo] (13) at (7, -2) {};
		\node [style=nodo] (15) at (5.5, 2.5) {};
		\node [style=nodo] (16) at (5, 3) {};
		\node [style=nodo] (17) at (5, 2) {};
		\node [style=nodo] (18) at (4.5, 2.5) {};
		\node [style=nodo] (19) at (4.5, 1.5) {};
		\node [style=nodo] (20) at (4, 2) {};
		\node [style=nodo] (29) at (6.25, -2.75) {};
		\node [style=nodo] (30) at (7.75, -2.75) {};
		\node [style=nodo] (31) at (6.25, -3.5) {};
		\node [style=nodo] (32) at (7, -2.75) {};
		\node [style=nodo] (33) at (7.75, -3.5) {};
		\node [style=nodo] (34) at (7, -3.5) {};
		\node [style=none] (38) at (-7, -4.5) {$H_0$};
		\node [style=none] (39) at (0, -4.5) {$H_1$};
		\node [style=none] (40) at (7, -4.5) {$H_2$};
		\node [style=nodo] (51) at (8.5, 1.5) {};
		\node [style=nodo] (52) at (9.5, 1.5) {};
		\node [style=nodo] (53) at (10, 2) {};
		\node [style=nodo] (54) at (9, 2) {};
		\node [style=nodo] (55) at (9.5, 2.5) {};
		\node [style=nodo] (56) at (8.5, 2.5) {};
		\node [style=nodo] (57) at (9, 3) {};
  \end{pgfonlayer}
	\begin{pgfonlayer}{edgelayer}
		\draw [style=arista] (0) to (3);
		\draw [style=arista] (3) to (6);
		\draw [style=arista] (0) to (2);
		\draw [style=arista] (2) to (5);
		\draw [style=arista] (0) to (1);
		\draw [style=arista] (1) to (4);
		\draw [style=arista] (8) to (11);
		\draw [style=arista] (8) to (10);
		\draw [style=arista] (10) to (13);
		\draw [style=arista,bend right=10] (9) to (15);
		\draw [style=arista,bend right=40] (9) to (16);
		\draw [style=arista,bend right=20] (9) to (17);
		\draw [style=arista,bend right=30] (9) to (18);
		\draw [style=arista,bend left=10] (9) to (19);
		\draw [style=arista,bend left=40] (9) to (20);
		\draw [style=arista] (8) to (9);
		\draw [style=arista] (9) to (12);  
		\draw [style=arista] (12) to (15);
		\draw [style=arista] (15) to (16);
		\draw [style=arista] (12) to (19);
		\draw [style=arista] (12) to (17);
		\draw [style=arista] (17) to (18);
		\draw [style=arista] (19) to (20);
		\draw [style=arista,bend right=10] (10) to (29);  
		\draw [style=arista,bend right=40] (10) to (31);  
		\draw [style=arista] (13) to (29);
		\draw [style=arista] (29) to (31);
		\draw [style=arista,bend right=20] (10) to (32);  
		\draw [style=arista,bend left=10] (10) to (30);  
		\draw [style=arista,bend left=40] (10) to (33);  
		\draw [style=arista] (13) to (32);
		\draw [style=arista] (13) to (30);
		\draw [style=arista] (30) to (33);
		\draw [style=arista] (32) to (34);
		\draw [style=arista,bend right=30] (10) to (34);  
            \draw [style=arista,bend right=10] (11) to (52);  
		\draw [style=arista,bend right=40] (11) to (53);  
		\draw [style=arista] (51) to (52);
		\draw [style=arista] (52) to (53);
		\draw [style=arista] (51) to (56);
		\draw [style=arista,bend left=10] (11) to (56);  
		\draw [style=arista,bend left=40] (11) to (57);  
		\draw [style=arista] (51) to (54);
		\draw [style=arista,bend right=20] (11) to (54);  
		\draw [style=arista,bend right=30] (11) to (55);  
		\draw [style=arista] (54) to (55);
		\draw [style=arista] (56) to (57);
		\draw (11) to (51);
	\end{pgfonlayer}
\end{tikzpicture}}
    \caption{The graphs $H_0$, $H_1$, and $H_2$.}
    \label{fig:bip-claw}
\end{figure}
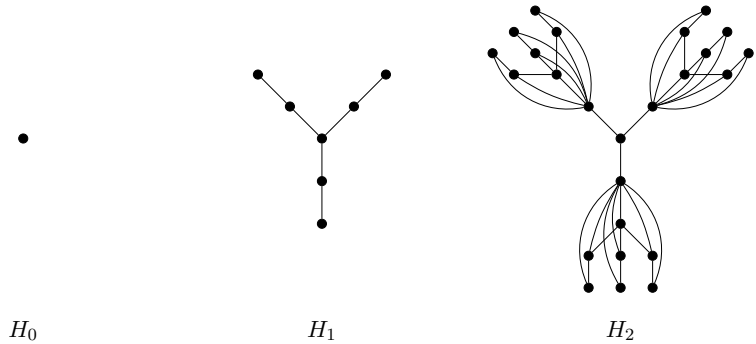

The graphs in $\{H_n\}_{n\geq 0}$ have modular-width~7, since in the modular decomposition tree, every prime node is a bipartite claw, a 7-vertex graph. We will show next that $\lmimw(H_n)=n$ for every $n \geq 0$. The construction and the proof are inspired by the construction by \citet{H-A-R-lin-mim-trees} of a family of trees with unbounded linear mim-width.

\begin{figure}[ht]
    \centering
        \resizebox{\textwidth}{!}{%
\begin{tikzpicture}[scale=0.8]

\begin{scope}[xshift=0cm,yshift=0cm]

\minivertex{5}{4}{w0};
\minivertex{5.5}{4}{w1};
\minivertex{6}{4}{w2};
\minivertex{6.5}{4}{w3};
\minivertex{7}{4}{w4};
\minivertex{7.5}{4}{w5};
\minivertex{8}{4}{w6};

\path (w0) edge [e1] (w1);
\path (w2) edge [e1] (w3);
\path (w4) edge [e1] (w6);
\path (w1) edge [e1,bend left=45] (w4);
\path (w2) edge [e1,bend left=30] (w4);

\end{scope}

\begin{scope}[xshift=7.7cm,yshift=-2cm]

\minivertex{5}{4}{w0};
\minivertex{5.5}{4}{w1};
\minivertex{6}{4}{w2};
\minivertex{6.5}{4}{w3};
\minivertex{7}{4}{w4};
\minivertex{7.5}{4}{w5};
\minivertex{8}{4}{w6};

\path (w0) edge [e1] (w1);
\path (w2) edge [e1] (w3);
\path (w4) edge [e1] (w6);
\path (w1) edge [e1,bend left=45] (w4);
\path (w2) edge [e1,bend left=30] (w4);

\end{scope}

\begin{scope}[xshift=-7.7cm,yshift=-2cm]

\minivertex{5}{4}{w0};
\minivertex{5.5}{4}{w1};
\minivertex{6}{4}{w2};
\minivertex{6.5}{4}{w3};
\minivertex{7}{4}{w4};
\minivertex{7.5}{4}{w5};
\minivertex{8}{4}{w6};

\path (w0) edge [e1] (w1);
\path (w2) edge [e1] (w3);
\path (w4) edge [e1] (w6);
\path (w1) edge [e1,bend left=45] (w4);
\path (w2) edge [e1,bend left=30] (w4);

\end{scope}

\begin{scope}[xshift=0cm,yshift=-2cm]

\minivertex{5}{4}{w0};
\minivertex{5.5}{4}{w1};
\minivertex{6}{4}{w2};
\minivertex{6.5}{4}{w3};
\minivertex{7}{4}{w4};
\minivertex{7.5}{4}{w5};
\minivertex{8}{4}{w6};

\path (w0) edge [e1] (w1);
\path (w2) edge [e1] (w3);
\path (w4) edge [e1] (w6);
\path (w1) edge [e1,bend left=45] (w4);
\path (w2) edge [e1,bend left=30] (w4);

\end{scope}

\begin{scope}[xshift=0cm,yshift=0cm]

\vertex{1.3}{2}{v1};
\vertex{2.3}{2}{v2};
\vertex{9.3}{2}{v3};
\vertex{10.3}{2}{v4};

\vertex{4.3}{1}{v5};
\vertex{4.8}{1}{v6};
\vertex{5.3}{1}{v7};
\vertex{5.8}{1}{v8};
\vertex{6.3}{1}{v9};
\vertex{6.8}{1}{v10};
\vertex{7.3}{1}{v11};

\vertex{4.3+7.7}{1}{w5};
\vertex{4.8+7.7}{1}{w6};
\vertex{5.3+7.7}{1}{w7};
\vertex{5.8+7.7}{1}{w8};
\vertex{6.3+7.7}{1}{w9};
\vertex{6.8+7.7}{1}{w10};
\vertex{7.3+7.7}{1}{w11};

\vertex{4.3-7.7}{1}{z5};
\vertex{4.8-7.7}{1}{z6};
\vertex{5.3-7.7}{1}{z7};
\vertex{5.8-7.7}{1}{z8};
\vertex{6.3-7.7}{1}{z9};
\vertex{6.8-7.7}{1}{z10};
\vertex{7.3-7.7}{1}{z11};

\node[draw, gray!30!black,fill=gray!50,rounded corners=3,opacity=0.5, fit={(3.3,3.6) (8.3,4.5)}, inner sep=0.25pt, label=center:{\phantom{j}{\small Prime}\phantom{jaaaaaaaaaaaaaa}}] (F) {};
\node[draw, gray!30!black,fill=gray!50,rounded corners=3,opacity=0.5, fit={(3.3,1.6) (8.3,2.5)}, inner sep=0.25pt, label=center:{\phantom{j}{\small Prime}\phantom{jaaaaaaaaaaaaaa}}] (G) {};
\node[draw, gray!30!black,fill=gray!50,rounded corners=3,opacity=0.5, fit={(-4.4,1.6) (0.6,2.5)}, inner sep=0.25pt, label=center:{\phantom{j}{\small Prime}\phantom{jaaaaaaaaaaaaaa}}] (H) {};
\node[draw, gray!30!black,fill=gray!50,rounded corners=3,opacity=0.5, fit={(11,1.6) (16,2.5)}, inner sep=0.25pt, label=center:{\phantom{j}{\small Prime}\phantom{jaaaaaaaaaaaaaa}}] (I) {};
\draw (F)--(v1);
\draw (F)--(v2);
\draw (F)--(v3);
\draw (F)--(v4);
\draw (F)--(G);
\draw (F)--(H);
\draw (F)--(I);
\draw (G)--(v5);
\draw (G)--(v6);
\draw (G)--(v7);
\draw (G)--(v8);
\draw (G)--(v9);
\draw (G)--(v10);
\draw (G)--(v11);
\draw (I)--(w5);
\draw (I)--(w6);
\draw (I)--(w7);
\draw (I)--(w8);
\draw (I)--(w9);
\draw (I)--(w10);
\draw (I)--(w11);
\draw (H)--(z5);
\draw (H)--(z6);
\draw (H)--(z7);
\draw (H)--(z8);
\draw (H)--(z9);
\draw (H)--(z10);
\draw (H)--(z11);

\end{scope}
\end{tikzpicture} 
}%
\caption{Modular decomposition of the graph $H_2$.}
    \label{fig:H2-decomp}
\end{figure}
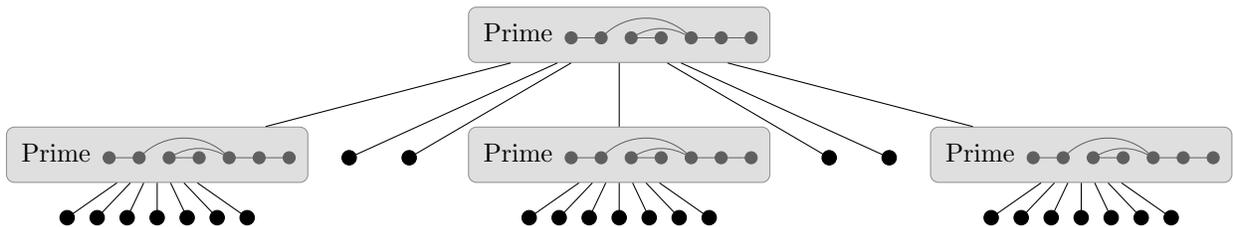

\begin{theorem}\label{incomp}
The family of graphs $\{H_n\}_{n\geq 0}$ has unbounded linear mim-width. More precisely, for every $n \geq 0$ we have that $\lmimw(H_n)=n$. In particular, there are classes of graphs of bounded modular-width and unbounded linear mim-width.
\end{theorem}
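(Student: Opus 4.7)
I would prove $\lmimw(H_n) = n$ by induction on $n$; combined with the already observed fact $\mw(H_n) = 7$, this yields the ``in particular'' statement.

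For the upper bound $\lmimw(H_n) \leq n$, the inductive construction uses the layout
\[
\sigma \;=\; v_1,\; \sigma_1,\; v_2,\; \sigma_2,\; v_3,\; \sigma_3,\; c,
\]
where $\sigma_i$ is an inductively produced layout of $H_{n-1}^{(i)}$ of mim-width at most $n-1$. For any cut of $\sigma$, the crossing edges split into internal edges of a single arm $H_{n-1}^{(i)}$ (contributing at most $n-1$ to any induced matching by the inductive hypothesis), edges $v_i w$ with $w \in V(H_{n-1}^{(i)})$ (all sharing the endpoint $v_i$), and edges $v_j c$ (all sharing $c$). A short case analysis shows that at most one ``outer'' edge can be appended to an internal induced matching without creating an extra edge in the induced subgraph, bounding the induced matching by $n$.

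For the lower bound $\lmimw(H_n) \geq n$, I would again induct on $n$, the base case $n=1$ being immediate since $H_1$ has edges. For the inductive step, take any layout $\sigma$ of $H_n$ and suppose toward contradiction that every cut of $\sigma$ has mim at most $n-1$. Applying the induction hypothesis to each restriction $\sigma_i$ of $\sigma$ to $V(H_{n-1}^{(i)})$ yields an internal cut $C_i$ of $\sigma_i$ with mim at least $n-1$, corresponding to a nonempty range of valid cut positions $[a_i, b_i-1]$ in $\sigma$. Since $V(H_{n-1}^{(i)})$ is a module of $H_n$ whose only outside neighbor is $v_i$, every edge avoiding $\{v_i\}\cup V(H_{n-1}^{(i)})$ can be appended to an induced matching inside $H_{n-1}^{(i)}$; call these \emph{compatible external} edges. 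The contradiction hypothesis forces that no compatible external edge crosses any cut in $[a_i, b_i-1]$. I would unpack this, across the three kinds of such edges ($cv_j$, $v_j w$, and internal edges of $H_{n-1}^{(j)}$ for $j \neq i$), into the clean statement that the entire set $\{c, v_j\}\cup V(H_{n-1}^{(j)})$ lies on a single side of the range for each $j \neq i$, and that this side is the same for both values of $j \neq i$; call it $s_i \in \{L, R\}$. A direct check shows that $s_i = L$ forces $V(H_{n-1}^{(i)})$ to contain the latest position among the three copies, while $s_i = R$ forces it to contain the earliest. Pigeonhole on $(s_1, s_2, s_3) \in \{L, R\}^3$ then produces two indices forced to simultaneously play the role of ``latest'' or ``earliest,'' which is impossible.

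The main technical hurdle is this unpacking step: translating ``no compatible external edge crosses any cut in the range $[a_i, b_i-1]$'' into the clean geometric statement that $\{c, v_j\}\cup V(H_{n-1}^{(j)})$ for each $j \neq i$ sits entirely on one side of the range. Each of the three families of compatible edges must be analyzed, with the $v_j w$ edges acting as ``glue'' that pulls $v_j$ to the same side as $V(H_{n-1}^{(j)})$ and the edges $cv_j$ then tying $c$ to the same side for both $j \neq i$. Once that separation is in place, the pigeonhole on $(s_1, s_2, s_3)$ closes the induction in a single line.
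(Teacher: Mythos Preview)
Your proposal is correct and follows the same inductive scheme as the paper, but the execution differs in both halves.

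For the upper bound, the paper uses the layout $\sigma_1,\,u_1,\,u_2,\,\sigma_2,\,x,\,u_3,\,\sigma_3$ (obtained by replacing the three leaves in a fixed mim-width-$1$ layout of the bipartite claw by optimal layouts of the three copies), whereas you use $v_1,\,\sigma_1,\,v_2,\,\sigma_2,\,v_3,\,\sigma_3,\,c$. Both work, and your case analysis is essentially the same as what the paper leaves implicit.

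For the lower bound, the paper's argument is shorter and sidesteps what you call the ``main technical hurdle.'' It simply orders the three internal cuts as $C_1 < C_2 < C_3$ and focuses on the middle one. Using that $H_n^1$ and $H_n^3$ are connected and lie at distance $4$ from $H_n^2$ (and $u_1,u_3$ at distance $3$), any edge of $H_n^1$, of $H_n^3$, or incident to $u_1$ or $u_3$ can be appended to the size-$n$ matching at $C_2$; this forces all of $H_n^1\cup\{u_1\}$ to the left of $C_2$ and all of $H_n^3\cup\{u_3\}$ to the right. Then the center $x$ cannot be placed: whichever side it lands on, one of the edges $xu_1$, $xu_3$ crosses $C_2$ and is compatible with the matching. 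Your symmetric range-and-pigeonhole argument on $(s_1,s_2,s_3)$ reaches the same contradiction, but the paper's ``look at the middle cut'' trick replaces the unpacking step and the pigeonhole by a single placement argument for $x$.
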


\begin{proof} 
We prove this by induction. Since $H_0$ is the one-vertex graph, $\lmimw(H_0)=0$. Since $H_1$ has at least one edge, $\lmimw(H_1)\geq 1$. On the other hand, the linear layout of $H_1$ (the bipartite claw) shown inside each prime node in \autoref{fig:H2-decomp} has mim-width~$1$, thus $\lmimw(H_1)=1$. Suppose now $n \geq 1$, and consider $H_{n+1}$.  Recall that 
$H_{n+1}$ is obtained from a bipartite claw by substituting each of the leaves by a copy of the graph $H_n$. Let $H_n^1$, $H_n^2$, and $H_n^3$ be the three copies of $H_n$, let $x$ be the center of the claw, and let $u_1, u_2, u_3$ be the vertices such that $N(u_i) = V(H_n^i) \cup \{x\}$ for $i \in \{1,2,3\}$. By inductive hypothesis, $\lmimw(H_n)=n$, and since $H_n$ is an induced subgraph of $H_{n+1}$ and the parameter is monotone with respect to induced subgraphs, $\lmimw(H_{n+1}) \geq n$. 

We prove that $\lmimw(H_{n+1}) \geq n+1$ by contradiction. Suppose $\lmimw(H_{n+1}) = n$, so there exists a linear layout of $\lmimw(H_{n+1})$ of mim-width $n$. We know that for each $i \in \{1,2,3\}$ we have a cut $C_i$ of mim $n$ where all the $n$ edges of the induced matching are edges of $H_n^i$. By symmetry, without loss of generality we assume these three cuts come in the order $C_1,C_2,C_3$. Note that, in the layout, all the vertices of $H_n^1$ must appear before $C_2$ and all the vertices of $H_n^3$ after $C_2$, as otherwise, since $H_n^1$ and $H_n^3$ are connected and the distance between $H_n^2$ and each of the other two copies of $H_n$ is $4$, there would be an extra edge crossing $C_2$ that would increase the mim of this cut to $n + 1$. Similarly, $u_1$ has to be placed before $C_2$ and $u_3$ has to be placed after $C_2$, since $u_i$ is connected to $H_n^i$ and both $u_1$ and $u_3$ are at distance $3$ from $H_n^2$. But then the vertex $x$ cannot be placed before $C_2$ or after $C_2$ without increasing the mim of this cut by adding at least one of the edges $u_1x$ or $u_3x$ to the induced matching. By contradiction, we conclude that $\lmimw(H_{n+1}) \geq n+1$.

A linear layout of $H_{n+1}$ with mim-width $n+1$ can be obtained from the linear layout of the bipartite claw shown inside each prime node in \autoref{fig:H2-decomp} (that has mim-width~$1$), by replacing each leaf with a linear layout of $H_{n}$ with mim-width $n$. That is, an optimal linear layout of $H_n^1$, $u_1$, $u_2$, an optimal linear layout of $H_n^2$, $x$, $u_3$, an optimal linear layout of  $H_n^3$.
\end{proof}

We leave as an open question whether there exist families with smaller modular-width and unbounded linear mim-width.

\section{Counterexample to reducing 2-thin modules}
\label{sec:counterexample}

Here we prove that graph $G'$ in \cref{fig:counterexample} is 2-thin, while $G$ is not, showing that replacing a module of thinness 2 with any other graph of thinness 2 can modify the thinness of the graph.

\begin{figure}[h]
    \begin{subfigure}{0.49\linewidth}
        \centering
        \includegraphics[width=0.4\linewidth]{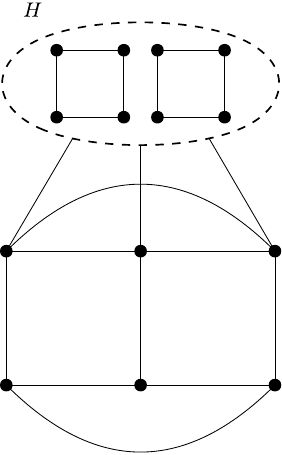}
        \caption{$\thin(G) = 3$.}
    \end{subfigure}
    \begin{subfigure}{0.49\linewidth}
        \centering
        \includegraphics[width=0.4\linewidth]{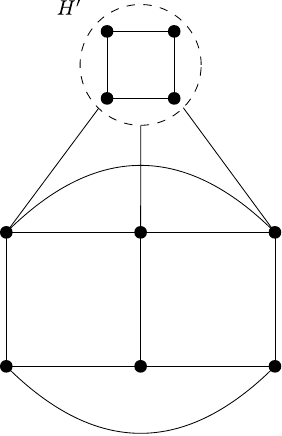}
        \caption{$\thin(G') = 2$.}
    \end{subfigure}
    \caption{For some graphs, replacing a module of thinness 2 with another graph of thinness 2 decreases the thinness of the graph. Here, dotted lines denote modules, meaning that every vertex adjacent to the dotted line is adjacent to every vertex inside the dotted line. Graph $G$ was found by running the algorithms by \citet{thinness-repo}.}
    \label{fig:counterexample}
\end{figure}

\begin{lemma}
\label[lemma]{prop:G'-has-thinness-2}
    Graph $G'$ in \cref{fig:counterexample} has thinness 2.
\end{lemma}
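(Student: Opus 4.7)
The plan is to prove the two inequalities $\thin(G') \geq 2$ and $\thin(G') \leq 2$ separately. For the lower bound, I would observe that $G'$ contains an induced cycle $C_k$ with $k \geq 4$, directly readable from the figure as the ``single-cycle-module'' (possibly together with one adjacent external vertex). Since interval graphs are chordal and coincide with the $1$-thin graphs by the characterization recalled in \cref{sec:preliminaries}, the existence of such an induced cycle forces $\thin(G') \geq 2$.

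For the upper bound, I would exhibit an explicit consistent pair $(<, S)$ with $|S| = 2$ and verify it by inspection. The construction is guided by the structure of $G'$: let $H$ denote the cycle module of $G'$ and let $R \coloneqq V(G') \setminus V(H)$. Since $H$ is a module, every vertex of $R$ is either adjacent to all of $H$ or to none of $H$, so the interaction between $R$ and $H$ in any triple is ``uniform''. I would place the vertices of $H$ consecutively in $<$ following a Hamiltonian path of the cycle (breaking the cycle at a non-edge), and then insert the few vertices of $R$ in an order compatible with the adjacencies they have among themselves. For the partition, I would color the vertices of $H$ with the two classes in the standard $2$-thin ordering of a cycle (assigning alternate vertices of a suitable matching to different classes), and assign each vertex of $R$ to whichever class avoids creating a bad triple with its neighbors.

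Verifying that the resulting $(<, S)$ is consistent reduces to checking, for every triple $u < v < w$ with $u, v$ in the same class, whether $uw \in E(G')$ implies $vw \in E(G')$. I would split this check by the number of vertices of $\{u,v,w\}$ that lie in $H$: triples entirely inside $H$ are handled by the choice of the $2$-thin ordering of the cycle, triples with exactly two vertices in $H$ follow from the module property of $H$ (every external neighbor sees either both or neither), triples with exactly one vertex in $H$ again use the module property, and triples with no vertex in $H$ are handled by the small induced subgraph $G'[R]$, which is a tree-like structure that can be ordered consistently.

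The main obstacle is choosing simultaneously the position in $<$ and the class in $S$ for each vertex of $R$ so that no mixed triple becomes inconsistent, because the ``alternation'' imposed by the cycle inside $H$ and the external edges from $R$ to $H$ impose competing constraints. Since $|V(G')|$ is about ten, this is resolvable by a small direct search among the few admissible orderings; the proof can then be presented by just exhibiting the witnessing $(<, S)$ together with the brief triple-by-triple verification sketched above.
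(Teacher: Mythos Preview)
Your lower-bound argument is fine and matches the paper: $G'$ contains an induced $C_4$, hence is not interval, hence $\thin(G')\geq 2$.

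The upper-bound strategy, however, is doomed. You propose to place the vertices of the cycle module $H'$ \emph{consecutively} in $<$ and then arrange the outside vertices $R$ around this block. The paper proves, in \cref{prop:no-consecutive-solution}, that there is \emph{no} two-class consistent solution for $G'$ in which the vertices of $H'$ are consecutive. So the explicit witness you are trying to build simply does not exist under that constraint, and the ``small direct search among the few admissible orderings'' you allude to will terminate without finding one. Indeed, the non-existence of such a consecutive solution is the very reason this graph $G'$ was chosen as a counterexample: it is what makes \cref{prop:replacing-module-maintains-thinness-if-irreducible-clique-consecutive} inapplicable and allows the thinness to jump to $3$ when $H'$ is replaced by $2C_4$.

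The paper's own proof just displays a concrete two-class solution (in \cref{fig:solution}) in which the vertices of $H'$ are \emph{not} consecutive; at least one vertex of $R$ is interleaved among them. Your case analysis, which splits triples by how many of $\{u,v,w\}$ lie in $H'$ and leans on the module property for the mixed cases, implicitly assumes $H'$ forms a block in $<$; once you interleave vertices of $R$ inside $H'$ that clean split breaks down, and the verification has to be done essentially vertex by vertex on the concrete order. To fix the proof, drop the consecutiveness assumption and exhibit an explicit order and $2$-coloring (as the paper does), then check consistency directly.
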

\begin{proof}
    \cref{fig:solution} shows a consistent solution for graph $G'$ using two classes. Note that $G'$ contains a cycle on 4 vertices as an induced subgraph, which is not an interval graph. Thus, we have that $\thin(G') > 1$, since the thinness is a hereditary property.
    \begin{figure}[h]
        \centering
        \includegraphics[width=0.25\linewidth]{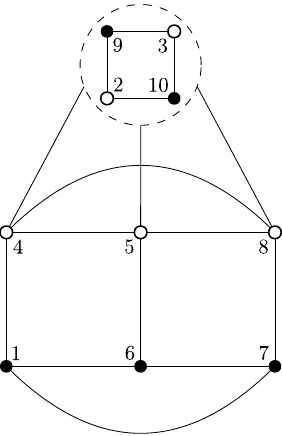}
        \caption{A consistent solution using two classes for graph $G'$ of \cref{fig:counterexample}. The numbers denote the order of the vertices. The filled-in vertices form one class of the partition, and the hollowed-out vertices the other one.}
        \label{fig:solution}
    \end{figure}
\end{proof}

The rest of this section is devoted to proving the following lemma.

\begin{lemma}
\label[lemma]{prop:G-not-2-thin}
    Graph $G$ in \cref{fig:counterexample} is not 2-thin.
\end{lemma}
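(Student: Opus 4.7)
The plan is to argue by contradiction: assume toward a contradiction that $G$ admits a consistent solution $(<, S)$ with $S = \{A, B\}$, and derive an impossible configuration by exploiting both the module structure and the small size of $G$.

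First, I would identify the vertices of $G$ indicated in Figure~\ref{fig:counterexample}, labeling the external vertices and the two cycles that form the ``double cycle module'' $H$, and list the local obstructions forced by induced subgraphs. Since the thinness is hereditary, every induced $C_4$, $C_5$, and net/sun-type subgraph of $G$ already rules out the interval property along $<$ inside a single class, and therefore fixes, up to reversal of $<$ and swap of $A$ and $B$, several of the forced partitions of four-vertex and five-vertex subsets. In particular, each of the two $4$-cycles inside the module $H$ forces its two pairs of opposite vertices to be split across $A$ and $B$, which already pins the class assignment of the vertices of $H$ up to the global $A \leftrightarrow B$ swap.

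Next, I would use the symmetries of $G$ (the automorphism swapping the two copies of the cycle inside the module, and the one reversing the external path around $H$) to fix, without loss of generality, the orientation of $<$ and the label of one distinguished vertex. At this point only a bounded number of relative orderings remain; for each, I would show that some triple $u < v < w$ with $u,v$ in the same class and $uw \in E(G)$ but $vw \notin E(G)$ appears. The key observation is that the ``double cycle'' module has two independent induced $C_4$'s that, together with the external neighborhood of $H$, force two incompatible orderings on the four cycle vertices of $H$: one orientation is required in order to avoid a forbidden triple with the external neighbors, while the other is required by the internal $C_4$'s together with the class assignment determined above. These two constraints contradict each other.

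The main obstacle is keeping the case analysis compact: a naive enumeration over all orderings and bipartitions of the roughly dozen vertices of $G$ is prohibitive. I would manage this by first pinning down the class of each vertex from the induced $C_4$'s inside $H$ (which leaves essentially one choice up to the global $A \leftrightarrow B$ swap) and only then analyzing the relative order along $<$, restricted by the external attachments of $H$. This reduces the analysis to a constant number of patterns, each of which can be ruled out by exhibiting an explicit inconsistent triple; the existence of such a minimal non-$2$-thin $G$ was already detected computationally by the algorithms of~\citet{thinness-repo}, so the remaining task is just to record, in the proof, the forced partition and the specific triple that fails in each surviving case.
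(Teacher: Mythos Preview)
Your approach differs substantially from the paper's. The paper never carries out a direct case analysis on $G$; instead it proves two auxiliary lemmas and deduces the result from them. First (\cref{prop:when-replacing-maintains-thinness}), it shows that $\thin(G)=2$ if and only if $G'$ admits an optimal consistent solution in which the vertices of the single-cycle module $H'$ appear \emph{consecutively} in the order; the forward direction uses the notion of an irreducible clique in the incompatibility graph of a module (\cref{prop:irreducible-clique}) together with ordering constraints between module vertices and their neighbors (\cref{prop:vertex-that-breaks-irreducible-clique-is-adjacent,prop:neighbors-of-modules-that-share-class-are-greater}). Second (\cref{prop:no-consecutive-solution}), it shows by a short case analysis on the six external vertices $a,b,c,d,e,f$ of $G'$ that no such consecutive solution exists. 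The double cycle in $G$ is never analyzed directly; its role is only to guarantee that, in any $2$-thin solution of $G$, the second copy of the cycle supplies a vertex sitting between the irreducible-clique vertices of the first copy, producing an inconsistent triple.

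Your sketch, by contrast, proposes to work inside $G$ itself. That is a legitimate route in principle, but the reduction step you rely on has a gap: it is \emph{not} true that an induced $C_4$ in a $2$-thin graph forces its two diagonal pairs into distinct classes. For the $C_4$ on $\{1,2,3,4\}$ with edges $12,23,34,41$ and order $1<2<3<4$, the only incompatibility edge is $\{1,2\}$, so the partition $\{1\}$ versus $\{2,3,4\}$ is consistent. Hence the class assignment of the eight vertices of $H$ is \emph{not} ``pinned up to the global $A\leftrightarrow B$ swap'' by the two $C_4$'s alone, and the subsequent case analysis---which you leave entirely unperformed---does not collapse to ``a constant number of patterns'' in the way you claim. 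To salvage the direct approach you would need a different structural observation to tame the partition (and the order) or a much longer enumeration; the paper sidesteps all of this by passing to $G'$ under the consecutiveness constraint, where the relevant case analysis is genuinely short.
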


In the consistent solution shown in \cref{fig:solution}, the vertices in $H'$ are not consecutive in the order. This is not a coincidence: there is in fact no optimal consistent solution for $G'$ where the vertices in $H'$ appear consecutively. This is crucial for establishing \cref{prop:G-not-2-thin}. These statements will be proved in \cref{prop:no-consecutive-solution,prop:replacing-module-maintains-thinness-if-irreducible-clique-consecutive}.

We begin with a standard definition used when analyzing the thinness of a graph. 

\begin{definition}[Incompatibility graph \citep{tesis-diego}]
\label{def:incompatibility-graph}
    The \emph{incompatibility graph} $G_<$ of a graph $G$ with order $<$ is a graph such that 
    \begin{itemize}
        \item $V(G_<) = V(G)$, and
        \item for every triple of vertices $u < v < w$, if $(u, w) \in E(G)$ and $(v, w) \not\in E(G)$, then $(u, v) \in E(G_<)$.
    \end{itemize}
\end{definition}

The definition implies that in the incompatibility graph $G_<$ of a graph $G$ with order $<$, the two endpoints of each edge cannot belong to the same class in a partition of $V(G)$ consistent with $<$. 


\citet{thinness-of-product-graphs} showed that in a consistent solution for a thinness $k$ graph there is always a set of $k$ vertices that belong to different classes such that each vertex in the set has a non-neighbor greater than it. We formalize this with the following definition and \cref{prop:irreducible-clique}.
\begin{definition}[Irreducible clique]
    An \emph{irreducible clique of size $\ell$} of an incompatibility graph $G_<$ of a graph $G$ is a clique $C \coloneqq\{v_1, \dots,v_\ell\}$ of $G_<$ such that for every vertex $v_i \in C$ there exists another vertex $w > v_i$ such that $(v_i, w) \not\in E(G)$.
\end{definition}

\begin{lemma}[{\citep[Lemma 26]{thinness-of-product-graphs}}]
\label[lemma]{prop:irreducible-clique}
    Let $G$ be a graph of thinness $k$. For every order $<$ of the vertices of $G$ there exists an irreducible clique of size $k$ in $G_<$.
\end{lemma}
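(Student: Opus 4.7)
The plan is to translate thinness into the language of partial orders, apply Mirsky's theorem, and then refine the resulting clique. Fix any linear order $<$ on $V(G)$ and define $u \preceq v$ to mean $u < v$ and $(u,v) \notin E(G_<)$; a short case check using \cref{def:incompatibility-graph} shows that $\preceq$ is transitive, so $(V(G), \preceq)$ is a poset whose incomparability graph is precisely $G_<$. Consistent partitions of $V(G)$ with order $<$ are then in bijection with chain partitions of $(V(G), \preceq)$, and by Mirsky's theorem the minimum size of either equals the maximum antichain of $\preceq$, namely $\omega(G_<)$. Applying this to every ordering yields $\omega(G_<) \geq \thin(G) = k$, so $G_<$ always contains a clique of size at least $k$.

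Next, let $K = \{v_1 < v_2 < \dots < v_m\}$ be a maximum clique of $G_<$ with $m \geq k$. For every $i \geq 2$, the edge $(v_{i-1}, v_i)$ in $G_<$ provides a witness $w > v_i$ with $(v_{i-1}, w) \in E(G)$ and $(v_i, w) \notin E(G)$, so $v_i$ has a non-neighbor greater than it and is therefore irreducible. Hence $K \setminus \{v_1\}$ is already an irreducible clique of size at least $k-1$, and the task reduces to ruling out that the largest irreducible clique has size exactly $k-1$.

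For this final step the key structural fact is that a non-irreducible vertex $v$ satisfies $u \preceq v$ for every $u < v$: any potential witness $x > v$ would require $(v, x) \notin E(G)$, contradicting non-irreducibility. Consequently the non-irreducible vertices form a single $\preceq$-chain, and every antichain of $\preceq$ contains at most one non-irreducible vertex, necessarily its $<$-minimum. I would then assume for contradiction that the set $R$ of irreducible vertices satisfies $\omega(G_<[R]) \leq k-1$. By Dilworth's theorem one obtains a chain cover of $R$ into at most $k-1$ $\preceq$-chains, and the plan is to absorb the entire chain of non-irreducibles into these $k-1$ chains, choosing the Dilworth cover carefully and performing exchanges if needed, exploiting that each non-irreducible is $\preceq$-above every $<$-smaller vertex. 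This would give a chain cover of $V(G)$ of size $k-1$, contradicting the lower bound $\omega(G_<) \geq k$. The main obstacle I expect is precisely this absorption step: a non-irreducible $n$ is $\preceq$-incomparable with every irreducible that is $<$-greater than $n$, so one needs to build the Dilworth cover of $R$ in a bottom-up manner that reserves, for each non-irreducible $n$, a chain whose current maximum lies $<$-below $n$; I expect the minimality of the cover, combined with the fact that the $<$-maximum of $V(G)$ is always non-irreducible, to make such reservations always available.
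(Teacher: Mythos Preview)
First, note that the paper does not prove this lemma; it is quoted from \citep{thinness-of-product-graphs}, so there is no in-paper argument to compare against. Your first two stages are sound (modulo calling Dilworth's theorem ``Mirsky's''): $\preceq$ is indeed a partial order whose incomparability graph is $G_<$, so $\omega(G_<)\geq k$, and deleting the $<$-minimum of any $\omega(G_<)$-clique leaves an irreducible clique of size at least $k-1$.

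The gap is the absorption step, and it is not merely incomplete---the implication you are trying to prove is false. You want to show that $\omega(G_<[R])\le k-1$ forces a $\preceq$-chain cover of $V(G)$ of size $k-1$; but take $G=P_3$ with $E(G)=\{ab,bc\}$ and order $b<a<c$. Here $b$ is non-irreducible (it is adjacent to both larger vertices), $c$ is non-irreducible (it is the maximum), and $a$ is irreducible (non-adjacent to $c$). One checks $E(G_<)=\{(b,a)\}$, so $\omega(G_<)=2$ while $\omega(G_<[R])=\omega(G_<[\{a\}])=1$. With $k=2$ your hypothesis $\omega(G_<[R])\le k-1$ holds, yet $V(G)$ admits no cover by $k-1=1$ chain. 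Concretely, $n_1=b$ is $\preceq$-incomparable with the sole irreducible $a>b$, and there is no chain whose current maximum lies below $b$; the ``reservation'' you hope for is unavailable.

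This example has $\thin(P_3)=1\ne 2$, so it does not contradict the lemma, but that is precisely the point: your absorption argument never uses the hypothesis $\thin(G)=k$; it only uses $\omega(G_<[R])\le k-1$ and tries to conclude $\omega(G_<)\le k-1$. Since that conclusion is false in general, any correct proof must feed $\thin(G)=k$ into this last step in an essential way---for instance by producing a \emph{different} order $<'$ witnessing $\thin(G)\le k-1$, rather than attempting a $(k-1)$-chain cover under the fixed order $<$.
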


We now prove some properties of the order between the vertices inside and outside a module in a graph.

\begin{lemma}
\label[lemma]{prop:vertex-that-breaks-irreducible-clique-is-adjacent}
    Let $G$ be a graph with a module $M$. Let $(<, S)$ be a consistent solution for $G$. If there exist three vertices $u < v < w$ such that
    \begin{itemize}
        \item $u$ and $w$ belong to an irreducible clique of the incompatibility graph $G[M]_<$,
        \item $v\in V(G)\setminus M$, and
        \item $v$ belongs to the same class as $u$ in $S$;
    \end{itemize}
    then $v$ is adjacent to the vertices in $M$.
\end{lemma}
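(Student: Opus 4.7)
The plan is to exhibit an explicit vertex $x \in M$ to which $v$ must be adjacent, and then invoke the module property of $M$ to conclude that $v$ is adjacent to every vertex of $M$.

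\textbf{Step 1: Extract a witness from the incompatibility edge in $G[M]_<$.} Since $u$ and $w$ lie in an (irreducible) clique of $G[M]_<$ with $u < w$, in particular $(u,w)$ is an edge of $G[M]_<$. By \cref{def:incompatibility-graph} applied to the graph $G[M]$ with the induced order, an edge $(u,w)$ in $G[M]_<$ witnesses the existence of a vertex $x \in M$ with $x > w$ such that $(u,x) \in E(G[M]) \subseteq E(G)$ and $(w,x) \notin E(G[M])$. (I will only need the first of these two facts.)

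\textbf{Step 2: Apply consistency to the triple $u < v < x$.} The ordering $u < v < w < x$ is a strict chain in $<$, so $u < v < x$. By hypothesis, $u$ and $v$ belong to the same class of $S$, and by Step~1 we have $(u,x) \in E(G)$. Since $(<,S)$ is consistent with $G$, the triple $u < v < x$ must be consistent, forcing $(v,x) \in E(G)$.

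\textbf{Step 3: Conclude via the module property.} We have $x \in M$, $v \notin M$, and $v$ is adjacent to $x$. Since $M$ is a module of $G$, every vertex of $V(G)\setminus M$ is either adjacent to all vertices of $M$ or to none, so $v$ must be adjacent to all vertices of $M$, as claimed.

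\textbf{Expected difficulty.} There is no real obstacle here: the argument is a one-line consequence of the definition of the incompatibility graph combined with the module property. The only point requiring a moment of care is remembering that the witness $x$ supplied by the edge $(u,w)\in E(G[M]_<)$ is automatically a vertex of $M$ (because we are working with $G[M]_<$ rather than $G_<$), which is precisely what lets us close the argument by invoking that $M$ is a module of $G$.
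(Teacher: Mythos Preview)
Your proof is correct and follows essentially the same approach as the paper: both extract a witness $x\in M$ with $x>w$ and $(u,x)\in E(G)$ from the incompatibility edge $(u,w)$ in $G[M]_<$, and then use consistency of the triple $u<v<x$ together with the module property of $M$. The only cosmetic difference is that you argue directly while the paper phrases the same step as a contrapositive.
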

\begin{proof}
    Otherwise, there would be a vertex $x \in M$ adjacent to $u$ greater than $w$ (and thus greater than $v$) that is not adjacent to $v$, which would make the triple $\{u, v, x\}$ inconsistent.
\end{proof}

\begin{lemma}
\label[lemma]{prop:neighbors-of-modules-that-share-class-are-greater}
    Let $G$ be a graph, and $M$ be a module of $G$ with a neighbor $v$. In every consistent solution $(<, S)$ for $G$, if $v$ shares a class in $S$ with a vertex $u$ in an irreducible clique of the incompatibility graph $G[M]_<$, then $u < v$.
\end{lemma}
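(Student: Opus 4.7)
The plan is to proceed by contradiction, suppose that $v < u$, and exhibit an inconsistent triple that contradicts the hypothesis that $(<,S)$ is a consistent solution.

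First observe the setup: since $v$ is a neighbor of the module $M$ (in the sense defined in \cref{sec:preliminaries}), we have $v \notin M$, and $v$ is adjacent to every vertex in $M$. On the other hand, the irreducible clique lives in $G[M]_<$, so $u \in M$; in particular $u \neq v$, so the two are comparable and either $u < v$ or $v < u$.

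Suppose for contradiction that $v < u$. By the definition of irreducible clique applied to $u$ in $G[M]_<$, there exists a vertex $w \in M$ with $u < w$ such that $(u,w) \notin E(G[M])$, and since both endpoints lie in $M$ this also means $(u,w) \notin E(G)$. Now consider the triple $v < u < w$ in $G$. The vertex $w$ lies in $M$ and $v$ is a neighbor of $M$, so $(v,w) \in E(G)$, while $u$ and $v$ share a class of $S$ by hypothesis. Thus $v$ and $u$ are in the same class, $(v,w) \in E(G)$, but $(u,w) \notin E(G)$, contradicting consistency of $(<,S)$.

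Since the only remaining possibility is $u < v$, the lemma follows. The argument is essentially a direct application of the consistency condition to the non-neighbor of $u$ guaranteed by irreducibility, and there is no real obstacle; the only subtlety is recalling that the non-neighbor $w$ provided by the irreducible clique of $G[M]_<$ lies inside $M$, so that the module property of $M$ forces $(v,w) \in E(G)$.
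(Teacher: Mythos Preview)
Your proof is correct and follows essentially the same approach as the paper's own proof: both argue by contradiction, using the non-neighbor $w \in M$ of $u$ provided by the definition of irreducible clique to produce the inconsistent triple $v < u < w$. Your write-up is somewhat more explicit than the paper's (in particular, you spell out why $w$ lies in $M$ and hence is adjacent to $v$), but the argument is the same.
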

\begin{proof}
    By \hyperref[def:incompatibility-graph]{the definition of incompatibility graph}, there exists a vertex $w \in M$ that is not adjacent to $u$ such that $u < w$. Thus, $v$ must be larger than $u$, as otherwise $\{v, u, w\}$ would form an inconsistent triple.
\end{proof}

\begin{definition}
    The result of \emph{replacing a module $M$ of a graph $G$ with another graph $H$} is a graph $G'$ where $V(G') \coloneqq (V(G) \setminus M) \cup V(H)$ and the edges of $G'$ are the union of the edges of $H$ and the edges of $G \setminus M$, making the neighbors of $M$ in $G$ also be neighbors of every vertex of $V(H)$ in $V(G')$.
\end{definition}

In some cases, replacing a module with a graph with the same thinness does not increase the thinness of the graph. This is the case when the module has thinness 1, as shown in \cref{prop:contract-interval-module}, but this can actually be generalized to modules with greater thinness, as shown in the following lemma.

\begin{lemma}
\label[lemma]{prop:replacing-module-maintains-thinness-if-irreducible-clique-consecutive}
    Let $\ell \in \N$. Let $G$ be a graph with a module $M$, and $G'$ be the result of replacing $M$ with a thinness $\ell$ graph $H$ in $G$. Consider a consistent solution $(<, S)$ in $k$ classes for $G$, and suppose there exists an irreducible clique $C$ of $G[M]_<$ of size $\ell$. If there are no vertices in $V(G) \setminus M$ between the vertices of $C$ according to $<$ that share a class in $S$ with the vertices in $C$, then there exists a consistent solution in $k$ classes for $G'$ where all the vertices in $V(H)$ are consecutive in the order.
\end{lemma}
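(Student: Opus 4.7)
The plan is to construct a consistent solution $(<', S')$ for $G'$ by splicing an optimal $\ell$-thin solution $(<_H, S_H)$ for $H$ into $(<, S)$ at the ``slot'' defined by the irreducible clique $C = \{c_1 < \cdots < c_\ell\}$. Since $C$ is a clique in $G[M]_<$, which is a subgraph of the incompatibility graph $G_<$, its vertices occupy $\ell$ pairwise distinct classes $S_1, \ldots, S_\ell$ of $S$, where $S_i$ is the class of $c_i$. After fixing an arbitrary bijection between the classes of $S_H$ and $\{S_1, \ldots, S_\ell\}$, each vertex of $V(H)$ inherits a class in $\{S_1, \ldots, S_\ell\}$ via this bijection.

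For the order $<'$ on $V(G') = (V(G) \setminus M) \cup V(H)$, I would restrict $<$ to $V(G) \setminus M$, order $V(H)$ internally by $<_H$ as a consecutive block, and place this block at the position of $c_\ell$, so that an outside vertex $w$ precedes $V(H)$ in $<'$ if and only if $w < c_\ell$ in $<$. The partition $S'$ keeps the original class of each outside vertex and assigns $V(H)$ vertices via the bijection, so $S'$ uses at most $k$ classes.

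Verification proceeds by checking every triple $u <' v <' w$ of $G'$, split by the type profile of $(u, v, w)$ according to whether each lies in $V(H)$ or $V(G) \setminus M$. The profile $(\mathrm{H}, \mathrm{W}, \mathrm{H})$ is impossible since $V(H)$ is consecutive in $<'$; the homogeneous profiles $(\mathrm{H}, \mathrm{H}, \mathrm{H})$ and $(\mathrm{W}, \mathrm{W}, \mathrm{W})$ follow immediately from the consistency of $(<_H, S_H)$ and $(<, S)$, respectively, using that edges among outside vertices coincide in $G$ and $G'$. The five remaining mixed profiles exploit that $M$ is a module, so that an outside vertex is adjacent to every vertex of $V(H)$ in $G'$ if and only if it is adjacent to every vertex of $M$ in $G$.

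The main technical obstacle lies in those mixed profiles involving an outside vertex $u$ sharing a class with a $V(H)$ vertex of class $S_j$: by construction, $u$ then shares the class $S_j$ with $c_j \in C$ in $(<, S)$. In the profiles $(\mathrm{H}, \mathrm{W}, \mathrm{W})$ and $(\mathrm{W}, \mathrm{W}, \mathrm{H})$, direct triples of the form $(c_j, v, w)$ or $(u, v, c_\ell)$ in $(<, S)$ are seen to be inconsistent, giving the contradiction. In $(\mathrm{W}, \mathrm{H}, \mathrm{H})$ and $(\mathrm{W}, \mathrm{H}, \mathrm{W})$, I would invoke \cref{prop:neighbors-of-modules-that-share-class-are-greater} to force $c_j < u$ in $<$ whenever $u \in N(M)$, and combine this with $u < c_\ell$ (from the placement of $V(H)$) to obtain $c_j < u < c_\ell$, which places $u$ strictly between vertices of $C$ in $<$ while sharing a class with $c_j \in C$, directly contradicting the hypothesis of the lemma; if instead $u < c_j$, then the triple $(u, c_j, w)$ in $(<, S)$ yields the inconsistency. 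The case analysis is routine but requires systematically tracking, in each subcase, which adjacencies to $M$ are forced by the putative inconsistency in $G'$.
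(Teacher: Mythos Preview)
Your construction and case analysis coincide with the paper's: delete $M$, insert the block $(V(H), <_H)$ at the position of $c_\ell$, identify the $\ell$ classes of $S_H$ with the classes of the $c_i$, and verify consistency triple by triple using the module property and surrogate triples in $(<,S)$. One small point of care in your combined treatment of $(\mathrm{W},\mathrm{H},\mathrm{H})$ and $(\mathrm{W},\mathrm{H},\mathrm{W})$: the fallback triple $(u,c_j,w)$ in $(<,S)$ only makes sense for $(\mathrm{W},\mathrm{H},\mathrm{W})$ (there $w\in V(G)$), while for $(\mathrm{W},\mathrm{H},\mathrm{H})$ the assumed inconsistency already forces $u\in N(M)$, so \cref{prop:neighbors-of-modules-that-share-class-are-greater} yields $c_j<u<c_\ell$ and the other branch never arises---the paper handles that profile by unpacking the irreducible-clique definition directly rather than invoking the lemma, but the content is the same.
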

\begin{proof}
    Suppose $(<, S)$ is a consistent solution for $G$ in $k$ classes where the $\ell$ vertices $C$ of an irreducible clique of $G[M]_<$ have no vertex of $V(G) \setminus M$ between them. Let $(<_H, S_H)$ be an optimal consistent solution for $H$.
    
    We construct a solution $(<', S')$ in $k$ classes as follows.
    To construct $<'$, we remove all vertices of $M$ from the order $<$, and replace the greatest vertex of $C$ with the vertices in $V(H)$, ordered according to $<_H$. The partition $S'$ is defined by identifying the $\ell$ classes of $S_H$ with the $\ell$ different classes to which the vertices of $C$ belong in the partition $S$.
    
    We claim this solution is consistent. To prove this, we analyze each possible triple of vertices $u <' v <' w$ to see that they form a consistent triple.
    \begin{itemize}
        \item \underline{$\{u, v, w\} \subseteq V(H)$ or $\{u, v, w\} \subseteq V(G') \setminus V(H)$:} This triple of vertices is consistent in either $(<, S)$ or $(<_H,S_H)$, which are induced solutions of $(<', S')$.
        \item \underline{$u \in V(H)$ and $v, w \not\in V(H)$:} There exists a vertex $x \in C$ that belongs to the same class as $u$ in $S$ and also satisfies $x < v < w$. Thus, if $w$ is adjacent to $u$ and $v$ belongs to the same class as $u$ in $S'$, then $w$ is also adjacent to $x$, and $v$ belongs to the same class as $x$ in $S$. Therefore, as $(<, S)$ is a consistent solution, the set $\{u,v,w\}$ must be a consistent triple.
        \item \underline{$u \not\in V(H)$, $v \in V(H)$, and $w \not\in V(H)$:} There exists a vertex $x \in C$ in the same class as $v$ in $S$. There are two possibilities:
        \begin{enumerate}
            \item We have $x < u$, in which case $u$ appears between two vertices of $C$ in $<$. By hypothesis, vertex $u$ does not share class with $v$, and so this triple is consistent.
            \item Otherwise, we have $x > u$. As $\{u, x, w\}$ does not form an inconsistent triple in $(<, S)$, the same thing happens for $\{u, v, w\}$ in $(<', S')$.
        \end{enumerate}
        \item \underline{$u, v \not\in V(H)$ and $w \in V(H)$:} There is a vertex $x \in C$ that is greater than both $u$ and $v$ according to $<$, and that forms a consistent triple with them. As vertex $w$ has the same neighborhood as $x$ in $G \setminus H$, the triple $\{u, v, w\}$ is also consistent.
        \item \underline{$\{u,v\} \subseteq V(H)$ and $w \not\in V(H)$:} Vertex $w$ is either adjacent to both $u$ and $v$, or to none of them, so this triple is consistent.
        \item\underline{$u \in V(H), v \not\in V(H)$, and $w \in V(H)$:} There are no vertices outside $H$ between two vertices of $H$, so this case does not happen.
        \item\underline{$u \not\in V(H)$ and $\{v, w\} \subseteq V(H)$:} If $u$ and $v$ belong to the same class in $S'$, there is also a vertex $x \in C$ in the same class as $u$. By hypothesis, we have $u < x$. As $x$ belongs to an irreducible clique of $G[M]_<$, there exists a vertex $y \in M$ that is not a neighbor of $x$ such that $x < y$. As $(<, S)$ is a consistent solution, this means that $u$ is not a neighbor of the vertices in the module $M$, and thus is also not a neighbor of the vertices of $H$ in $G'$. Therefore, this triple is consistent.
    \end{itemize}
\end{proof}

The converse of \cref{prop:replacing-module-maintains-thinness-if-irreducible-clique-consecutive} is also true when referring to the graphs of \cref{fig:counterexample}.

\newcommand{\singleCycleGraph}{G'}
\newcommand{\singleCycleModule}{H'}
\newcommand{\doubleCycleGraph}{G}
\newcommand{\doubleCycleModule}{H}
\begin{lemma}
\label[lemma]{prop:when-replacing-maintains-thinness}
    Let $\singleCycleGraph$ be a thinness two graph with a module $\singleCycleModule$ that induces a cycle on 4 vertices. Define $\doubleCycleGraph$ to be the result of replacing $\singleCycleModule$ with the union of two cycles on 4 vertices. The thinness of $\doubleCycleGraph$ is equal to 2 if and only if there exists an optimal consistent solution $(<', S')$ for $\singleCycleGraph$ where the vertices in $\singleCycleModule$ appear consecutively in $<'$.

\end{lemma}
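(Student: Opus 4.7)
The plan handles the two implications separately.

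For the $(\Leftarrow)$ direction, the strategy is to apply \cref{prop:replacing-module-maintains-thinness-if-irreducible-clique-consecutive} essentially off the shelf. Let $(<', S')$ be an optimal consistent solution for $\singleCycleGraph$ on $2$ classes with $\singleCycleModule$ consecutive in $<'$. Since $\singleCycleModule \cong C_4$ has thinness $2$, \cref{prop:irreducible-clique} yields an irreducible clique of size~$2$ in $\singleCycleGraph[\singleCycleModule]_{<'}$, and consecutivity of $\singleCycleModule$ forbids any vertex of $V(\singleCycleGraph) \setminus \singleCycleModule$ from separating its two vertices. Since $\doubleCycleModule = 2 C_4$ has thinness~$2$ by \cref{prop:thinness-union}, applying \cref{prop:replacing-module-maintains-thinness-if-irreducible-clique-consecutive} with $M = \singleCycleModule$, $H = \doubleCycleModule$, and $\ell = k = 2$ produces a $2$-class consistent solution for $\doubleCycleGraph$. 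Combined with $\thin(\doubleCycleGraph) \geq \thin(\singleCycleGraph) = 2$ (because $\singleCycleGraph$ is an induced subgraph of $\doubleCycleGraph$), this yields $\thin(\doubleCycleGraph) = 2$.

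For the $(\Rightarrow)$ direction, fix an optimal consistent solution $(<, S)$ for $\doubleCycleGraph$ on $2$ classes and label the two $C_4$-components of $\doubleCycleModule$ as $H_1$ and $H_2$. Both $H_1$ and $H_2$ are modules of $\doubleCycleGraph$ sharing the external neighborhood $N_G(\doubleCycleModule)$, and we identify $\singleCycleModule$ with $H_1$, so that $V(\singleCycleGraph) = V(\doubleCycleGraph) \setminus H_2$. The overall plan is first to produce a $2$-class consistent solution $(<^{\star}, S)$ of $\doubleCycleGraph$ in which $H_1$ is a consecutive block in $<^{\star}$, and then to restrict it to $V(\singleCycleGraph)$: this restriction will then be the desired consistent solution for $\singleCycleGraph$ with $\singleCycleModule = H_1$ consecutive.

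To produce $<^{\star}$, the idea is to relocate every vertex of $V(\doubleCycleGraph) \setminus H_1$ that lies between two vertices of $H_1$ in $<$ to a position immediately before the first $H_1$ vertex or immediately after the last one, depending on whether the vertex is a non-neighbor or a neighbor of $H_1$ (these are the only options since $H_1$ is a module, and vertices of $H_2$ necessarily fall in the non-neighbor category). The crucial tools for showing that such a rearrangement remains consistent are the module property of $H_1$, which trivially handles triples fully inside $H_1$ or that pair two $H_1$ vertices with a common neighbor or non-neighbor, and \cref{prop:vertex-that-breaks-irreducible-clique-is-adjacent} applied to an irreducible clique $\{u,w\}$ of size~$2$ in $\doubleCycleGraph[H_1]_<$ given by \cref{prop:irreducible-clique}; the latter pins down the class of any possible intruder and forces adjacency with $H_1$ whenever an intruder could create an inconsistent triple with the clique.

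The hard part will be the case analysis that verifies consistency of the rearranged triples: for every triple whose relative order was changed by the relocation, one must show that it either has the correct adjacency pattern or that its two earliest vertices lie in different classes. The cases split by the module membership of each of the three vertices (among $H_1$, $H_2$, $N_G(\doubleCycleModule) \setminus \doubleCycleModule$, or $V(\doubleCycleGraph) \setminus N_G[\doubleCycleModule]$), and the expectation is that each case follows from the module structure of $H_1$ (and of $\doubleCycleModule$) together with the irreducible-clique constraint. Once the classification is in place, every individual subcase should reduce to a routine check akin to those in the proof of \cref{prop:replacing-module-maintains-thinness-if-irreducible-clique-consecutive}.
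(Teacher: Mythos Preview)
Your $(\Leftarrow)$ direction is correct and coincides with the paper's argument.

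For $(\Rightarrow)$ you take a genuinely different route. The paper argues by contradiction: assuming no optimal solution for $\singleCycleGraph$ has $\singleCycleModule$ consecutive, it restricts an optimal solution of $\doubleCycleGraph$ to $\doubleCycleGraph\setminus C'$ (where $C'$ is the $C_4$-component of $\doubleCycleModule$ \emph{not} containing the $<$-minimal vertex $u$ of $\doubleCycleModule$), obtains an intruder $v$ between the irreducible-clique vertices $u,w$ of the other component $C$, and then uses the second cycle $C'$ to supply a vertex $u'$ in the same class as $u$ with $u<u'<v$ (via \cref{prop:neighbors-of-modules-that-share-class-are-greater}); the triple $u,u',x$ with $x\in C$ adjacent to $u$ then contradicts consistency.

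Your constructive rearrangement has a real gap. \Cref{prop:vertex-that-breaks-irreducible-clique-is-adjacent} only controls intruders lying \emph{between the two irreducible-clique vertices} $u,w$ of $\doubleCycleGraph[H_1]_<$, and in $C_4$ this pair need not span all of $H_1$. Concretely, for the cycle $1\text{--}2\text{--}3\text{--}4$ in the order $1<3<2<4$ the unique irreducible clique of size~$2$ is $\{1,2\}$, so the slot between the third and fourth $H_1$-vertices is uncontrolled. An external non-neighbour $z$ of $\doubleCycleModule$ can sit there (the triples $1<z<4$ and $3<z<4$ force $z$ into the class of~$2$). If $q$ is any later external non-neighbour of $\doubleCycleModule$ with $zq\in E(\doubleCycleGraph)$, the original order is consistent (the triple $2<z<q$ imposes nothing since $2$ is not adjacent to $q$), but after you relocate $z$ before $H_1$ you obtain $z<2<q$ with $z,2$ in the same class, $z$ adjacent to $q$, and $2$ not adjacent to $q$: an inconsistent triple. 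Neither the module property of $H_1$ nor \cref{prop:vertex-that-breaks-irreducible-clique-is-adjacent} blocks this configuration, so the proposed case analysis cannot close as stated.

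More fundamentally, your plan never exploits that $H_2\cong C_4$ (thinness two, hence meeting both classes and carrying its own irreducible clique). Treating $H_2$ merely as ``non-neighbour intruders to push before $H_1$'' discards exactly the structure the paper leverages to force the contradiction. To repair the argument you would need either to bring $H_2$ back into play (as the paper does) or to justify a stronger normalisation of the $H_1$-ordering that rules out intruders beyond the irreducible-clique span.
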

\begin{proof}
Let $\doubleCycleModule$ be the union of two cycles on 4 vertices. Note that $\thin(\singleCycleGraph[\singleCycleModule]) = \thin(\doubleCycleModule) = 2$.

$\impliedby$) By \cref{prop:irreducible-clique}, the incompatibility graph $\singleCycleGraph[\singleCycleModule]_{<'}$ contains an irreducible clique of size 2. Therefore, this implication is deduced directly from \cref{prop:replacing-module-maintains-thinness-if-irreducible-clique-consecutive}.

$\implies)$ Suppose for a contradiction that $\thin(\doubleCycleGraph) = 2$ and $\singleCycleModule$ does not appear consecutively in any optimal consistent solution for $\singleCycleGraph$.

Let $\Gamma \coloneqq (<, S)$ be an optimal consistent solution for $\doubleCycleGraph$. Let $u$ be the first vertex according to $<$ that belongs to $V(\doubleCycleModule)$, and define $C$ to be the set of vertices of the connected component of $u$ in $\doubleCycleModule$, which induces one of the two cycles of $\doubleCycleModule$. Additionally, denote the other cycle of $\doubleCycleModule$ as $C'\coloneqq V(\doubleCycleModule) \setminus C$. 

Note that in every possible order of $C$, vertex $u$ forms an irreducible clique with some other vertex $w$ in the corresponding incompatibility graph. Consider the solution $\Gamma'$ induced by the vertices of $\doubleCycleGraph \setminus C'$ in $\Gamma$. This solution $\Gamma'$ is also a solution for $\singleCycleGraph$ (up to isomorphism), and hence, by hypothesis, the module $\singleCycleModule$ does not appear consecutively in $\Gamma'$. Thus, by \cref{prop:replacing-module-maintains-thinness-if-irreducible-clique-consecutive}, there exists a vertex $v \in V(\doubleCycleGraph) \setminus \doubleCycleModule$ such that $u < v < w$. Additionally, vertex $v$ shares a class with $u$ in $S$, and by \cref{prop:vertex-that-breaks-irreducible-clique-is-adjacent}, it is adjacent to the vertices in $\doubleCycleModule$.

The vertices in $C'$ also contain a vertex $u'$ in the same class as $u$ in $S$, as $C'$ induces a cycle, which is a thinness two graph. By \cref{prop:neighbors-of-modules-that-share-class-are-greater}, vertex $u'$ must be smaller than $v$. Note that $u <' u'$, as $u$ is the first vertex of $\doubleCycleModule$. Also, by \hyperref[def:incompatibility-graph]{the definition of incompatibility graph}, there exists a vertex $x \in C$ that is greater than $w$ and is adjacent to $u$. We therefore have that $\{u, u', x\}$ forms an inconsistent triple, which contradicts the fact that $\Gamma$ is a consistent solution.
\end{proof}

The following lemma along with \cref{prop:when-replacing-maintains-thinness} will finish the proof.
\begin{lemma}
\label[lemma]{prop:no-consecutive-solution}
    There is no optimal consistent solution for graph $G'$ in \cref{fig:counterexample} in which all vertices in module $H'$ appear consecutively.
\end{lemma}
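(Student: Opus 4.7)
The plan is to argue by contradiction: assume that $(<, S)$ is an optimal (i.e., 2-class) consistent solution for $G'$ in which all four vertices of $H'$ appear consecutively in $<$. Label them $h_1 < h_2 < h_3 < h_4$. Since $G'[H']$ is an induced $C_4$, which has thinness $2$, the restricted solution $(<, S)|_{H'}$ is itself consistent and must use both classes.

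First I would analyze the consistent $2$-colorings of the $C_4$ under each possible placement of its edges on the chain $h_1 < h_2 < h_3 < h_4$. A short direct enumeration of the triples $h_i < h_j < h_k$ (as in \cref{def:incompatibility-graph}) shows that in every such linear order there is a nonempty, small set of pairs forced to lie in distinct classes in any $2$-class consistent solution, and in particular, by \cref{prop:irreducible-clique} applied to $G'[H']$, there is a pair $\{u, w\} \subseteq H'$ with $u < w$ that forms an irreducible clique of size $2$ in $G'[H']_<$. These two vertices are in distinct classes in $S$, so $u$ and $w$ determine a canonical correspondence between classes of $S$ and vertices of $\{u,w\}$.

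Next, I would use \cref{prop:vertex-that-breaks-irreducible-clique-is-adjacent} and \cref{prop:neighbors-of-modules-that-share-class-are-greater} with the module $M = H'$ to extract constraints on $V(G') \setminus H'$. Namely, any vertex $v \in V(G') \setminus H'$ that shares a class with $u$ in $S$ must satisfy $u < v$, and if moreover $v < w$ then $v$ must be adjacent to $H'$; but since $H'$ is consecutive, no vertex outside $H'$ lies strictly between $u$ and $w$, so all vertices sharing a class with $u$ appear after the block $h_1,\dots,h_4$. A symmetric analysis constrains the vertices sharing a class with $w$.

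Finally, I would carry out a finite case analysis using the explicit structure of $G'$ shown in \cref{fig:counterexample}(b), running through the small number of essentially different consecutive placements of $H'$ in $<$ (up to the symmetries of $C_4$) and the compatible assignments of the $h_i$ to the two classes. In each case, the constraints from the previous step combined with the specific adjacencies and non-adjacencies among vertices of $V(G') \setminus H'$ and their relations to $H'$ force the existence of a triple $u' < v' < w'$ with $u',v'$ in the same class, $(u', w') \in E(G')$, and $(v', w') \notin E(G')$, contradicting the consistency of $(<, S)$. The main obstacle is precisely this last step: while each individual case reduces to checking a handful of triples, the enumeration is delicate because we need to rule out \emph{every} placement of $H'$ and \emph{every} admissible $2$-coloring of $H'$; the explicit and small size of $G'$ is what makes this tractable.
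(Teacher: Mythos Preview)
Your overall strategy—assume a consecutive optimal solution, locate an irreducible $2$-clique $\{u,w\}$ in $G'[H']_<$, constrain the outside vertices via \cref{prop:vertex-that-breaks-irreducible-clique-is-adjacent} and \cref{prop:neighbors-of-modules-that-share-class-are-greater}, and finish by cases—matches the paper's outline, but there is a concrete gap in your third paragraph. You assert that ``any vertex $v \in V(G') \setminus H'$ that shares a class with $u$ must satisfy $u < v$'', citing \cref{prop:neighbors-of-modules-that-share-class-are-greater}; but that lemma only applies to \emph{neighbors} of the module. The three non-neighbors of $H'$ in $G'$ (the vertices the paper calls $d,e,f$) are not covered by it, and in fact they \emph{can} lie before the block—this is precisely one of the two main cases the paper must handle. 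Consequently your conclusion that ``all vertices sharing a class with $u$ appear after the block $h_1,\dots,h_4$'' is unjustified, and the case analysis you sketch in the last paragraph would be incomplete without a separate treatment of where $d,e,f$ sit relative to the block.

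The paper's argument is also tighter than the enumeration you propose: it never lists orderings or $2$-colorings of the $C_4$. It works with the irreducible pair $\{u,v\}$ abstractly, uses consecutivity just once (to force the three neighbors $a,b,c$ of $H'$ past $v$), and then proves a short claim that $d,e,f$ must all lie on the same side of $v$. This collapses the analysis to exactly two cases; in each, the explicit matching structure in $G'$ between $\{a,b,c\}$ and $\{d,e,f\}$ produces three vertices that are pairwise forced into distinct classes, contradicting $|S|=2$. Your brute-force enumeration could in principle be completed, but it would be longer and would still require the missing analysis of the non-neighbors of $H'$.
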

\begin{proof}
Suppose there exists a consistent solution $(<, S)$ for $G'$ in which all vertices of $H'$ appear consecutively. By \cref{prop:G'-has-thinness-2}, this solution partitions $V(G)$ into two classes. On the other hand, by \cref{prop:irreducible-clique} and the fact that $\thin(G[H']) = 2$, there exists an irreducible clique $\{u, v\}$ of size 2 in $G[H']_<$, where $u < v$. These two vertices belong to two different classes in $S$.

We designate the neighbors of $H'$ in $G'$ as $\{a,b,c\}$, and the non-neighbors as $\{d,e,f\}$, as shown in \cref{fig:names}.
\begin{figure}[h]
    \centering
    \includegraphics[width=0.25\linewidth]{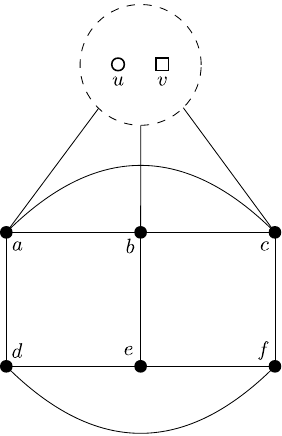}
    \caption{Names of the vertices of $G' \setminus H'$ in \cref{fig:counterexample}. Vertices $u$ and $v$ belong to different classes of $S$, indicated by the hollowed-out circle and square respectively.}
    \label{fig:names} 
\end{figure}

The vertices of $V(G') \setminus H'$ must belong to one of the two classes of $S$. Therefore, by \cref{prop:neighbors-of-modules-that-share-class-are-greater}, we have that $v < \{a,b,c\}$. Without loss of generality, assume that $a < b < c$.

\begin{claim}
\label{claim:non-neighbors-all-together}
    Either every vertex in $\{d,e,f\}$ is lower than $v$, or every vertex in $\{d,e,f\}$ is greater than $v$.
\end{claim}
\begin{proof}
Suppose this is not the case, and there exist two vertices $w_1, w_2 \in \{d,e,f\}$ such that $w_1 < v < w_2$. As $w_1$ and $w_2$ are adjacent but $v$ and $w_2$ are not, we have that $w_1$ and $v$ belong to different classes. By the same argument, if $u$ is also greater than $w_1$, they cannot belong to the same class either. Thus, $u < w_1$. However, by \hyperref[def:incompatibility-graph]{the definition of incompatibility graph}, there exists a vertex $x$ greater than $v$ in $H'$ that is adjacent to $u$. As $x$ is not adjacent to $w_1$, vertices $w_1$ and $u$ belong to different classes. As a result, $u$, $v$, and $w_1$ belong to three different classes, which contradicts the fact that $S$ consists of only two classes.
\end{proof}

Note that $e$ cannot be between $v$ and $a$, as otherwise $e$ and $a$ would form an inconsistent triple with $u$ or $v$, depending on the class to which $e$ belongs. In general, no vertex that is a non-neighbor of a vertex $y \in \{a,b,c\}$ can be between $v$ and $y$. 
This, combined with \cref{claim:non-neighbors-all-together}, means that either $\{d,e,f\} < v$ or $c < \{d,e\}$.

Suppose $\{d,e,f\} < v$. Rename $\{d,e,f\}$ as $\{w_1, w_2, w_3\}$ such that $w_1 < w_2 < w_3$. Vertex $w_1$ has a neighbor in $\{a,b,c\}$ which is not a neighbor of $w_2$ and $w_3$. Thus, $w_1$ does not belong to the same class as any of those two vertices. On the other hand, vertex $w_2$ also has a neighbor in $\{a,b,c\}$ not adjacent to $w_3$, and so it belongs to a different class from $w_2$. Hence, the vertices $\{w_1,w_2,w_3\}$ must all belong to different classes, while there are only two classes in $S$, arriving at a contradiction.

We therefore have that $c < \{d,e\}$. The same problem arises: vertex $a$ is adjacent to vertex $d$, which is not adjacent to $b$ and $c$; and vertex $b$ is adjacent to $e$, which is not adjacent to $c$. We thus reach a contradiction, and conclude that the statement of the lemma is satisfied.
\end{proof}

Finally, we arrive at a proof of \cref{prop:G-not-2-thin} by combining \cref{prop:no-consecutive-solution,prop:when-replacing-maintains-thinness,prop:G'-has-thinness-2}.

\section{Kernelization lower bounds}
\label{sec:lower-bounds}

In this section we present some parameterizations of \textsc{Thinness} {and of \textsc{Simultaneous Interval Number}} for which there are no polynomial kernels under some standard complexity assumptions. For this, we will use the cross-composition framework introduced by \citet{kernelization-cross-composition}. 
In their paper, they define the concept of \textsc{and}-cross-composition. First, they define the following relation. 

In this section, $\Sigma$ will denote a finite set of symbols, and $\Sigma^*$ will denote the set of finite strings of symbols in $\Sigma$. 

\begin{definition}[Polynomial equivalence relation~\citep{kernelization-cross-composition}]
\label{def:polynomial-equivalence-relation}
    An equivalence relation $\mathcal{R}$ on $\Sigma^*$ is called a \emph{polynomial equivalence relation} if the following two conditions hold:
    \begin{enumerate}
        \item There is an algorithm that given two strings $x, y \in \Sigma^*$ decides whether $x$ and $y$ belong to the same equivalence class in time polynomial in $\abs{x} + \abs{y}$.
        \item For any finite set $S \subseteq \Sigma^*$ the equivalence relation $\mathcal{R}$ partitions the elements of $S$ into a number of classes that is polynomially bounded by the size of the largest element of $S$.
    \end{enumerate}
\end{definition}

\begin{definition}[{\sc and}-cross-composition~\citep{kernelization-cross-composition}]
\label{def:and-cross-composition}
    Let $L \subseteq \Sigma^*$  be a language, let $\mathcal{R}$ be a polynomial equivalence relation on $\Sigma^*$, and let $\mathcal{Q} \subseteq \Sigma^* \times \N$ be a parameterized problem. An \emph{\textsc{and}-cross-composition of $L$ into $\mathcal{Q}$} (with respect to $\mathcal{R}$) is an algorithm that, given $t$ instances $x_1, x_2, \dots, x_t \in \Sigma^*$ of $L$ belonging to the same equivalence class of $\mathcal{R}$, takes time polynomial in $\sum_{i=1}^t \abs{x_i}$ and outputs an instance $(y, k) \in \Sigma^* \times \N$ such that $k$ is polynomially bounded in $\max_i \abs{x_i} + \log t$, and the instance $(y, k)$ belongs to $\mathcal{Q}$ if and only if all instances $x_i$ belong to $L$.
\end{definition}

The cross-composition framework is a generalization of earlier work by \citet{composition-kernelization}. A conjecture presented there was proved later by \citet{and-distillation-conjecture-proof}, which allows us to state the following theorem.

\begin{theorem}[\citep{kernelization-cross-composition}]
\label{theorem:and-cross-composition-kernelization-lower-bound}
    Let $L \subseteq \Sigma^*$ be an \NP-hard language and let $\mathcal{Q} \subseteq \Sigma^* \times \N$ be a parameterized problem. If $L$ \textsc{and}-cross-composes into $\mathcal{Q}$, then $\mathcal{Q}$ has no polynomial compression, assuming \Hip.
\end{theorem}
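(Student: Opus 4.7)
The plan is to derive the conclusion from Drucker's \textsc{and}-distillation theorem via a contradiction. Assume that $\mathcal{Q}$ admits a polynomial compression, that is, a polynomial-time algorithm $f$ that maps each instance $(y,k)$ of $\mathcal{Q}$ to an instance of some language $L' \subseteq \Sigma^*$ of size at most $p(k)$ for a polynomial $p$, preserving YES/NO answers. Combining $f$ with the hypothesized \textsc{and}-cross-composition of $L$ into $\mathcal{Q}$, I would construct an \textsc{and}-distillation of $L$ into $L'$, so that the \NP-hardness of $L$ together with Drucker's theorem yields \notHip, contradicting the hypothesis \Hip.

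The construction would proceed as follows. Given $t$ arbitrary instances $x_1,\dots,x_t$ of $L$ with $s \coloneqq \max_i |x_i|$, I first invoke condition~2 of \cref{def:polynomial-equivalence-relation} to partition them into at most $q(s)$ equivalence classes under $\mathcal{R}$ in polynomial time. For each class $C_j$ containing $t_j$ instances, the \textsc{and}-cross-composition produces a pair $(y_j,k_j)$ with $k_j \leq r(s+\log t_j)$ for some polynomial $r$, and then $f$ produces a string $z_j \in \Sigma^*$ of length at most $p(r(s+\log t_j))$ that is YES iff every $x_i \in C_j$ belongs to $L$. The conjunction of the $z_j$'s (encoded in a suitable \NP-hard target language) is YES iff all $x_i$'s belong to $L$.

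To turn this into a genuine distillation, I would fix $t = 2^s$ so that $\log t = s$. The total output size then becomes $q(s) \cdot p(r(2s))$, a polynomial in $s$ that is independent of $t$. This is precisely an \textsc{and}-distillation: $2^s$ inputs of size at most $s$ are collapsed into a single instance of size $\poly(s)$ whose answer is the conjunction of the original answers. Applying Drucker's theorem, which asserts that an \NP-hard language admitting such a distillation forces \notHip, delivers the desired contradiction.

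The hard part will be Drucker's \textsc{and}-distillation theorem itself, whose proof relies on delicate information-theoretic and pseudorandom arguments and is well beyond what can be reasonably sketched; the cross-composition framework essentially serves as a convenient abstraction packaging that deep result. A secondary subtlety is carefully composing the three polynomial bounds (cross-composition parameter, compression size, and equivalence class count) so that the final object size remains polynomial in $s$, which is exactly what dictates the critical choice $t = 2^s$ and the need for the class-count bound in \cref{def:polynomial-equivalence-relation}.
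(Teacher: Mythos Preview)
The paper does not prove this theorem: it is stated with the citation \citep{kernelization-cross-composition} and used as a black box from the literature, with the preceding sentence noting that it rests on Drucker's result (\citet{and-distillation-conjecture-proof}). There is therefore no ``paper's own proof'' to compare your attempt against.

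That said, your sketch is a faithful high-level outline of how Bodlaender, Jansen, and Kratsch establish the theorem in the cited paper: partition the inputs by the polynomial equivalence relation, cross-compose each class, apply the assumed polynomial compression to each resulting instance, and observe that the total output length is polynomial in $s$ once one balances $t$ against $s$ (your choice $t=2^s$), yielding an \textsc{and}-distillation of the \NP-hard source language and invoking Drucker. One small point of care: after compressing you obtain $q(s)$ strings $z_j$ over possibly different target languages, not a single instance; the clean way to finish is to note that Drucker's theorem only requires the distillation output to have total bitlength $\poly(s)$ into \emph{some} fixed language, so concatenating the $z_j$'s (with separators) into a single instance of the language ``all coordinates are YES for $L'$'' suffices. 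Your acknowledgment that Drucker's theorem is the deep ingredient, and that the cross-composition framework is a packaging of it, is exactly right.
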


The main objective of this section is to prove the following theorem.
\begin{theorem}
\label{theorem:thinness-kernelization-lower-bound}
    Let $\p$ be a graph parameter such that for every graph $G$ we have $\p(G) \leq \max\{\abs{V(G)},\allowbreak \abs{E(G)}\}$, and such that if $H$ is the disjoint union of two graphs $G_1$ and $G_2$, we have $\p(H) \leq \max\{\p(G_1),\allowbreak \p(G_2)\}$. Then {\sc Thinness} {and {\sc Simultaneous Interval Number}} parameterized by $\p$ {have} no polynomial kernel assuming \Hip.
\end{theorem}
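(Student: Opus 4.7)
The plan is to invoke the \textsc{and}-cross-composition framework of \cref{theorem:and-cross-composition-kernelization-lower-bound}. Since both \textsc{Thinness} and \textsc{Simultaneous Interval Number} are known to be \NP-hard on general graphs (via \citep{thinness-np-complete,Milanic-sim}), it is enough to cross-compose each problem into its own parameterized version. Define the polynomial equivalence relation $\mathcal{R}$ on encoded instances $(G, k)$ so that two instances are $\mathcal{R}$-equivalent whenever they share the same value of $k$ and the same number of vertices; this is trivially a polynomial equivalence relation in the sense of \cref{def:polynomial-equivalence-relation}, since there are only polynomially many relevant $(n, k)$ pairs among any fixed finite set of instances.

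Given $t$ $\mathcal{R}$-equivalent instances $(G_1, k), \ldots, (G_t, k)$, the composition outputs the single instance $(H, k)$, where $H \coloneqq G_1 \sqcup \cdots \sqcup G_t$ is the disjoint union. The correctness of this construction as an \textsc{and}-composition rests on the identity that the parameter under consideration on a disjoint union equals the maximum over the components. For thinness this is exactly \cref{prop:thinness-union}, so $\thin(H) = \max_i \thin(G_i)$. For the simultaneous interval number the analogous identity $\si(H) = \max_i \si(G_i)$ must be observed separately: the lower bound $\si(H) \geq \si(G_i)$ follows because taking the representation of $H$ restricted to $V(G_i)$ yields a simultaneous interval representation of $G_i$; the upper bound follows by placing the representations of the $G_i$'s on pairwise disjoint regions of the real line and reusing the same label set $\{1, \ldots, \max_i \si(G_i)\}$, as disjointness of the intervals already prevents any spurious adjacency across components. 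In either case, $(H, k)$ is a yes-instance if and only if every $(G_i, k)$ is, which is precisely the \textsc{and} property required by \cref{def:and-cross-composition}.

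It remains to bound the parameter of the output. Using the two hypotheses on $\p$, we obtain
\[
\p(H) \ \leq\ \max_i \p(G_i) \ \leq\ \max_i \max\{|V(G_i)|, |E(G_i)|\} \ \leq\ \max_i |x_i|,
\]
so $\p(H)$ is polynomially (in fact linearly) bounded in the size of the largest input instance, satisfying the parameter requirement of \cref{def:and-cross-composition}. An application of \cref{theorem:and-cross-composition-kernelization-lower-bound} then rules out a polynomial compression, and therefore a polynomial kernel, for either problem parameterized by $\p$, under the assumption $\Hip$. The only step that is not essentially immediate is the disjoint-union identity for $\si$; verifying it carefully (in particular the direction $\si(H) \leq \max_i \si(G_i)$) is the single technical point of the argument, but the reasoning above reduces it to a straightforward geometric placement.
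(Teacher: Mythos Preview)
Your proposal is correct and follows essentially the same approach as the paper: both arguments \textsc{and}-cross-compose the unparameterized problem into its $\p$-parameterized version by taking the disjoint union of equivalent instances and invoking the fact that $\thin$ and $\si$ behave as a maximum over components. The only cosmetic differences are that your equivalence relation also fixes the vertex count (harmless) and that you spell out the disjoint-union identity for $\si$, which the paper simply asserts; one small point to make explicit is that the output parameter value should be taken as the computable upper bound $\max_i |x_i|$ rather than $\p(H)$ itself, since $\p$ need not be polynomial-time computable.
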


Examples of graph parameters that meet these criteria are treewidth, pathwidth, bandwidth, clique-width, mim-width, linear mim-width, modular-width, and even thinness {and simultaneous interval number}, as \cref{prop:thinness-union} shows.

Note that \cref{theorem:thinness-kernelization-lower-bound} does not exclude the possibility of a kernel, just of a \emph{polynomial} kernel. Whether \textsc{Thinness} admits any kernel parameterized by these parameters is still open.

We now prove \cref{theorem:thinness-kernelization-lower-bound}. The proof is quite simple, but we include it here for completeness.

\begin{proof}[of \cref{theorem:thinness-kernelization-lower-bound}]
    We define \textsc{and}-cross-compositions of \textsc{Thinness} into \textsc{Thinness($\p$)} {and of \textsc{Simultaneous Interval Number} into \textsc{Simultaneous Interval Number($\p$)}}, {where $\Pi(\p)$ denotes the problem $\Pi$ parameterized by $\p$}. As {both} \textsc{Thinness} {and \textsc{Simultaneous Interval Number} are} \textsf{NP}-hard, we can use \cref{theorem:and-cross-composition-kernelization-lower-bound} to prove that \textsc{Thinness($\p$)} {and \textsc{Simultaneous Interval Number($\p$)} have} no polynomial kernel, unless $\textsf{NP} \subseteq \textsf{coNP} / \textsf{poly}$.

    First, we define the relation $\mathcal{R}$ on instances of \textsc{Thinness} {(resp. \textsc{Simultaneous Interval Number})} such that two instances $(G_1, k_1)$ and $(G_2, k_2)$ are related if and only if $k_1 = k_2$. This relation fits in \cref{def:polynomial-equivalence-relation}, as checking whether $k_1 = k_2$ takes time polynomial in $\abs{(G_1, k_1)} + \abs{(G_2, k_2)}$, and every subset $S$ of instances of \textsc{Thinness} is partitioned into a number of classes not greater than the maximum number of vertices in the graph of an instance (as $\thin(G)\leq |V(G)|$, we can consider only the instances with $k \leq \abs{V(G)}$). {Respectively, every subset $S$ of instances of \textsc{Simultaneous Interval Number} is partitioned into a number of classes not greater than the maximum number of edges in the graph of an instance (as $\si(G)\leq |E(G)|$, we can consider only the instances with $k \leq \abs{E(G)}$).}
    
    Let $(G_1, k), \dots, (G_t, k)$ be $t$ instances of \textsc{Thinness} {(resp. \textsc{Simultaneous Interval Number})} in the same equivalence class of $\mathcal{R}$. 
    
    We construct an instance $((G', k), p)$ of $\textsc{Thinness}(\p)$ {(resp. \textsc{Simultaneous Interval Number}$(\p)$)} where $G' \coloneqq \bigcup_{i=1}^t G_i$ and $p \coloneqq \max_i \abs{V(G_i)}$ {(resp. $p \coloneqq \max_i \abs{E(G_i)}$)}. This can be done in time polynomial in $\sum_{i=1}^t \abs{(G_i, k)}$, and $p$ is polynomially bounded by $\max_i \abs{(G_i, k)} + \log t$. Also, $p$ is greater than or equal to $\p(G')$, as we required $\p(G')$ to be not greater than the maximum of $\p(G_i)$ for each $G_i$. Finally, by \cref{prop:thinness-union}, graph $G'$ is $k$-thin if and only if all graphs $G_i$ are $k$-thin, {and graph $G'$ admits a $d$-simultaneous interval representation if and only if all graphs $G_i$ admit one,}
    so $((G', k), p)$ belongs to $\textsc{Thinness}(\p)$ {(resp. $\textsc{Simultaneous Interval Number}(\p)$)} if and only if all instances $(G_1, k), \dots, (G_t, k)$ belong to \textsc{Thinness} {(resp. \textsc{Simultaneous Interval Number})}.

    We described \textsc{and}-cross-compositions of \textsc{Thinness} into \textsc{Thinness($\p$)} {and of \textsc{Simultaneous Interval Number} into \textsc{Simultaneous Interval Number($\p$)}}, and thus \textsc{Thinness($\p$)} {and \textsc{Simultaneous Interval Number($\p$)} have} no polynomial kernel unless \notHip.
\end{proof}
\end{appendices}

\end{document}